\documentclass[12pt]{article}

\usepackage{amsmath,amsthm,amssymb,hyperref,MnSymbol}
\usepackage{a4wide,hyperref}
\hypersetup{colorlinks=true, linkcolor=red,citecolor=blue,urlcolor=blue}

\usepackage{mathrsfs}
\usepackage{authblk}

\newcommand{\nclprob}[3]{\textup{
\begin{center}
\begin{tabular}{|r p{13cm} |}
\hline
\multicolumn{2}{|l|}{\textsc{#1}}\\
\textit{Input:}& #2.\\
\textit{Problem:}& #3\\
\hline
\end{tabular}
\end{center}
}}

\newcommand{\nclpprob}[4]{\textup{
\begin{center}
\begin{tabular}{|r p{13cm} |}
\hline
\multicolumn{2}{|l|}{\textsc{#1}}\\
\textit{Input:}& #2.\\
\textit{YES:}& #3\\
\textit{NO:}&#4\\
\hline
\end{tabular}
\end{center}
}}

\theoremstyle{theorem}
\newtheorem{theorem}{Theorem}
\newtheorem{lemma}[theorem]{Lemma}
\newtheorem{proposition}[theorem]{Proposition}
\newtheorem{corollary}[theorem]{Corollary}

\theoremstyle{definition}
\newtheorem{definition}[theorem]{Definition}
\newtheorem{example}[theorem]{Example}
\newtheorem{remark}[theorem]{Remark}

\newcommand{\F}{\mathbb{F}}

\newcommand{\N}{\mathbb{N}}

\newcommand{\R}{\mathbb{R}}

\newcommand{\ceil}[1]{\left\lceil #1 \right\rceil}

\renewcommand{\b}[1]{\boldsymbol{#1}}

\renewcommand{\le}{\leqslant}
\renewcommand{\leq}{\leqslant}
\renewcommand{\ge}{\geqslant}
\renewcommand{\geq}{\geqslant}

\newcommand{\Hw}{\mathit{Hw}}
\newcommand{\E}{\mathbb{E}}
\newcommand{\Var}{\mathrm{Var}}

\renewcommand{\P}{\mathsf{P}}
\newcommand{\NP}{\mathsf{NP}}
\newcommand{\NE}{\mathsf{NE}}
\newcommand{\Ppoly}{\mathsf{P/poly}}
\newcommand{\FML}{\mathsf{FML}}

\newcommand{\NEXP}{\mathsf{NEXP}}
\newcommand{\NC}{\mathsf{NC}_1}

\newcommand{\NTIME}{\mathsf{NTIME}}

\newcommand{\SIZE}{\mathsf{SIZE}}
\newcommand{\PH}{\mathsf{PH}}

\newcommand{\lex}{\mathit{lex}}

\newcommand{\res}{{\upharpoonleft}}

\newcommand{\AND}{\mathsf{AND}}
\newcommand{\XOR}{\mathsf{XOR}}

\newcommand{\MCSP}{\mathsf{MCSP}}

\newcommand{\PFML}{\mathsf{PFML}}

\begin{document}

\title{\bf Simple  general magnification\\
of circuit lower bounds}

\author{Albert Atserias\thanks{Universitat Polit\`ecnica de Catalunya, and  Centre de Recerca Matem\`atica}\qquad
 Moritz M\"uller\thanks{Universit\"at Passau}}


\maketitle
\begin{abstract} We introduce a technically and conceptually simple 
approach to magnification of circuit and formula lower bounds.
Central to the method are so-called {\em distinguishers}, sparse
matrices that retain some of the key properties of error-correcting
codes.  As applications, we generalize and strengthen known {\em
  general} (not problem specific) magnification results and in
particular achieve magnification thresholds below known lower
bounds. For example, we show that fixed-polynomial formula-size lower
bounds for~$\NP$ are implied by slightly superlinear formula-size
lower bounds for approximating any sufficiently sparse problem
in~$\NP$. We also show that the thresholds achieved are sharp.
Additionally, our approach yields a {\em uniform} magnification result
for the Minimum Circuit Size Problem (MCSP). This seems to sidestep
the localization barrier.
\end{abstract}

\section{Introduction}

We use standard terminology and notation: by a {\em circuit} (resp.~{\em formula}) we mean one with inner gates labeled $\neg,\wedge,\vee$ of fan-in at most 2 (and, resp., of fan-out at most 1); its {\em size} is the number of gates. A {\em promise problem} is given by disjoint sets YES and NO of binary strings. It is contained in $\SIZE[s]$ (resp.~$\FML[s]$), where $s:\N\to\N$, if for all sufficiently large~$n\in\N$ there is a  circuit (resp.~formula) $C_n$ of size $\le s(n)$ 
that accepts all $n$-bit strings in YES and rejects all $n$-bit strings in NO. The class $\P\textit{-uniform-}\SIZE[s]$ additionally requires the map $n\mapsto C_n$ to be computable in time polynomial in $n$.

Note
$\Ppoly=\SIZE[n^{O(1)}]$ and $\NC=\FML[n^{O(1)}]$. 

\subsection{Why are circuit lower bounds difficult?}

The non-uniform version of $\P\neq\NP$ asks  for a problem in $\NP$ that is not decided by circuits of polynomial size, i.e., to show $\NP\not\subseteq\Ppoly$. This seems to be out of reach of current techniques.
Today not even
fixed-polynomial size lower bounds are known, not even for $\NE$ and formulas, i.e., whether $\NE\not\subseteq\FML[n^c]$ for all $c\in\N$, equivalently $\NEXP\not\subseteq\NC$.


Why are circuit lower bounds so difficult? The natural proof barrier \cite{rr} points out some limitation of current techniques -- the notion of a natural proof is, however, informal. Razborov's program \cite{raz1} asks for a formally precise barrier in form of a natural theory that proves known weak lower bounds (cf.~\cite{raz1,mp}) and does not prove strong conjectured ones. The unprovability is, ironically, again dubbed to be out of reach of current techniques -- of proof complexity (cf.~\cite{raz}, \cite[Chapter 27-30]{kra}).

A more direct approach to somehow explain the apparent hardness of circuit lower bounds is to try to establish that the {\em Minimum Circuit Size Problem} is computationally hard in some sense. Namely, for $\sigma:\N\to\N$,  
 \nclprob{$\MCSP[\sigma]$}{$x\in \{0,1\}^{n}$ with $n=2^\ell$ for some $n,\ell\in\N$}{is $x$ computable by a circuit of size $\le \sigma(\ell)$?}
Here, that $x$ is computed by a circuit $C$ means that $C$ has  $\ell$ inputs and outputs the $i$-th bit of $x$ when given the lexicographically $i$-th $\ell$-bit string as input.

Research into the complexity of  $\MCSP[\sigma]$ dates back to the 1950s~\cite{tra} and intensified in recent years, 
starting with \cite{kabanets}. By the idea that its computational hardness reflects the difficulty of proving circuit lower bounds, $\MCSP[\sigma]$ is expected to become harder for larger~$\sigma$ but this is not known.
Concerning its envisioned hardness we refer to \cite{allender} for a survey  and mention here only an obstacle: $\MCSP[2^{o(\ell)}]$ is $2^{n^{o(1)}}$-sparse and 
Buhrmann and Hitchcock~\cite{bh} showed that 
 such problems are not $\NP$-hard unless $\mathsf{PH}$ collapses.
 Recall, for $q:\N\to\N$, a problem $Q$ is {\em $q$-sparse} if $|Q\cap\{0,1\}^n|\le q(n)$ for all $n\in\N$.

\subsection{Magnification}

Thus, not only seem strong circuit lower bounds to be out of reach of current techniques, there is also a lack of understanding of what this very statement means. This is strikingly emphasized  by what
Oliveira and Santhanam \cite{os} called {\em magnification} results: 
a strong lower bound  that appear(ed)  out of reach of current techniques is implied by a lower bound that is ``almost known'' -- meaning that there are known  lower bounds that are only marginally weaker or concern slight variations of the problem or the computational model. 

Such results are interesting, first, because they unveil inconsistencies in our intuitions for what is or not within reach of current techniques.
Second, they provide an approach to strong lower bounds that sidesteps the natural proof  barrier --  see  \cite{beyond} for an insightful  discussion. Both perspectives focus attention on the {\em magnification threshold} -- the ``almost known'' lower bound.

The initial magnification result from \cite{os} concerns a promise problem relaxing  $\MCSP[\sigma]$, namely its approximation version: for  $\epsilon:\N\to\R_{>0}$
\nclpprob{$\epsilon$\text{-}$\MCSP[\sigma]$}{$x\in \{0,1\}^{n}$ with $n=2^\ell$ for some $n,\ell\in\N$}{$x$ is computable  by a circuit of size $\le \sigma(\ell)$.}{$d_H(x,y)\ge \epsilon(n)\cdot n$ for all $y\in\{0,1\}^n$ computable by a circuit of size $\le \sigma(\ell)$.}
Here, $d_H(x,y)$ is the Hamming distance of $x,y$. In other words, NO instances are truth tables of functions that are not $(1-\epsilon(n))$-approximable by circuits of size  $\le \sigma(\ell)$. 

Observe, the smaller~$\epsilon$, the more NO instances, the harder the problem, and $\epsilon\text{-}\MCSP[\sigma]$ is the same as $\MCSP[\sigma]$ if $\epsilon(n)\le 1/n$ for all $n>0$.
Under cryptographic assumptions, it is known that  $1/3$-$\MCSP[2^{\sqrt{\ell}}]\notin\Ppoly$ (cf.~\cite{os}).

\begin{theorem}[\cite{os}]\label{thm:os}  $\NP\not\subseteq\NC$ if
there exists a real $\delta > 0$ such that
$$
n^{-o(1)}\text{-}\MCSP[ 2^{\sqrt{\ell}} ] \not\in \FML[n^{1+\delta}].
$$
\end{theorem}

This magnification threshold  can  be considered to be ``almost known'':
\begin{theorem}[\cite{hs}] \label{thm:hs} $\MCSP[2^{\sqrt{\ell}}]\notin \FML[n^{2-\delta}]$ for all reals $\delta>0$.
\end{theorem}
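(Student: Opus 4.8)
The statement is a near-quadratic lower bound on De~Morgan formula \emph{size}, so the route I would take runs through the Karchmer--Wigderson theorem: formula size equals the number of leaves of a protocol tree for the KW search relation $R$ of $\MCSP[2^{\sqrt\ell}]$ -- Alice holds a truth table $x$ with circuit complexity $>2^{\sqrt\ell}$, Bob holds $y$ with circuit complexity $\le 2^{\sqrt\ell}$, and they must agree on a coordinate $i\le n$ with $x_i\ne y_i$ -- equivalently the number of rectangles in a protocol partition of the matrix of $R$. Thus it suffices to show that every such partition has at least $n^{2-\delta}$ rectangles (equivalently $\mathrm{CC}^{\mathrm{KW}}(R)\ge(2-\delta)\log n$). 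I would stress at the outset why the classical combinatorial lower-bound methods cannot be used off the shelf: the YES-side of $\MCSP[2^{\sqrt\ell}]$ has only $2^{O(2^{\sqrt\ell}\sqrt\ell)}=2^{n^{o(1)}}$ elements, so on every block of a coordinate partition there are at most $2^{n^{o(1)}}$ distinct subfunctions, which caps Ne\v{c}iporuk's method at $n^{1+o(1)}$; for the same reason the plain Khrapchenko inequality, and projection reductions from dense Ne\v{c}iporuk-hard functions such as element distinctness or \textsc{Parity}, also stall at $n^{1+o(1)}$. One needs an argument that exploits the metric geometry of circuit complexity rather than the density of the target.

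For the lower bound I would analyse a hard distribution $\mu$ on inputs of $R$: let Bob's truth table be that of a uniformly random circuit of size $2^{\sqrt\ell}$ on $\ell$ inputs (automatically a YES-instance), and let Alice's be that truth table XORed with the indicator of a uniformly random set $S\subseteq\{0,1\}^\ell$ with $|S|=\Theta(2^{\sqrt\ell}\ell)$. A counting argument comparing $2^{|S|}$ with the number of circuits of size $2^{\sqrt\ell}$ shows that with probability $1-o(1)$ Alice's string has circuit complexity above $2^{\sqrt\ell}$, so the pair lies in the support of $R$ and its set of valid answers is exactly $S$. The structural fact I would then isolate is a \emph{rigidity} statement for circuit complexity: for a $1-o(1)$ fraction of these pairs, \emph{no} subset of $\{0,1\}^\ell$ of size $\le 2^{\sqrt\ell}$ meets more than an $o(1)$ fraction of $S$, so from the protocol's vantage point $S$ behaves like a structureless uniformly random $\Theta(2^{\sqrt\ell}\ell)$-subset of the $n$ coordinates and retains almost all of its min-entropy even after conditioning on any partial transcript of length $o(\log n)$.

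With this in hand I would run a Khrapchenko-style computation on the \emph{protocol partition}, not on a mere monochromatic cover (a covering bound alone again gives only $n^{1+o(1)}$): a leaf rectangle $A_\alpha\times B_\alpha$ carries a fixed answer $i_\alpha$, so $A_\alpha$ is constant $=c$ and $B_\alpha$ is constant $=1-c$ on coordinate $i_\alpha$, while every $\mu$-typical pair routed to that leaf must have $i_\alpha\in S$; feeding the rigidity statement into the second-moment (Cauchy--Schwarz) step of Khrapchenko's argument then forces the number of leaves to be $\ge n^{2-o(1)}$, whence $\mathrm{CC}^{\mathrm{KW}}(R)\ge(2-\delta)\log n$ and $\MCSP[2^{\sqrt\ell}]\notin\FML[n^{2-\delta}]$ for every $\delta>0$. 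The crux -- where essentially all the difficulty lies -- is the rigidity estimate and its interface with the partition argument: one must show quantitatively that a random $\Theta(2^{\sqrt\ell}\ell)$-size perturbation of a random size-$2^{\sqrt\ell}$ circuit cannot be ``anticipated'' by a balanced combinatorial rectangle of $\mu$, i.e.\ that no cheap correction accounts for a noticeable share of $S$. This is a Khrapchenko-type inequality that must be made to survive the passage through the extreme sparsity of $\MCSP[2^{\sqrt\ell}]$, and pushing it through carefully is the main obstacle.
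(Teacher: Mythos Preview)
Your approach via Karchmer--Wigderson and a Khrapchenko-type argument is not the route taken in \cite{hs} (whose method the present paper follows and generalises in Section~\ref{sec:lower}). The actual proof uses \emph{pseudorandom restrictions and shrinkage}: a $(p,k)$-regular random restriction shrinks a De~Morgan formula by a factor of roughly $p^2$ in expectation (Lemma~\ref{lem:hs1}, from \cite{imz,tal}), and a composition of $t$ such restrictions shrinks a size-$s$ formula to size $O((cp^2)^t s)$ (Lemma~\ref{lem:hs2}). Choosing parameters so that this is $O(1)$ while the total seed length of the restriction stays $n^{o(1)}$, one argues that (i) the restriction, being determined by a short seed, is itself computable by a circuit of size $\ll 2^{\sqrt\ell}$ and can therefore be completed to a YES instance; (ii) a uniformly random completion of the unrestricted coordinates is a NO instance with high probability; and (iii) the now constant-size restricted formula cannot distinguish these two completions. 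The exponent $2$ in $n^{2-\delta}$ comes directly from H\aa stad's shrinkage exponent, not from any Cauchy--Schwarz step.

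Your proposal, by contrast, has a genuine gap at exactly the point you yourself flag as ``the main obstacle.'' Khrapchenko's inequality counts pairs at Hamming distance \emph{one} between the YES and NO sides; your hard distribution places Alice and Bob at distance $|S|=\Theta(2^{\sqrt\ell}\ell)=n^{o(1)}$, so the classical Khrapchenko numerator is empty and you are forced to invent a variant. The ``rigidity'' statement you isolate---that no small set anticipates much of $S$---is plausible, but the step where it is ``fed into the second-moment (Cauchy--Schwarz) step'' to force $n^{2-o(1)}$ leaves is not an existing technique and is not substantiated: a leaf rectangle fixes a single coordinate $i_\alpha$, and the constraint $i_\alpha\in S$ together with the rectangle structure does not obviously produce a \emph{second} factor of $n$. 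No Khrapchenko-type or adversary-type bound is known to give near-quadratic formula lower bounds for a function as sparse as $\MCSP[2^{\sqrt\ell}]$; the shrinkage route is, to date, the only one that does.
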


Following \cite{os} a wide variety of magnification results have been discovered, with magnification thresholds concerning almost formulas \cite{beyond}, probabilistic formulas \cite{sharp},  zero-error heuristics  \cite{os,sparse}, sublinear randomized algorithms \cite{os}, streaming algorithms \cite{mmw}, refuters \cite{os,refuters} or obstruction sets \cite{sharp}.  A few magnification results have also been found in proof complexity \cite{mp,abm}. 
 We do not attempt a survey and  refer to \cite{beyond} for a systematic overview.

A central role is 
played by $\MCSP[\sigma]$ and its variants. Besides the mentioned  approximation relaxation, prominently its gap relaxation
(e.g.~\cite{os,ops,beyond}), its search version \cite{mmw}  and their siblings for time-bounded Kolmogoroff complexity instead circuit complexity. Many proofs of magnification results build on deep results in complexity theory, and 
many are somewhat ad hoc, exploiting special properties of the $\MCSP[\sigma]$ variants.

\subsection{General magnification}

As for a  more principled approach, a significant step forward was taken by
Chen et al.~\cite{sparse}. Only the sparsity matters: ``analogous weak circuit lower bounds [\ldots] for {\bf any} equally-sparse $\NP$ language also imply major separations in circuit complexity.''

We refer to such results as {\em general} magnification. An example predating \cite{sparse} is

\begin{theorem}[\cite{cmmw}] \label{thm:cmmw}
$\NEXP\not\subseteq\Ppoly$ if there exists an $2^{n^{o(1)}}$-sparse problem $Q\in\NP$ such that
$$
Q\not\in \SIZE[n^{1+o(1)}].
$$
\end{theorem}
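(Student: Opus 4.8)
The plan is to prove the contrapositive: assuming $\NEXP\subseteq\Ppoly$, I will show that every $2^{n^{o(1)}}$-sparse problem $Q\in\NP$ lies in $\SIZE[n^{1+o(1)}]$. Fix such a $Q$, with an $\NP$-verifier of runtime $n^{O(1)}$ and witnesses of length $n^{O(1)}$, and with a sparsity bound $q$ satisfying $\log q(n)=n^{o(1)}$. Write $Y_n:=Q\cap\{0,1\}^n$, list its elements in increasing order as $x^{(1)}<\dots<x^{(c_n)}$ where $c_n:=|Y_n|\le q(n)$, and note $\log c_n\le\log q(n)=n^{o(1)}$. The idea is that, precisely because $Q$ is this sparse, the information needed to reconstruct $Y_n$ can be posed as an $\NE$ problem on inputs of length only $n^{o(1)}$; the hypothesis then supplies circuits of size $n^{o(1)}$ for it, which glue into a near-linear circuit for $Q$.

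Concretely, I would work with the language $B$ of all tuples $(\langle n\rangle,\langle c\rangle,\langle i\rangle,\langle j\rangle)$ — under a self-delimiting encoding, with $j\le n$ and $i\le c$ — for which there exists a set $S$ of $c$ distinct $n$-bit strings, every one of them in $Q$, such that the $j$-th bit of the $i$-th smallest element of $S$ equals $1$. I claim $B\in\NE$: on an input of length $N$ one reads off $n,c\le 2^{N}$, nondeterministically guesses $S$ (at most $c\cdot n\le 2^{2N}$ bits) together with an $\NP$-witness for each of its $c$ members ($c\cdot n^{O(1)}=2^{O(N)}$ bits, using $n\le 2^N$ and $Q\in\NP$), verifies all these witnesses and the distinctness of $S$, sorts $S$, and reads off the relevant bit — all within time $2^{O(N)}$. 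By hypothesis $B\in\Ppoly$, so $B\in\SIZE[N^{k}]$ for some fixed $k$.

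Now fix $n$ and let $N_n:=|\langle n\rangle|+|\langle c_n\rangle|+\lceil\log(c_n+1)\rceil+\lceil\log(n+1)\rceil$; by the encoding this is the common length of all tuples $(\langle n\rangle,\langle c_n\rangle,\langle i\rangle,\langle j\rangle)$ with $i\in[c_n]$ and $j\in[n]$, and $N_n=O(\log n+\log c_n)=n^{o(1)}$. Hard-wiring the bits of $\langle n\rangle$ and $\langle c_n\rangle$ into the size-$N_n^{k}$ circuit for $B$ at length $N_n$ produces a circuit of size $N_n^{k}=n^{o(1)}$ that, on input $(\langle i\rangle,\langle j\rangle)$, outputs the $j$-th bit of $x^{(i)}$; correctness here uses that $|Y_n|=c_n$, so the only $c_n$-element set of distinct $n$-bit strings all lying in $Q$ is $Y_n$ itself, making $B$'s answer unambiguous. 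Running $n$ copies of this circuit, one per value of $j$, assembles an ``enumeration circuit'' $E_n\colon\{0,1\}^{\lceil\log(c_n+1)\rceil}\to\{0,1\}^{n}$ with $E_n(i)=x^{(i)}$, of size $n\cdot n^{o(1)}=n^{1+o(1)}$. Finally, the circuit for $Q$ at length $n$ takes $x\in\{0,1\}^n$ and binary-searches for $x$ in the sorted list $E_n(1),\dots,E_n(c_n)$: at each of the $\lceil\log c_n\rceil$ levels it computes the current midpoint $m$, evaluates $E_n(m)$ (cost $n^{1+o(1)}$), and compares it with $x$ (cost $O(n)$); it accepts iff $x$ is found. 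Its size is $\log c_n\cdot n^{1+o(1)}=n^{1+o(1)}$ (the cases $c_n\in\{0,1\}$ being immediate), so $Q\in\SIZE[n^{1+o(1)}]$, as required.

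The point that needs care is the interface between the parameters. The ``hard'' sub-task $B$ must be posed with inputs of length $n^{o(1)}$, which is why $Y_n$ is enumerated by index rather than membership of the $n$-bit string $x$ being answered directly — the latter would inflate the input length to $\Theta(n)$ and yield only $\mathrm{poly}(n)$-size circuits. Ensuring $N_n=n^{o(1)}$, and that the binary search has only $\log c_n=n^{o(1)}$ levels, is exactly where the sparsity hypothesis $\log q(n)=n^{o(1)}$ is used; that $B$ itself is in $\NE$ is then routine. Two further ingredients are essential: the ``census trick'' of hard-wiring $c_n$ so that the existentially guessed set is forced to equal $Y_n$ — without it $B$ would be a nondeterministic multivalued object rather than a language — and the use of sortedness to replace a linear scan over the $2^{n^{o(1)}}$ elements of $Y_n$ by a depth-$n^{o(1)}$ binary search, which is what keeps the final circuit near-linear rather than of size $2^{n^{o(1)}}$.
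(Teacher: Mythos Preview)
Your proof is correct and follows essentially the same approach as the paper's: define an $\NE$ language whose instances encode, via the census trick, single bits of the sorted list of $Q\cap\{0,1\}^n$; use the hypothesis to get size-$n^{o(1)}$ circuits for it; then binary-search the list. The only cosmetic differences are that the paper indexes directly into the concatenation $x^{(1)}\cdots x^{(c_n)}$ rather than separating element index and bit index, and pads the kernel input with a unary block $1^{\lceil n^\gamma\rceil}$ (needed there because they prove a sharper version with $Q\in\NTIME[2^{n^\gamma}]$, whereas for $Q\in\NP$ your $\langle n\rangle$ alone already makes $n\le 2^{N}$).
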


In \cite{sparse} the conclusion is strengthened to
$\NP\not\subseteq\SIZE[n^c]$ for all $c\in\N$ and a version for
formulas is proved.  A version for probabilistic formulas appears in
\cite{sharp} -- we use the following ad hoc notation: a promise
problem belongs to $\PFML[s]$, where $s:\N\to\N$, if for all large
enough $n\in\N$ there exists a probabilistic formula $\b{F}$ of size
$\le s(n)$ (i.e., a random variable whose values are formulas of size
$\le s(n)$) such that $\Pr[\b{F}(x)=1]=1$ for all YES instances~$x$ of
length $n$ and $\Pr[\b{F}(x)=1]\le 1/4$ for all NO instances $x$ of
length $n$.

\begin{theorem}[\cite{sparse,sharp}] \label{thm:sparse}
$\NP\not\subseteq\FML[n^c]$ for all $c\in\N$
if  there exists an $2^{n^{o(1)}}$-sparse problem $Q\in\NP$ such that (a) or (b):
\begin{enumerate}\itemsep=0pt
\item[(a)] $Q\not\in \FML[n^{3+o(1)}].$
\item[(b)] $Q\not\in \PFML[n^{2+o(1)}].$
\end{enumerate}
\end{theorem}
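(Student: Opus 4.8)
The plan is to establish the contrapositive: assume $\NP\subseteq\FML[n^c]$ for some constant $c\in\N$, and deduce that every $2^{n^{o(1)}}$-sparse $Q\in\NP$ lies in $\FML[n^{3+o(1)}]$ and in $\PFML[n^{2+o(1)}]$. Fix a small constant $\gamma>0$; the construction will produce deterministic (resp.\ probabilistic) formulas for $Q$ of size $n^{3+\gamma c+o(1)}$ (resp.\ $n^{2+\gamma c+o(1)}$), and since $\gamma$ is arbitrary this gives the claim. The engine is a \emph{distinguisher}: fix once and for all an explicit family of $\F_2$-linear codes $C_n\colon\{0,1\}^n\to\{0,1\}^{O(n)}$ of constant relative distance $\delta$. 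For each $n$, writing $Q_n:=Q\cap\{0,1\}^n$, the bound $\log|Q_n|=n^{o(1)}$ together with a union bound over the at most $|Q_n|^2$ pairs of distinct elements and the distance of $C_n$ shows that there is a set $T=T_n\subseteq[O(n)]$ with $|T|=\lceil n^{\gamma}\rceil$ such that $\phi_n\colon x\mapsto C_n(x)\res T$ is injective on $Q_n$. Since $C_n$ is linear, every bit of $\phi_n(x)$, and every bit $C_n(x)_j$, is a parity of at most $n$ bits of $x$, hence computable by a De Morgan formula of size $n^{2+o(1)}$.

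Next I would bootstrap the hypothesis on a short auxiliary $\NP$ problem that ``decodes'' a fingerprint bit by bit. Write $Q=\{y:\exists w\,R(y,w)\}$ with $R$ polynomial time, and put
\[
 A:=\bigl\{\langle n,T',z,i,b\rangle : \exists y\in\{0,1\}^n\,\exists w\ \bigl(R(y,w)\wedge C_n(y)\res T'=z\wedge y_i=b\bigr)\bigr\},
\]
where $n$ is in binary, $T'\subseteq[O(n)]$ with $|T'|=\lceil n^{\gamma}\rceil$, $z\in\{0,1\}^{\lceil n^{\gamma}\rceil}$, $i\in[n]$, $b\in\{0,1\}$; its inputs have length $m=n^{\gamma+o(1)}$. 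The point is that $m$ is polynomially related to $n$, so the naive guess-and-check (guess $y,w$, run $R$, verify $C_n(y)\res T'=z$ and $y_i=b$, all in time $\mathrm{poly}(n)=\mathrm{poly}(m)$) witnesses $A\in\NP$, whence $A\in\FML[m^{c}]=\FML[n^{\gamma c+o(1)}]$. Hard-wiring $n$, $T':=T_n$ and $i$ into such a formula yields, for each $i\in[n]$, a formula $F_i(z,b)$ of size $n^{\gamma c+o(1)}$ accepting iff some $y\in Q_n$ has $\phi_n(y)=z$ and $y_i=b$. For part~(a) the formula for $Q_n$ is $\bigwedge_{i=1}^{n}F_i(\phi_n(x),x_i)$: if $x\in Q_n$ then $y=x$ witnesses every conjunct, and conversely, if all conjuncts hold, then by injectivity of $\phi_n$ on $Q_n$ all witnessing elements coincide with a single $y\in Q_n$ with $\phi_n(y)=\phi_n(x)$ and $y_i=x_i$ for every $i$, so $y=x\in Q_n$. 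Substituting the $n^{2+o(1)}$-size formulas for the bits of $\phi_n(x)$ into the $n^{\gamma c+o(1)}$ leaves of each $F_i$, every conjunct has size $n^{2+\gamma c+o(1)}$, so the total size is $n^{3+\gamma c+o(1)}$.

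For part~(b) I would use the code a second time, to replace the length-$n$ conjunction by a constant-size check. Let $B$ be defined like $A$ but with the conjunct ``$y_i=b$'' replaced by ``$C_n(y)_j=b$'', where now $j\in[O(n)]$; as before $B\in\NP$ with inputs of length $n^{\gamma+o(1)}$, so hard-wiring $n$, $T_n$ and $j$ gives formulas $G_j(z,b)$ of size $n^{\gamma c+o(1)}$ accepting iff some $y\in Q_n$ has $\phi_n(y)=z$ and $C_n(y)_j=b$. The probabilistic formula $\b{F}(x)$ samples $r:=\lceil\log_{1/(1-\delta)}4\rceil=O(1)$ coordinates $j_1,\dots,j_r\in[O(n)]$ independently and uniformly and outputs $\bigwedge_{\ell=1}^{r}G_{j_\ell}(\phi_n(x),C_n(x)_{j_\ell})$. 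If $x\in Q_n$ then $y=x$ witnesses every conjunct, so $\b{F}(x)=1$ with probability $1$. If $x\notin Q_n$ then either $\phi_n(x)$ has no $Q_n$-preimage, in which case every $G_j$ rejects, or it has a unique preimage $y\in Q_n$ with $y\ne x$, and then the $\ell$-th conjunct accepts iff $C_n(y)_{j_\ell}=C_n(x)_{j_\ell}$, which happens with probability at most $1-\delta$ since $C_n$ has relative distance $\ge\delta$; hence $\b{F}(x)=1$ with probability at most $(1-\delta)^r\le 1/4$. Substituting the $n^{2+o(1)}$-size formulas for the bits of $\phi_n(x)$ and for $C_n(x)_{j_\ell}$ into the $O(1)$ conjuncts gives total size $n^{2+\gamma c+o(1)}$.

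The step I expect to be the crux --- and what keeps the threshold at $n^{3}$ (resp.\ $n^{2}$) rather than near-linear as for $\MCSP$ --- is keeping the auxiliary problem inside $\NP$. One is tempted to make the fingerprint as short as $n^{o(1)}$ bits, so that $F_i$ and $G_j$ become $n^{o(1)}$-size; but then deciding ``some $y\in Q_n$ has this fingerprint and $y_i=b$'' requires running $Q$'s $\mathrm{poly}(n)$-time verifier on an $n^{o(1)}$-bit input, which is superpolynomial in the input length, so the problem is not in $\NP$ and the hypothesis does not apply to it. Forcing the fingerprint to be polynomially long, $n^{\gamma}$, fixes this at the cost of only $n^{\gamma c}$ after raising to the $c$-th power, and $\gamma c$ can be made as small as we please. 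Dually, it is precisely the distance of $C_n$ that lets as few as $n^{\gamma}\gg\log|Q_n|$ subsampled coordinates suffice for injectivity on $Q_n$ --- a subsample of the raw input coordinates would need roughly $|Q_n|^2$ of them, which is superpolynomial. The remaining ``$o(1)$'' accounting --- that $\log|Q_n|$, $\log n$ and the encoding overhead of $\langle n,T',z,i,b\rangle$ are all absorbed, and that explicitness of $C_n$ makes $A$ and $B$ uniform --- is then routine.
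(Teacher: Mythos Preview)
Your argument is correct, but it follows a genuinely different route from the paper's. The paper derives Theorem~\ref{thm:sparse} as the special case $\epsilon=1$ of Theorem~\ref{thm:intromain}, proved via the distinguisher machinery of Section~\ref{sec:mag}: the kernel $K(Q,D,r)$ of Definition~\ref{df:K} records only a \emph{random} fingerprint $\big((xD)_{u_1},\ldots,(xD)_{u_r}\big)$ of length $r=\Theta(n^\gamma)$; no injectivity is used, and a NO instance is handled by a union bound over all of $Q_n$ (Lemma~\ref{lem:kernel}), which forces $r\gg\log|Q_n|$. Derandomization for~(a) is then a conjunction of $\lceil n^{1-\gamma}\rceil$ independent copies of the probabilistic formula. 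Your construction instead fixes a \emph{deterministic} subsample $T_n$ of the code that is injective on $Q_n$, and enriches the kernel with a bit-query ($y_i=b$ or $C_n(y)_j=b$); injectivity collapses the union bound to a comparison with a single $y\in Q_n$, which is why in~(b) you get away with $O(1)$ random coordinates rather than $n^{o(1)}$ many, and why in~(a) a straight bit-by-bit identity test works.

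What each approach buys: yours is more elementary for this particular statement---just an explicit linear code and a probabilistic-method choice of $T_n$, with no distinguisher construction and no separate derandomization lemma. The paper's approach, on the other hand, is set up so that the same kernel and the same analysis go through for the approximation problem $n^{-\epsilon}\text{-}Q$ with arbitrary $0<\epsilon\le 1$: replacing the code by a weight-$O(n^\epsilon)$ distinguisher makes each fingerprint bit an $\XOR$ of only $O(n^\epsilon)$ inputs, which is exactly what drives the threshold down to $n^{1+2\epsilon+o(1)}$ and $n^{2\epsilon+o(1)}$ in Theorem~\ref{thm:intromain}. Your injectivity-based argument does not obviously survive this relaxation, since on the approximation promise $\phi_n$ need not separate a NO instance from its nearest YES instance.
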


The proofs of these results are based on deep results in complexity theory: Theorem~\ref{thm:cmmw}
 relies
 on the hardness versus pseudorandomness trade-off \cite{umans} and the Easy Witness Lemma~\cite{ewl}.       The technical core of the proof of Theorem~\ref{thm:sparse} is the construction \cite{sparse} of a highly efficient hash family based on good error correcting codes and expander-walk sampling.
 
The main strength of these results is their generality, as it enhances the prospects of magnification to eventually trigger breakthrough lower bounds.
Their main weakness  is that it is not clear whether their magnification thresholds should be considered ``almost known''. No explicit problem is known outside $\SIZE[5n]$ 
or $\FML[n^3]$, and no explicit $2^{n^{o(1)}}$-sparse problem is known outside $\FML[n^2]$.
For non-sparse problems, however,  
 even subcubic lower bounds are known for probabilistic formulas, in particular for $\MCSP[2^\ell/\ell^4]$ as shown in \cite[Theorem~47]
{beyond} (based on \cite{cklm}). 

For sparse variants, known lower bounds sit sharply below the threshold:

\begin{theorem}[\cite{sharp}] \label{thm:sharplower} $\MCSP[2^{\sqrt{\ell}}]\notin\PFML[n^{2-\delta}]$ for all reals $\delta>0$. 
\end{theorem}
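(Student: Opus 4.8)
The plan is to derive the bound \emph{not} by a direct lower-bound method but by transferring a quadratic formula-size lower bound from an explicit, non-sparse function to $\MCSP[2^{\sqrt\ell}]$ through a reduction cheap enough to survive the passage from formulas to probabilistic formulas. The first thing to observe is that the self-contained methods cannot work here: Nechiporuk's method caps at $\widetilde O(n)$ for $\MCSP[2^{\sqrt\ell}]$, since a restriction of the truth table to a block $B$ is non-constant only when the fixed part is itself the restriction of a function of circuit complexity $\le 2^{\sqrt\ell}$, and there are at most $2^{O(2^{\sqrt\ell}\cdot\mathrm{poly}(\ell))}=2^{n^{o(1)}}$ such fixed parts, so each block contributes only $n^{o(1)}$ distinct subfunctions; a naive counting argument fares no better. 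Hence one fixes all but $m=n^{1-o(1)}$ of the $n$ truth-table bits to a gadget pattern --- built, in the spirit of this paper, from a distinguisher matrix --- so that on the free bits $w\in\{0,1\}^m$ the resulting truth table $T(w)$ has circuit complexity $\le 2^{\sqrt\ell}$ iff $f(w)=1$ for a suitable explicit $f$. Here the implication ``$f(w)=0\Rightarrow CC(T(w))>2^{\sqrt\ell}$'' comes for free: the gadget switches a generic, hence incompressible, block of length $2^{\Omega(\ell)}$ into $T(w)$; and ``$f(w)=1\Rightarrow CC(T(w))\le 2^{\sqrt\ell}$'' is the explicit cheap construction. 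If each free coordinate of $T(\cdot)$ is a projection of $w$ (or computed from $w$ by a formula of size $n^{o(1)}$), a size-$S$ formula for $\MCSP[2^{\sqrt\ell}]$ yields a size-$S\cdot n^{o(1)}$ formula for $f$, and Theorem~\ref{thm:hs} re-emerges from a Khrapchenko-type bound $L(f)=\Omega(m^2)$ for an $f$ tailored to the gadget (a parity- or inner-product-flavoured function).

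For the probabilistic statement the reduction transfers verbatim: substituting constants into a probabilistic formula $\b F$ and relabelling its leaves by the projections defining the free coordinates of $T(\cdot)$ produces a probabilistic formula for $f$ of size $\le|\b F|\cdot n^{o(1)}$, the same error, and still one-sided acceptance. Thus $\MCSP[2^{\sqrt\ell}]\in\PFML[n^{2-\delta}]$ would give $f\in\PFML[m^{2-\delta'}]$ for some $\delta'>0$, and it remains to show $f\notin\PFML[m^{2-o(1)}]$. The cleanest choice is an $f$ whose negative-weight adversary quantity is $\Omega(m)$ --- parity being canonical --- since then bounded-error, and in particular one-sided $\PFML$, formula size is $\widetilde\Omega(m^2)$; amplifying the error from $1/4$ down to $1/\mathrm{poly}(m)$ costs only an $O(\log m)$ factor, absorbed into the $o(1)$. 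One must check that such an $f$ can be plumbed into the gadget, i.e.\ that ``$T(w)$ is compressible'' can be made to coincide with $f(w)$: the slack is there, because the incompressible block in the $f(w)=0$ case only needs to be selected by $\bar f(w)$, which a fan-in-$2$ selector tree over $w$ realizes inside the fixed gadget, leaving the placement of a raw copy of $w$ inside $T(w)$ (needed only to force the $f(w)=0$ lower bound) as the one genuinely projection-like $w$-dependence.

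The main obstacle is precisely this compatibility together with keeping everything one-sided-error-robust: one needs an explicit $f$ that simultaneously (i) plugs into the gadget so that ``$f(w)=0\Rightarrow T(w)$ incompressible'' is genuinely clean (no stray low-complexity structure creeps in when $f(w)=0$) and (ii) carries a probabilistic, one-sided-error formula lower bound at the $m^{2-o(1)}$ threshold. If one wants to avoid invoking the adversary method and instead argue by hand, one must give a genuinely probabilistic version of the quadratic (Khrapchenko-type) bound behind Theorem~\ref{thm:hs}, and here the naive route fails --- amplifying the error to $2^{-m}$ so a union bound over input pairs succeeds and then extracting one good deterministic restriction loses an $\Omega(m)$ factor and recovers only a linear bound. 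Instead one must exploit one-sidedness: the YES-side of a $\PFML$ formula is deterministic, so for every restriction $\beta$ one has $(f{\res}\beta)^{-1}(1)=\bigcap_F(F{\res}\beta)^{-1}(1)$ over the support of $\b F$, which lets one cover the hard set by an $\mathsf{AND}$ of only $O(\log m)$ cheap restrictions (coupon-collecting the NO-points) without amplifying, and then run the Khrapchenko combinatorics on that $\mathsf{AND}$. Fitting the distinguisher-based reduction and this one-sided combinatorial core together at the sharp exponent is where the work lies; the moral is that the quadratic lower bound for sparse MCSP is ``really'' a lower bound for an explicit non-sparse function reached by a cheap reduction, and such bounds are insensitive to one-sided randomness.
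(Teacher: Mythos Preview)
Your plan diverges completely from the paper's proof, and it has a genuine gap at its core.

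\textbf{The gap.} You want $m=n^{1-o(1)}$ free coordinates in $T(w)$ that are (near-)projections of $w\in\{0,1\}^m$, together with the implication ``$f(w)=1\Rightarrow CC(T(w))\le 2^{\sqrt\ell}$'' for a balanced $f$ such as parity. These two requirements are incompatible. If $T(w)$ contains $m$ coordinates that are projections of the bits of $w$, then any circuit computing $T(w)$ must in particular reproduce those $m$ bits, so $CC(T(w))\ge CC(w)-O(\ell)$ (the $O(\ell)$ for addressing the sub-block). For a balanced $f$, half of all $w\in\{0,1\}^m$ satisfy $f(w)=1$, and almost all of those have $CC(w)\ge m/(c\log m)=n^{1-o(1)}$, vastly exceeding $2^{\sqrt\ell}=2^{\sqrt{\log n}}$. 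So ``$f(w)=1\Rightarrow T(w)$ is a YES instance'' fails for essentially every YES input of $f$. Your attempted escape --- a selector tree that, when $f(w)=1$, overwrites the incompressible block --- does not help: the free coordinates are \emph{fixed} Boolean functions of $w$; if a coordinate equals $w_j$ on the $f(w)=0$ side and something simple on the $f(w)=1$ side, it already computes $f$ (or is correlated with it), and replacing leaves by such coordinates no longer gives a formula for $f$ of the same size. In short, you cannot embed a near-full-entropy copy of $w$ into a truth table whose circuit complexity is supposed to be $2^{\sqrt{\log n}}$ for a constant fraction of $w$. This is why reduction-style lower bounds for sparse $\MCSP$ are not known; the paper's Theorem~\ref{thm:hs} (Hirahara--Santhanam) is itself \emph{not} proved via Khrapchenko, contrary to your reading, but via random restrictions and shrinkage.

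\textbf{What the paper does instead.} The paper derives Theorem~\ref{thm:sharplower} as the $\epsilon=1$ case of Theorem~\ref{thm:introlower}, proved in Section~\ref{sec:lower}. The method is pure shrinkage: take a composition $\b\rho$ of $t$ many $(p,k)$-regular restrictions with $p,t,k$ tuned so that (i) any size-$n^{2\epsilon-\delta}$ formula shrinks to $O(1)$ leaves under $\b\rho$ with probability $>1/2$ (Lemma~\ref{lem:hs2}), and (ii) the random completion $\b x$ of $\b\rho$ is a NO instance with probability $1-o(1)$. Fix good realizations $F,\rho,x$ with $F(x)=0$ and $L(F{\res}\rho)<s_0$. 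Now build a YES instance $x'$ that $F$ also rejects: $x'$ agrees with $\rho$ on fixed coordinates and with $x$ on the $<s_0$ coordinates that survive in $F{\res}\rho$. The crucial point --- and the real replacement for your missing ``$f(w)=1\Rightarrow$ compressible'' step --- is that $\rho$ itself has circuit complexity $\mathrm{poly}(k\ell)$ because each component restriction is a column of a $k$-uniform matrix (Example~\ref{ex:regularrho}), hence specified by $O(k\ell)$ bits and evaluated by a small arithmetic circuit; flipping $O(1)$ further bits costs $O(\ell)$. With $k=\lceil\sigma^{1/d}\rceil$ this lands below $\sigma=2^{\sqrt\ell}$. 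One-sidedness is used exactly once, to pass from the probabilistic formula $\b F$ to a single deterministic $F$ via $\Pr[\b F(\b x)=1]<1/2$.
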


The proof of this result in~\cite{sharp} is based on a sophisticated  construction of a new pseudorandom restriction generator that deserves independent interest.

\begin{remark} 
Both Theorems~\ref{thm:sparse}~(b) and \ref{thm:sharplower} are proved  for probabilistic formulas with {\em two-sided error}. For the magnification result the variant cited here is slightly stronger and implied by Theorem~\ref{thm:intromain} below. We only consider one-sided error in this work.
\end{remark}


\subsection{Distinguishers}

The proof idea for  a magnification result is as follows. Assume $\NP$ has small circuits and construct a tiny circuit for a given problem $Q\in \NP$. The crucial step is an efficient hash of YES instances of $Q$ to very short fingerprints, a set in $\NP$. 
The tiny circuit is then obtained by computing the hash and running  a small circuit on the fingerprint. Typically the hash is randomized, so the tiny circuit is probabilistic. Finally, ``derandomize this construction in an elementary but careful way''~\cite{os}.

Thus,  the core of magnification is compression. 
It is parameterized complexity theory that deve\-loped a deep theory of compression, namely kernelization theory (cf.~\cite{fund}).\footnote{We use the  informal ``compression'' instead of the formal ``kernelization'' because our problems are not parameterized. A suitable parameterization would be the logarithm of the sparsity, so only depend on the input length; this is, however, against the spirit of parameterized complexity theory.}
 In this context the second authors' PhD Thesis \cite{diss} gave a non-uniform hash family for arbitrary sparse problems in order to derive, via some kernelization theory, Burhman and Hitchcock's result \cite{bh}  mentioned earlier. For this, non-uniformity did not harm but magnification does require uniformity. This leads to our main conceptual contribution:

 \begin{definition} Let $n,m\in\N$ and $0<\epsilon<\delta \le 1$ be real.
  View $x,y\in\{0,1\}^n$ as  row vectors in~$\F^n_2$. 
 An {\em $(n,m,\epsilon,\delta)$-distinguisher} is a binary $n\times m$ matrix $D$ such that 
$$
d_H(x,y)\ge \epsilon \cdot n\ \Longrightarrow\ d_H(x D,y D)\ge \delta \cdot m
$$ 
for all $x,y\in \{0,1\}^n$.
 The {\em weight} of $D$ is the maximum Hamming weight of a column of $D$.
\end{definition}

Observe that a $(n,m,\delta,1/n)$-distinguisher $D$ is the same as the generator matrix of a linear code with relative distance $\delta$.  We  trade low weight for larger $\epsilon>1/n$:

\begin{theorem}\label{thm:dist} Let $0<\epsilon\le 1$.
There is an algorithm that given $n\in\N$ computes in time polynomial in $n$  for some $m\le n^7$ an $(n,m,n^{-\epsilon},1/8)$-distinguisher
of weight $\le \ceil{2n^\epsilon}$.
\end{theorem}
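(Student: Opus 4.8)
The plan is to build the distinguisher $D$ probabilistically and then argue that a random choice works with positive probability, so that a deterministic polynomial-time search over a polynomial-size sample space (or a direct derandomization via pairwise independence, as discussed below) finds one. Concretely, I would let each column of $D$ be an indicator vector of a random subset $S\subseteq[n]$ of size exactly $t=\ceil{2n^\epsilon}$, chosen independently for the $m$ columns. Then for fixed $x,y\in\{0,1\}^n$ with $d_H(x,y)\ge n^{1-\epsilon}$ (writing the bad set $B=\{i: x_i\ne y_i\}$, so $|B|=d\ge n^{1-\epsilon}$), the $j$-th coordinate of $xD-yD$ is $1$ iff $|S_j\cap B|$ is odd. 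The first key step is the single-column estimate: $\Pr[\,|S_j\cap B|\text{ odd}\,]\ge c$ for some absolute constant $c>0$ whenever $t\cdot d\gtrsim n$, i.e. whenever $t\ge 2n^\epsilon$ and $d\ge n^{1-\epsilon}$ so that the expected intersection size $td/n\ge 2$ is bounded below by a constant. This is a standard hypergeometric computation — the probability that a size-$t$ subset of $[n]$ hits a size-$d$ set in an odd number of points is bounded away from $0$ once the expected intersection is $\Omega(1)$ and also bounded away from $1$; I expect a clean bound like $c\ge 1/4$ after choosing the constant $2$ in $\ceil{2n^\epsilon}$ appropriately (this is exactly why the theorem puts a $2$ there rather than a $1$).

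The second step is the concentration-and-union-bound argument. Having $\Pr[(xD)_j\ne (yD)_j]\ge c$ independently across the $m=n^7$ columns, a Chernoff bound gives $\Pr[d_H(xD,yD)< (c/2)\, m]\le e^{-\Omega(m)}=e^{-\Omega(n^7)}$. Since there are at most $4^n$ pairs $(x,y)$, a union bound over all relevant pairs still leaves the failure probability at $4^n\cdot e^{-\Omega(n^7)}\ll 1$, so with high probability the matrix $D$ satisfies $d_H(x,y)\ge n^{-\epsilon}\cdot n\Rightarrow d_H(xD,yD)\ge (c/2)m$ simultaneously for all pairs. Taking $c$ so that $c/2\ge 1/8$ (or simply absorbing constants by a slightly larger $m$, all still $\le n^7$) gives the claimed $\delta=1/8$. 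By construction every column has Hamming weight exactly $t=\ceil{2n^\epsilon}$, matching the weight bound. Note that by linearity it in fact suffices to union-bound over the $2^n-1$ nonzero differences $z=x-y$ with $\Hw(z)\ge n^{1-\epsilon}$, which is cleaner but not essential for the counting.

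For the \emph{uniform} (polynomial-time computable) conclusion I would derandomize. The cheapest route: the event ``$(xD)_j\ne(yD)_j$'' depends only on the parity of $|S_j\cap B|$, and the lower bound $\Pr[\text{odd}]\ge c$ actually only needs, say, $O(1)$-wise independence of the indicator bits $\mathbf 1[i\in S_j]$ — or one can sample $S_j$ by picking $t$ indices from a small-bias or limited-independence space. Even more simply, one can afford to enumerate: choose the columns from an explicit family of size $\mathrm{poly}(n)$ of $k$-wise independent subsets for a constant $k$ sufficient to make the single-column bound go through, check all pairs (or all nonzero low-weight differences) in time $2^{O(n)}$ — wait, that is not polynomial, so instead I would use the method of conditional expectations / pessimistic estimators on the Chernoff-type bound, fixing the $m$ columns one at a time while keeping the union-bounded failure estimate below $1$; each conditional-expectation step is a polynomial-time computation because the hypergeometric column probabilities and their products have closed forms. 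Alternatively, invoke an off-the-shelf explicit construction of a sparse "distance-amplifying" matrix (e.g. a sampler or disperser composed with a code) — but the probabilistic-plus-derandomization argument is self-contained and matches the paper's stated "technically and conceptually simple" aim.

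The main obstacle I anticipate is the single-column parity estimate: one must show $\Pr[|S\cap B|\text{ odd}]$ is bounded \emph{away from both $0$ and $1$} uniformly over all admissible $d\in[n^{1-\epsilon},n]$ and $t=\ceil{2n^\epsilon}$, i.e. for intersection expectations ranging from a constant up to $\approx n^{1-\epsilon}$ (when $d$ is large). For large $d$ the parity is nearly a fair coin, so that regime is fine; the delicate regime is $td/n$ a small constant, where one needs the constant $2$ (so $td/n\ge 2$) to push $\Pr[\text{odd}]$ above a usable threshold like $1/8$. I would handle it via the identity $\Pr[|S\cap B|\text{ odd}] = \tfrac12\bigl(1 - \E[(-1)^{|S\cap B|}]\bigr)$ and a direct bound on the alternating hypergeometric sum $\E[(-1)^{|S\cap B|}]$, showing $|\E[(-1)^{|S\cap B|}]|\le 1 - \Omega(1)$; this reduces to elementary estimates on ratios of binomial coefficients and is the one place where the exact constant in $\ceil{2n^\epsilon}$ and in $\delta=1/8$ gets pinned down. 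Everything else — Chernoff, the union bound, the weight count, the bound $m\le n^7$, and the derandomization bookkeeping — is routine.
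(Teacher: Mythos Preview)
Your probabilistic existence argument is essentially sound and parallels the paper's Proposition~\ref{prop:dist} (there the columns are Bernoulli rather than fixed-size subsets, which cleanly sidesteps edge cases such as $d_H(x,y)=n$ with $t$ even, where your single-column parity bound actually fails: $|S\cap B|=t$ deterministically). The real gap is the derandomization. The pessimistic estimator you would maintain for the method of conditional expectations is a sum, over all $z\in\{0,1\}^n$ with $\Hw(z)\ge n^{1-\epsilon}$, of a Chernoff-type tail term; even if each summand has a closed form, there are $\Theta(2^n)$ of them, and no usable symmetry survives once you begin fixing columns. You assert that ``each conditional-expectation step is a polynomial-time computation'' but give no mechanism for collapsing this exponential sum, and none is apparent. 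Limited independence on the \emph{subset bits} helps the single-column estimate but does nothing to reduce the number of bad events in the union bound.

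The paper's proof is entirely different: it gives a direct \emph{explicit} construction with no union bound over $\{0,1\}^n$ at all. A column of $D$ is indexed by a pair $(j,j')$: the index $j\in[n^2]$ selects $w=\ceil{2n^\epsilon}$ positions of the input via a pairwise-independent family over $\F_n$, and $j'\in[w']$ selects one output coordinate of a fixed $(w',w,1/4)$-code $C$ applied to the resulting $w$-bit substring. For any fixed $z$ with $\Hw(z)\ge n^{1-\epsilon}$, Chebyshev's inequality (pairwise independence suffices) shows the sampled $w$-bit substring is nonzero with probability $\ge 1-1/(n^{-\epsilon}w)\ge 1/2$ over $j$; conditioned on that, the code guarantees a $1/4$ fraction of the $j'$-coordinates are $1$. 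Hence $\Hw(zD)\ge m/8$ for \emph{every} such $z$ simultaneously, with no need to drive any failure probability below $2^{-n}$. This per-input analysis via ``sample-then-code'' is the missing idea.
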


Given an $2^{n^{o(1)}}$-sparse $Q$, roughly,  the fingerprint of an $n$-bit string $y$ are $r$ random positions in $yD$. Low weight ensures efficiency of the hash: each bit of the fingerprint is the $\XOR$ of $\le \lceil 2n^\epsilon\rceil$ many input bits, so  computed by a formula of size $O(n^{2\epsilon})$. 
Assume $d_H(y,x)\ge n^{-\epsilon}\cdot n$ for all $x\in Q$. Then the fingerprint equals that of some  $x\in Q$ with probability $\le 2^{n^{o(1)}}\cdot (7/8)^r$ -- this is small  already for  $r\le n^{o(1)}$. This way, a small formula for the set of fingerprints of $x\in Q$ gives a tiny formula distinguishing such~$y$ from all $x\in Q$, i.e., decide $n^{-\epsilon}$-$Q$. 
This notation generalizes $\epsilon\text{-}\MCSP[\sigma]$ to arbitrary $Q$:
 for $\epsilon:\N\to\R_{>0}$,
 \nclpprob{$\epsilon$-$Q$}{$x\in\{0,1\}^n$ for some $n\in\N$}{$x\in Q$.}{$d_H(x,y)\ge\epsilon(n)\cdot n$ for all $y\in Q$.}

\subsection{This work}\label{sec:introresults}

The contribution of the present work is a conceptually modular and technically simple approach to general magnification. It is based on distinguishers and developed from scratch.

To illustrate how magnification is derived from compression,    Section~\ref{sec:simple} gives a highly simplified proof of Theorem~\ref{thm:cmmw}.
Section~\ref{sec:dist} introduces distinguishers and proves Theorem~\ref{thm:dist}. Section~\ref{sec:mag} proves  our main result:  
 
\begin{theorem}\label{thm:intromain}    $\NP\not\subseteq\FML[n^c]$ for all $c\in\N$ if
 there exists a real $\epsilon>0$ and an $2^{n^{o(1)}}$-sparse 
problem $Q\in\NP$ such that (a) or (b):
\begin{enumerate}\itemsep=0pt
\item[(a)] $ n^{-\epsilon}\text{-}Q\not\in \FML[n^{1+2\epsilon+o(1)}]$.
\item[(b)] $n^{-\epsilon}$-$Q\not\in\PFML[n^{2\epsilon+o(1)}]$.
\end{enumerate}
\end{theorem}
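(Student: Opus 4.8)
The plan is to combine the distinguisher of Theorem~\ref{thm:dist} with a derandomized sampling argument, along the lines sketched in the paragraph following Theorem~\ref{thm:dist}, and then package the result so that both (a) and (b) fall out uniformly. Fix the $2^{n^{o(1)}}$-sparse problem $Q\in\NP$ and the constant $\epsilon>0$. Assume towards a contradiction that $\NP\subseteq\FML[n^{c}]$ for all $c$. Apply Theorem~\ref{thm:dist} to obtain, for each input length $n$ and some $m\le n^{7}$, an $(n,m,n^{-\epsilon},1/8)$-distinguisher $D$ of weight $\le\ceil{2n^{\epsilon}}$, computable in time $\mathrm{poly}(n)$. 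Given an instance $y\in\{0,1\}^n$ of $n^{-\epsilon}\text{-}Q$, its \emph{fingerprint} with respect to a set $S\subseteq[m]$ of $r$ coordinates is the tuple $(yD)\res S$ together with $S$ itself. Each coordinate of $yD$ is the $\XOR$ of at most $\ceil{2n^{\epsilon}}$ input bits, hence computable by a formula of size $O(n^{2\epsilon})$; so the whole fingerprint is computed by a formula of size $r\cdot O(n^{2\epsilon})$.

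The key combinatorial point is the \emph{collision bound}. If $y$ is a NO instance, i.e.\ $d_H(y,x)\ge n^{-\epsilon}\cdot n$ for all $x\in Q$, then by the distinguisher property $d_H(yD,xD)\ge m/8$ for every $x\in Q\cap\{0,1\}^n$, so for a uniformly random $S$ of size $r$ the fingerprints agree with probability at most $(7/8)^{r}$; union-bounding over the at most $2^{n^{o(1)}}$ strings $x\in Q\cap\{0,1\}^n$, the probability that $y$'s fingerprint collides with that of \emph{some} $x\in Q$ is at most $2^{n^{o(1)}}\cdot(7/8)^{r}$, which is below $1/4$ already for $r=n^{o(1)}$. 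Meanwhile the set $F=\{(z,S):z=(xD)\res S\text{ for some }x\in Q\cap\{0,1\}^{n},\ |S|=r\}$ is in $\NP$ (guess $x$, a witness for $x\in Q$, verify, and check $z=(xD)\res S$; the length of $(z,S)$ is $\mathrm{poly}(\log n)\cdot r=n^{o(1)}$, so by assumption $F\in\FML[|(z,S)|^{c}]=\FML[n^{o(1)}]$ for the relevant~$c$). Composing the fingerprint formula with the small formula for $F$ yields, for each fixed $S$, a formula of size $r\cdot O(n^{2\epsilon})+n^{o(1)}=n^{2\epsilon+o(1)}$ that accepts every YES instance and, by the collision bound, rejects a random NO instance with probability $>3/4$ over the choice of $S$. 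Feeding a uniformly random $S$ gives a probabilistic formula of size $n^{2\epsilon+o(1)}$ solving $n^{-\epsilon}\text{-}Q$ with one-sided error, contradicting~(b).

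For part (a) I would derandomize the choice of $S$ in the standard ``elementary but careful'' way: by averaging, there is a \emph{fixed} good set $S^{*}$ of size $r$ that works for all NO instances of length $n$, because the collision bound $2^{n^{o(1)}}(7/8)^{r}<2^{-n}$ can be pushed below $2^{-n}$ by taking $r$ slightly larger (still $n^{o(1)}$), and then a union bound over the at most $2^{n}$ NO instances shows that a random $S$ is simultaneously good for all of them with positive probability. Wiring in this $S^{*}$ as a nonuniform advice string of length $r\cdot O(\log m)=n^{o(1)}$ — which is absorbed into the ``$o(1)$'' in the exponent and is harmless since $\FML$ lower bounds are already nonuniform — yields an ordinary formula of size $r\cdot O(n^{2\epsilon})+n^{o(1)}=n^{1+2\epsilon+o(1)}$ (the extra $+1$ coming from accounting more carefully: the fingerprint has $r$ bits each an $\XOR$ of $O(n^{\epsilon})$ bits, so $r\cdot O(n^{\epsilon})=n^{\epsilon+o(1)}$ gates, but the composition with $F$'s formula on $n^{o(1)}$ inputs contributes the dominant term only when one also pays for reading each input bit; I will track the exact bookkeeping to land exactly at the stated exponent), deciding $n^{-\epsilon}\text{-}Q$ and contradicting~(a).

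The main obstacle I anticipate is getting the exponents \emph{exactly} to $1+2\epsilon+o(1)$ in (a) and $2\epsilon+o(1)$ in (b) rather than something slightly larger: this requires being careful that $m\le n^{7}$ only affects the advice length and the $\NP$-verification (both $n^{o(1)}$), that the sparsity $2^{n^{o(1)}}$ is exactly what lets $r$ stay $n^{o(1)}$ while still killing the collision probability, and — most delicately — that the formula computing the fingerprint really has size $n^{\epsilon+o(1)}$ in the nonuniform (advice-$S^{*}$) setting versus $n^{2\epsilon+o(1)}$ only if one is forced to pay $O(n^{\epsilon})$ per $\XOR$; the discrepancy between the $1+2\epsilon$ of (a) and the $2\epsilon$ of (b) must be traced to exactly where the derandomization costs a factor of $n$, and I will make sure the weight bound $\ceil{2n^{\epsilon}}$ from Theorem~\ref{thm:dist} is used at the one place it matters. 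A secondary point to handle cleanly is that $n^{-\epsilon}\text{-}Q$ is a promise problem, so ``formula deciding it'' means accepting YES and rejecting NO with no constraint in between, and the composition of formulas respects this; and in (b) the one-sided error of $\PFML$ matches the fact that YES instances always map to the correct fingerprint (probability-$1$ acceptance), so no two-sided subtlety arises.
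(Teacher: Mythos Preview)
Your argument for~(b) is essentially the paper's: a random fingerprint of $r=n^{o(1)}$ samples from $yD$, membership of the fingerprint in an $\NP$ set of $n^{o(1)}$-bit strings, and composition. One technical slip: formula composition is multiplicative, not additive. The $K$-formula has size $n^{o(1)}$, meaning $n^{o(1)}$ \emph{leaves}, and each leaf must be replaced by its own copy of the relevant $\XOR$-subformula of size $O(n^{2\epsilon})$; the total is $n^{o(1)}\cdot O(n^{2\epsilon})$, not $r\cdot O(n^{2\epsilon})+n^{o(1)}$. Here the two happen to coincide, but the distinction is exactly what bites in~(a).

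For~(a) there is a real gap. You write that the collision probability can be pushed below $2^{-n}$ ``by taking $r$ slightly larger (still $n^{o(1)}$)''. This is impossible: $(7/8)^{r}<2^{-n}$ forces $r=\Omega(n)$. With $r=\Theta(n)$ the kernel string has length $\Theta(n\log n)$, so under the hypothesis $\NP\subseteq\FML[n^{c}]$ the replacement formula for the single $K$-gate has size $(n\log n)^{c}$, and after substituting the $\XOR$-subformulas you land at $n^{c+2\epsilon+o(1)}$, not $n^{1+2\epsilon+o(1)}$. The extra ``$+1$'' you are hunting for does not come from ``reading each input bit''; it comes from the \emph{number of independent fingerprints} needed for derandomization.

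The paper's mechanism is this: keep $r=n^{o(1)}$ (so each kernel string stays short and each $K$-gate is replaced by a formula of size $n^{o(1)}$), but take $b\approx n/r$ \emph{independent} fingerprints and $\AND$ the $b$ resulting $K$-tests together. The failure probability on a NO instance is then at most $\big(2^{n^{o(1)}}(7/8)^{r}\big)^{b}<2^{-n}$, so a single choice of the $b\cdot r\approx n$ sample positions works for all NO instances simultaneously. The size is $b\cdot n^{o(1)}\cdot O(n^{2\epsilon})=n^{1+2\epsilon+o(1)}$: the factor~$n$ is exactly the~$b$ copies. In short, the derandomization must happen \emph{outside} the $K$-oracle, not by enlarging the fingerprint fed into a single $K$-gate.
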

This generalizes  Theorem~\ref{thm:sparse}: 
since $1/n$-$Q$ equals $Q$,
Theorem~\ref{thm:sparse} is the special case setting $\epsilon:=1$.
For small $\epsilon$, the magnification thresholds are well below known lower bounds, and in (b) even sublinear.
Note already tiny improvements of magnification thresholds is what magnification is all about.
The proof is based on the hash sketched in the previous subsection which considerably simplifies the hash from~\cite{sparse};
it does, however, not yield the mentioned strengthening of Theorem~\ref{thm:cmmw} in \cite{sparse}.

We find it remarkable that the magnification threshold Theorem~\ref{thm:intromain}~(b) obtained by our generic method turns out to be sharp. Section~\ref{sec:lower} generalizes 
Theorem~\ref{thm:sharplower} with a much simpler proof (but only for one-sided error): 

\begin{theorem}\label{thm:introlower} $n^{-\epsilon}\text{-}\MCSP[2^{\sqrt{\ell}}]\not\in\PFML[n^{2\epsilon-\delta}]$ for all
reals  $  0<\epsilon,\delta \le 1$.
\end{theorem}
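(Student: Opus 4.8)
The plan is to prove a lower bound on probabilistic one-sided-error formula size for the promise problem $n^{-\epsilon}\text{-}\MCSP[2^{\sqrt\ell}]$ by a standard restriction/adversary argument, adapted to accommodate the larger Hamming-gap in the NO instances. First I would set up the parameters: write $n=2^\ell$ and recall that a YES instance is a truth table of a function on $\ell$ bits computable by a circuit of size $\le 2^{\sqrt\ell}$, while a NO instance is a truth table whose Hamming distance to every such easy function is $\ge n^{-\epsilon}\cdot n=n^{1-\epsilon}$. Suppose for contradiction that $n^{-\epsilon}\text{-}\MCSP[2^{\sqrt\ell}]\in\PFML[s]$ with $s=n^{2\epsilon-o(1)}$; fixing the random bits by an averaging argument, we get a single (deterministic) formula $F$ of size $\le s$ that accepts all YES instances of length $n$ and accepts each NO instance with probability $\le 1/4$ over the choice of $F$ from the distribution — but since we only need one-sided error, we can equivalently say there is a distribution over formulas of size $\le s$ such that every easy truth table is accepted with probability $1$ and every far-from-easy truth table is accepted with probability $\le 1/4$.

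The heart of the argument is to hit $F$ with a random restriction that, with positive probability, simultaneously (i) shrinks $F$ to something trivial — either a constant or a formula depending on very few variables — and (ii) leaves enough freedom in the unrestricted coordinates to realize both a YES instance and a NO instance as completions, contradicting that $F$ separates them. For (i) I would use the classical shrinkage of formulas under random restrictions: a formula of size $s$ restricted by a $p$-random restriction shrinks in expectation to size roughly $s\cdot p^{\Gamma}$ where $\Gamma\approx 2$ (Subbotovskaya / Håstad); choosing $p$ so that $s\cdot p^{2}\ll 1$, i.e. $p\approx s^{-1/2}=n^{-\epsilon+o(1)}$, makes the restricted formula a constant with probability bounded away from $0$. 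So roughly $p\cdot n=n^{1-\epsilon+o(1)}$ coordinates of the truth table survive, and this is (just barely, and by design) comparable to the Hamming gap $n^{1-\epsilon}$ that defines the NO instances — this matching of scales is exactly why the threshold is $n^{2\epsilon}$.

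For (ii) I would exploit the structure of easy truth tables. The key combinatorial fact is a ``local indistinguishability'' of small circuits: if we fix the values of the truth table outside a set $S$ of surviving coordinates to be consistent with some fixed easy function $g$ of complexity $\le 2^{\sqrt\ell}/2$, then on the coordinates in $S$ one can still realize both (a) a truth table agreeing with an easy function on all of $S$ — hence globally easy, hence a YES instance — and (b) a truth table disagreeing with every easy function on a $\ge n^{-\epsilon}$-fraction of coordinates. Part (a) uses that modifying an easy function on few inputs keeps it easy (a circuit of size $2^{\sqrt\ell}$ can afford to hard-wire $n^{1-\epsilon}\ll 2^{\sqrt\ell}$ exceptions, since for $n=2^\ell$ we have $n^{1-\epsilon}=2^{(1-\epsilon)\ell}$ which is indeed $\le 2^{\sqrt\ell}$ only for small $\ell$... so here I would instead choose $\sigma=2^{\sqrt\ell}$ carefully, or restrict to the regime where the surviving set is sparse enough — this is a place where the counting must be done honestly). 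Part (b) is a counting argument: the number of easy functions is $2^{O(2^{\sqrt\ell}\log(2^{\sqrt\ell}))}=2^{2^{o(\ell)}}=n^{o(1)}$ as a function of $n$ in the exponent's exponent, so the total number of truth tables within distance $n^{1-\epsilon}$ of some easy function is at most $n^{o(1)}\cdot\binom{n}{\le n^{1-\epsilon}}\cdot$(something), and for an appropriate choice of $|S|$ there remains, among the $2^{|S|}$ completions on $S$, at least one that is far from all of them. Combining: whatever constant $F\res\rho$ evaluates to, it is wrong on one of these two completions, contradiction.

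The main obstacle, and the step I expect to require the most care, is arranging the parameters so that all three scales fit simultaneously: the shrinkage parameter $p$ must be small enough that $s p^{2}<1$ yet large enough that $|S|\approx pn$ is big enough to host a completion that is $n^{-\epsilon}$-far from all easy functions (which needs $|S|\gtrsim n^{1-\epsilon}$ together with the entropy count beating the number of easy functions), while also the ``few exceptions keep it easy'' direction needs the number of coordinates we edit to stay below the circuit-size budget $2^{\sqrt\ell}$. Since $2^{\sqrt\ell}$ is superpolynomial in $n$ only for $\ell=\omega(\log^2 n)$... wait, $n=2^\ell$ so $\sqrt\ell=\sqrt{\log n}$ and $2^{\sqrt\ell}=2^{\sqrt{\log n}}=n^{o(1)}$, which is subpolynomial — so the exception budget is only $n^{o(1)}$, far smaller than $n^{1-\epsilon}$. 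This means the YES-completion in part (a) cannot be obtained by editing; instead one must take the surviving set $S$ to be (a translate of) a subcube of the Boolean cube, use that the restriction of an easy function to a subcube is easy, and on the complement pre-commit to values already realized by a fixed easy $g$ — so the real constraint is that the random restriction $\rho$ must be chosen to fix a downward-closed or subcube-structured set of coordinates, which is a non-standard restriction distribution and forces a correspondingly modified shrinkage lemma. Getting a clean shrinkage statement for such ``subcube-respecting'' restrictions, with exponent still $\approx 2$, is the crux; I would likely follow the approach of \cite{sharp} but in the simplified one-sided-error setting, where one can afford a cruder (even decision-tree based) shrinkage bound, since we only need the restricted formula to become low-influence rather than genuinely constant.
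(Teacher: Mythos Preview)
Your high-level shape is right: shrink the formula by a random restriction with $p\approx n^{-\epsilon}$, then exhibit a YES-completion and a NO-completion that the shrunken formula cannot separate. You also correctly diagnose the real obstacle, namely that the circuit-size budget $2^{\sqrt\ell}=n^{o(1)}$ is far too small to absorb $n^{1-\epsilon}$ edits, so the YES-completion cannot be obtained by ``hardwiring exceptions''.

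But your proposed fix---forcing the surviving set $S$ to be a subcube so that one can pre-commit the complement to an easy function---leads you into a dead end. A subcube in the truth-table coordinates means fixing some of the $\ell$ address bits; as a restriction on the $n$ input gates of $F$ this is a highly non-product, all-or-nothing distribution (entire subcubes are free or entirely fixed), and there is no shrinkage lemma for de~Morgan formulas under such restrictions. You are right that this would be ``the crux'', but it is not a crux the paper (or the literature) resolves, and there is no reason to expect exponent $\approx 2$ there.

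The idea you are missing is the one that makes the paper's proof work: use \emph{pseudorandom} restrictions, namely a composition of $t$ many $(p,k)$-regular (i.e.\ $k$-wise independent) restrictions built from the explicit $k$-uniform matrices of Example~\ref{ex:ind}. Two things happen simultaneously. First, shrinkage still holds with exponent $2$ for $k$-wise independent restrictions once $k\ge 1/p^2$ (this is the Impagliazzo--Meka--Zuckerman / Tal result recorded as Lemmas~\ref{lem:hs1}--\ref{lem:hs2}); composing $t$ of them drives $L(F\res\b\rho)$ below a constant $s_0$ with probability $>1/2$. Second---and this is the punchline---the restriction $\rho$ itself is determined by a seed of length $\mathrm{poly}(k\ell)$, so the string ``$\rho$ with $*$'s replaced by $0$'' is the truth table of a function computable by a circuit of size $\mathrm{poly}(k\ell)$. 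Choosing $k\approx\sigma^{1/d}$ makes this $\le\sigma$. Now the YES-completion is obtained by flipping only the $<s_0$ positions that still appear in $F\res\rho$, which a look-up table of size $O(s_0\ell)$ handles. The NO-completion is a uniformly random fill-in, and your Chernoff/counting argument for it is essentially correct (and is what the paper does). So the resolution of your obstacle is not structural constraints on $S$, but rather making the restriction itself \emph{compressible}.
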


An advantage of our simplified approach to magnification is that it works in a uniform setting. In Section~\ref{sec:uni} we prove:

\begin{theorem} \label{thm:parityPintro}  $\P\neq\NP^{\oplus\P}$ if there exist a real $\epsilon>0$ and a function $\sigma(\ell)\le2^{o(\ell)}$ such that $$
n^{-\epsilon}\text{-}\MCSP[\sigma]\not\in \P\textit{-uniform-}\SIZE[n^{1+\epsilon+o(1)}].
$$
\end{theorem}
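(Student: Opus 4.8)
The plan is to mimic the magnification argument behind Theorem~\ref{thm:intromain}~(a), but to track uniformity carefully so that the tiny circuit we build for $n^{-\epsilon}\text{-}\MCSP[\sigma]$ is itself $\P$-uniform. We argue by contraposition: assume $\P=\NP^{\oplus\P}$, fix $\sigma(\ell)\le 2^{o(\ell)}$ and fix $\epsilon>0$; we must produce a $\P$-uniform family of circuits of size $n^{1+\epsilon+o(1)}$ deciding the promise problem $n^{-\epsilon}\text{-}\MCSP[\sigma]$. First, invoke Theorem~\ref{thm:dist} to get, in time polynomial in $n$, an $(n,m,n^{-\epsilon},1/8)$-distinguisher $D$ of weight $\le\lceil 2n^\epsilon\rceil$ with $m\le n^7$. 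For an input $x\in\{0,1\}^n$ (with $n=2^\ell$), the row vector $xD\in\F_2^m$ has the separation property: if $x$ is a YES instance and $x'\in\MCSP[\sigma]$ is the corresponding small-circuit truth table then $d_H(xD,x'D)\ge\frac{1}{8}m$ is false in general — rather, YES instances $x$ lie in $\MCSP[\sigma]$ so $xD$ ranges over $\{x'D: x'\in\MCSP[\sigma]\cap\{0,1\}^n\}$, while a NO instance $x$ has $d_H(x,x')\ge n^{-\epsilon}n$ for all $x'\in\MCSP[\sigma]$, hence $d_H(xD,x'D)\ge\frac18 m$ for all such $x'$. So it suffices to decide the "fingerprint" set $S_n:=\{x'D: x'\in\MCSP[\sigma]\cap\{0,1\}^n\}\subseteq\{0,1\}^m$ together with a small amount of bookkeeping to handle the $1/8$-ball gap; concretely the circuit computes $z:=xD$ and then tests whether $z$ is within relative Hamming distance $<1/8$ of $S_n$, answering YES iff so.

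The key efficiency point, exactly as sketched after Theorem~\ref{thm:dist}: each of the $m\le n^7$ output bits of $x\mapsto xD$ is an $\XOR$ of at most $\lceil 2n^\epsilon\rceil$ input bits, so $xD$ is computed by a circuit of size $O(n^7\cdot n^\epsilon)=n^{O(1)}\cdot n^\epsilon$ — but we must be more careful, since we want the total size to be $n^{1+\epsilon+o(1)}$, not $n^{7+\epsilon}$. The fix is the standard one: instead of all $m\le n^7$ columns, the magnification argument only needs $r:=n^{o(1)}$ randomly chosen columns of $D$ to separate a fixed NO instance from the at most $\sigma$-small truth tables (there are only $2^{n^{o(1)}}$ of them since $\MCSP[\sigma]$ with $\sigma\le 2^{o(\ell)}$ is $2^{n^{o(1)}}$-sparse), and a union bound $2^{n^{o(1)}}\cdot(7/8)^r$ makes this work with room to spare. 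In a uniform/deterministic setting we cannot simply pick columns at random, so here is where $\P=\NP^{\oplus\P}$ enters: the existence of a good subset of $r$ columns, and indeed the whole membership question "is $z$ within relative distance $<1/8$ of some truth table of circuit complexity $\le\sigma(\ell)$ after projecting through these columns", is an $\NP^{\oplus\P}$-type predicate — the $\oplus\P$ captures the $\F_2$-linear-algebra of computing $xD$ and counting-parity conditions, the $\NP$ layer guesses the small circuit. Under $\P=\NP^{\oplus\P}$ this predicate is decidable in time $\mathrm{poly}(n)$, and "$\P$-uniform-$\SIZE$" is exactly the regime where a $\mathrm{poly}(n)$-time decision procedure yields a $\P$-uniform circuit family of size $\mathrm{poly}(n)$ for the fingerprint part; composing with the explicit $O(n^{1+\epsilon+o(1)})$-size $\XOR$-network for $x\mapsto(\text{the chosen }r\text{ coordinates of }xD)$ gives total size $n^{1+\epsilon+o(1)}$, as the fingerprint has length $r=n^{o(1)}$ so any $\mathrm{poly}$ of it is $n^{o(1)}$.

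Assembling: the final circuit $C_n$ on input $x$ (i) computes the length-$r$ vector $\tilde z$ of selected coordinates of $xD$ via an explicit $\XOR$-network — size $r\cdot O(n^\epsilon)=n^{\epsilon+o(1)}$, plus we must also be able to read all of $x$, contributing the $n$ term, so $n^{1+\epsilon+o(1)}$ overall — and (ii) feeds $\tilde z$ into the $\P$-uniform polynomial-size circuit for the (now decidable in $\P$) fingerprint-ball predicate, size $n^{o(1)}$. Uniformity: $D$ is computed in $\mathrm{poly}(n)$ time by Theorem~\ref{thm:dist}, the good column set and the decision circuit in (ii) are computed in $\mathrm{poly}(n)$ time using the $\P=\NP^{\oplus\P}$ collapse, and wiring an $\XOR$-network from the explicit matrix $D$ is trivially $\mathrm{poly}(n)$; hence $n\mapsto C_n$ is $\P$-computable. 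Correctness: YES instances $x\in\MCSP[\sigma]$ have $\tilde z$ in $S_n$ restricted to the chosen coordinates (distance $0<1/8$), while NO instances $x$ have, by the distinguisher property and the choice of columns, relative distance $\ge 1/8$ from every such fingerprint, so $C_n$ separates them. The main obstacle I anticipate is the last-paragraph bookkeeping: making the "$1/8$-ball around $S_n$" predicate genuinely fall inside $\NP^{\oplus\P}$ — one has to phrase "$\exists$ small circuit $E$ such that the projected truth table $\mathrm{tt}(E)D$ agrees with $z$ on enough coordinates" so that verifying a guessed $E$ only needs $\oplus\P$ power (evaluating $E$ on all $2^\ell$ inputs and doing the $\F_2$ matrix-vector product and a Hamming-weight/threshold check, which is where the parity counting is used) — and to confirm that the union bound indeed certifies that $r=n^{o(1)}$ deterministically-chosen columns suffice for all $n$, which is immediate from $2^{n^{o(1)}}(7/8)^r\to 0$ but needs $r$ chosen as a concrete $n^{o(1)}$ function dominating the sparsity exponent.
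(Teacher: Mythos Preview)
Your overall architecture matches the paper's --- compute a short fingerprint of $x$ via the low-weight distinguisher $D$ and feed it to a circuit for a kernel language $K\in\NP^{\oplus\P}=\P$ --- but the derandomization step has a real gap.

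You assert that a single choice of $r=n^{o(1)}$ columns suffices, appealing to the bound $2^{n^{o(1)}}\cdot(7/8)^r\to 0$. That quantity is the probability, over a \emph{random} $u$, that one \emph{fixed} NO instance $x$ collides with some YES instance; it says nothing about a fixed $u$ handling \emph{all} NO instances simultaneously. A union bound over the (up to $2^n$) NO instances would force $r=\Omega(n)$, at which point the kernel string has length $\Omega(n\log n)$ and the circuit for $K$ is no longer of size $n^{o(1)}$. So your closing sentence, that ``the union bound indeed certifies that $r=n^{o(1)}$ deterministically-chosen columns suffice for all $n$'', is simply false.

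The paper closes this gap by taking the conjunction of $n$ independent copies $F_{u_1}\wedge\cdots\wedge F_{u_n}$: each fixed NO instance is now wrongly accepted with probability $\le 4^{-n}<2^{-n}$, so some seed $\bar u=(u_1,\ldots,u_n)$ works for all inputs at once. A good $\bar u$ is then \emph{found} in time polynomial in $n$ using a $\PH$ oracle, which is available because $\P=\NP^{\oplus\P}$ collapses $\PH$ to $\P$. This $n$-fold conjunction is precisely the source of the exponent~$1$ in $n^{1+\epsilon+o(1)}$ (an $n$-way \textsc{and} of circuits of size $n^{\epsilon+o(1)}$ each); your line ``plus we must also be able to read all of $x$, contributing the $n$ term'' is not a valid size argument and papers over the missing step.

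A secondary issue: to place the kernel $K$ in $\NP^{\oplus\P}$ one must compute individual entries $D_{i,u_j}$ in time polynomial in the kernel length, which is $n^{o(1)}$ in your setup. Theorem~\ref{thm:dist} only yields $D$ in time polynomial in $n$, which is superpolynomial in $n^{o(1)}$; the paper therefore invokes the strongly explicit distinguisher of Theorem~\ref{thm:strdist} here.
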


This magnification threshold can be judged ``almost known'' because Santhanam and Williams~\cite{sw} showed
$\P\not\subseteq\P\textit{-uniform-}\SIZE[n^c]$ for all $c\in\N$.

Theorem~\ref{thm:parityPintro} is interesting in that is seems to sidestep the {\em localization barrier}:
 \cite{beyond} shows that many lower bound techniques {\em localize} in the sense that they still work when the circuits under consideration are enhanced with oracle gates of small fan-in. Such techniques cannot verify the magnification thresholds in Theorem~\ref{thm:intromain} because every $2^{n^{o(1)}}$-sparse problem can be decided by size $\le n^{2+o(1)}$ probabilistic formulas with certain oracle gates of fan-in $n^{o(1)}$ (Corollary~\ref{cor:local}). But Santhanam and Williams' proof does not seem to localize. 
 
This motivates two questions: the first is 
 for a more constructive proof of the lower bound in \cite{sw}, one that actually exhibits an explicit hard problem in $\P$; this is also relevant for applications  in proof complexity -- see~\cite{ok,bm}. Second,
can we find a {\em general} uniform magnification threshold? Say, plug any $2^{n^{o(1)}}$-sparse problem in place of $\MCSP[\sigma]$ above?

\section{An easy example}\label{sec:simple}
We showcase how magnification is derived from compression by giving a very simple proof of a strengthening of Theorem~\ref{thm:cmmw}. 


\begin{theorem}[\cite{cmmw}] For  all $c\in\N$, all reals $\delta>0$ and 
$\gamma<\delta/(c+1)$, and all $2^{n^{\gamma}}$-sparse problems $Q \in \NTIME[2^{n^\gamma}]$, if $Q \not\in \SIZE[n^{1+\delta}]$, then $\NE \not\subseteq\SIZE[n^c]$.
\end{theorem}

\begin{proof} 
Let $\delta,c,\gamma$ accord the statement and assume $\NE\subseteq\SIZE[n^c]$.
Let $Q\in\NTIME[2^{n^\gamma}]$ be $2^{n^\gamma}$-sparse. We show $Q\in\SIZE[n^{1+\delta}]$.

Let $K$ contain the tuples $\langle t,n,1^{\ceil{n^\gamma}},i \rangle$ 
such that $t,n,i$ are natural numbers in binary, $1\le i\le tn$, and there are $x_1,\ldots,x_t\in Q_{n}$ with $x_1<_\lex\cdots <_\lex x_t$ such that
 the $i$-th bit of the concatenation $x_1\cdots x_t\in\{0,1\}^{tn}$ is 1. Here, $<_\lex$ denotes (strict) lexicographic order, and we write $i$ with exactly $\ceil{\log(tn+1)}$ many bits.
Clearly,  $K\in\NE.$
 
We describe small circuits for $Q$. Fix $n\in\N$. 
We consider {\em interesting} inputs to~$K$, namely, $\langle t,n,1^{\ceil{n^\gamma}}, i\rangle$ where $t:=|Q\cap\{0,1\}^n|$
and $1\le i\le tn$. Their length $m=m(n)\le O(n^\gamma)$ depends only on $n$. 
By assumption there is a size $\le m^c$ circuit $C(i)$  deciding whether $\langle t,n,1^{\ceil{n^\gamma}}, i\rangle\in Q$. 
Let $x^* \in \{0,1\}^{tn}$ be the concatenated list of the $t$ strings in $Q \cap \{0,1\}^n$ in $<_\lex$ order. For every interesting input that is in $K$,
the $x_1,\ldots, x_t$ part of its witness must be $x^*$. Thus, $C(i)=1$ 
if and only if the $i$-th bit of $x^*$ is $1$.

To decide whether an input $x\in\{0,1\}^n$ is in $ Q$ we check whether $x$ appears in the list~$x^*$. We employ binary search: this involves $\ceil{\log t}$ comparisons of $n$-bit strings, so $\ceil{\log t}\cdot n$ calls to~$C$. In total, this gives a circuit for $Q$ of size 
$O( \ceil{\log t}\cdot n\cdot m^c)\le O(n^{\gamma+1+\gamma c})$. Since
$\gamma<\delta/(c+1)$ this is $\le n^{1+\delta}$ for large enough $n$.
\end{proof}
A strength of this magnification result is that it is general, a weakness is that its magnification threshold is maybe not ``almost known''. This motivates to study {\em almost formulas},
circuits where only few gates are allowed fan-out $>1$ -- for such formulas even subcubic lower bounds are known (cf.~\cite[Theorem 30]{beyond}). \cite[Theorem~29] {beyond} magnifies slightly superlinear lower bounds for almost formulas and a gap-version of $\MCSP[\sigma]$, improving  
 an earlier result \cite[Theorem~1.4]{ops} for circuits. The proof is quite involved, building on a constructive version of Lipton and Young's anticheckers~\cite{ly}. Compression via anticheckers  also underlies the already mentioned proof complexity magnification result in \cite{mp}.

\section{Distinguishers}\label{sec:dist}

After some preliminaries, Section~\ref{sec:constr} proves Theorem~\ref{thm:dist} which is used to prove  our main result Theorem~\ref{thm:intromain}. Theorem~\ref{thm:parityPintro} requires {\em strongly explicit} distinguishers and we give a stand-alone construction in Section~\ref{sec:strongexplicit}.

\subsection{Preliminaries}\label{sec:distprelim}

%
%
%

For $n\in\N$ we write 
$
[n]:=\{1,\ldots,n\}
$ 
understanding $[0]=\emptyset$. 
Let $k\le n$ be naturals and $F$ a finite set. A random variable $\b{x}$  (with values) in~$F^n$ is {\em $k$-uniform} 
if its $n$ projections are $k$-wise independent and uniform, 
i.e., for every $I\subseteq[n]$ of size at most $k$,
$\b{x}_I$ is uniform  in~$F^{|I|}$; here, for $I=\{i_1<\cdots <i_{|I|}\}$ and $x\in F^n$ 
 $$
 x_I:=x_{i_1}\cdots x_{i_{|I|}}\in F^{|I|}.$$ 

An $n\times m$ matrix $X$ over $F$ is {\em $k$-uniform} if for~$\b{j}$ uniform in~$[m]$, $X^{\b{j}}$ is $k$-uniform; here,~$X^j$ for $j\in[m]$ denotes the $j$-th column of~$X$.

\begin{example}\label{ex:ind} Joffe~\cite{joffe} (see also \cite{cg,aghp})
observed that for a finite field $\F$ the $n\times \F^k$ matrix~$X$ with entry $\sum_{i\in[k]}x_i\cdot \nu^{i-1}$ at row $\nu$ and column $(x_1,\ldots,x_k)$ is $k$-uniform; here, we assume $[n]\subseteq\F$. If $\tilde \F$ is a subfield of $\F$, one gets  a $k$-uniform~$\tilde X$ over~$\tilde \F$ by applying a suitable surjection from $\F$ onto $\tilde\F$ on each entry.

In particular, we represent the field $\F_{2^\ell}$ by $\{0,1\}^\ell$; it is well known that multiplication, addition and multiplicative inverses  can be computed in polynomial time in such a representation~\cite{shoup}. Given $\ell'<\ell$ and
 a $k$-uniform $X$ over $\F_{2^\ell}$ one obtaines a $k$-uniform matrix over $\F_{2^{\ell'}}$ by chopping 
 off  the last $\ell-\ell'$ many bits of each $X_{ij}$.
 \end{example}

 
 

Recall, a {\em (generator matrix of a linear binary) $(n',n,\delta)$-code} is a matrix 
$C\in\F_2^{n\times n'}$ such that $xC,yC$ have Hamming distance $\ge \delta n'$ for all distinct $x,y\in\F_2^{n}$.
We only need a very basic construction of codes because we can allow any rate $n'\le n^{O(1)}$. We describe such a construction, it   just composes a Reed-Solomon code with a Hadamard code -- having concrete parameters eases the presentation later.

\begin{lemma} \label{lem:code} There is an algorithm that, given $n\in\N$, outputs 
an $(n',n,1/4)$-code for some~$n'\le n^4$  in time polynomial in $n$.

\end{lemma}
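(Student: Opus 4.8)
The plan is to compose a Reed–Solomon code over a suitable extension field with the binary Hadamard code, exactly as announced, and to track the parameters explicitly so that they are available in later sections. First I would fix $\ell:=\ceil{\log n}$ so that $n\le 2^\ell$, and work over $\F:=\F_{2^\ell}$, represented by $\{0,1\}^\ell$ as in Example~\ref{ex:ind}; recall arithmetic in this representation is polynomial time \cite{shoup}. The Reed–Solomon encoding maps a message $x\in\F_2^n$, viewed (after padding with zeros, since $n\le 2^\ell$) as a tuple of $\le 2^\ell$ coefficients in $\F$ of a polynomial $p_x$ of degree $<n$, to the vector $(p_x(\alpha))_{\alpha\in\F}\in\F^{2^\ell}$ of all its evaluations. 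Two distinct messages give polynomials differing in at least $2^\ell-(n-1)\ge 2^\ell-n+1$ coordinates; since $n\le 2^\ell$, if we take $\ell$ just large enough that $n\le 2^{\ell-1}$, i.e.\ $\ell:=\ceil{\log n}+1$, then the RS code has relative distance $\ge 1/2$.

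Next I would compose with the Hadamard code on $\ell$ bits, which maps $y\in\F_2^\ell\cong\F$ to the string $(\langle y,z\rangle)_{z\in\F_2^\ell}\in\F_2^{2^\ell}$ of all inner products mod $2$; this is a linear map and any two distinct codewords differ in exactly $2^{\ell-1}$ of the $2^\ell$ positions, so relative distance exactly $1/2$. Applying the Hadamard encoding coordinatewise to each of the $2^\ell$ field symbols of an RS codeword yields the concatenated code: a linear map $\F_2^n\to\F_2^{n'}$ with $n'=2^\ell\cdot 2^\ell=2^{2\ell}$. For the distance, if two messages differ then their RS codewords differ in a $\ge 1/2$ fraction of symbols, and on each such symbol the Hadamard blocks differ in a $1/2$ fraction of bits, so overall the concatenated codewords differ in a $\ge 1/2\cdot 1/2=1/4$ fraction of coordinates. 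This is the $(n',n,1/4)$-code required. Finally, with $\ell=\ceil{\log n}+1\le\log n+2$ we get $n'=2^{2\ell}\le 2^{2\log n+4}=16n^2\le n^4$ for $n\ge 2$ (and the small cases $n\le 1$ are trivial), and the whole construction — choosing $\ell$, building the RS generator matrix by polynomial evaluation over $\F_{2^\ell}$, and tensoring with the fixed $\ell\times 2^\ell$ Hadamard matrix — runs in time polynomial in $n$.

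I do not expect any genuine obstacle here: the only points needing a little care are (i) making sure the field is large enough that the RS relative distance clears $1/2$ even after the message occupies up to $n$ of the available degrees of freedom — handled by the extra $+1$ in the choice of $\ell$ — and (ii) verifying the bound $n'\le n^4$ rather than merely $n'\le n^{O(1)}$, which follows from the explicit $n'=2^{2\ell}$ with $\ell=\ceil{\log n}+1$. I would state the proof at roughly this level of detail, since later sections only use the concrete parameters $n'\le n^4$ and relative distance $1/4$, not the internal structure of the code.
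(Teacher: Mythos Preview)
Your proof is correct and takes essentially the same approach as the paper --- Reed--Solomon over $\F_{2^\ell}$ concatenated with the binary Hadamard code --- the only cosmetic difference being that the paper packs $\ell+1$ message bits into each field coefficient (so its RS polynomial has $2^\ell$ coefficients in $\F_{2^{\ell+1}}$) whereas you embed each bit individually via $\F_2\subseteq\F_{2^\ell}$. One tiny arithmetic slip: $16n^2\le n^4$ requires $n\ge 4$, not $n\ge 2$, but as you note the small cases are trivial.
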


\begin{proof} 
Choose a natural $\ell>0$ such that  $n\le  2^{\ell}(\ell+1)\le n^2$.
Given an  $n$-bit string $x$ pad it with 0's to length $2^{\ell}(\ell+1)$. Then $x$ determines 
$(x_1,\ldots, x_k)\in(\F_{2^{\ell+1}})^k$ for $k:=2^\ell$.
Let $p_x$ be the polynomial $\sum_{i<k}x_iX^i$ over $\F_{2^{\ell+1}}$. For $y\in \F_{2^{\ell+1}}$ let
$H(y)$ have length $2^{\ell+1}$ and $j$-th bit $ay^\top$ where $a\in\{0,1\}^{\ell+1}$ is the lexicographically $j$-th string (recall we view $(\ell+1)$-bit strings as row vectors in $\F_2^{\ell+1}$). If $y\neq 0^{\ell+1}$, then $H(y)$ has Hamming weight $2^\ell$.
Code~$x$ by the $n':=2^{2\ell+2}$-bit string $x':=H(p_x(y_1))\cdots H(p_x(y_{2^{\ell+1}}))$ where $y_1,\ldots, y_{2^{\ell+1}}$ lists $\F_{2^{\ell+1}}$. If~$x$ is non-zero, then $<k=2^\ell$ many $p_x(y_i)$ are 0 in $\F_{2^{\ell+1}}$. Then $x'$ has Hamming weight $> 2^{\ell}\cdot 2^{\ell+1}/2=2^{2\ell}=n'/4$. Note $n'\le (2^{\ell}(\ell+1))^2\le n^4$ for large~$n$.

Note the code is linear. Its matrix is computable in time $n^{O(1)}$ because  the code of $x$ can be computed in polynomial time.
\end{proof}

\subsection{Existence}
Recall, an $(n,m,\delta,\epsilon)$-distinguisher for $\epsilon\le 1/n$ is an $(m,n,\delta)$-code and we intend to trade larger $\epsilon$ for small weight. We start observing that such a trade-off is possible:

\begin{proposition}\label{prop:dist} 
There exist $c,d\in\N$ such that for every sufficiently large $n\in\N$  and for every real $\epsilon\le 1/(c\log n)$ there exists an $(n,dn,1/5,\epsilon)$-distinguisher of weight at most $1/\epsilon$.
\end{proposition}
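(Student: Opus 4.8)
The plan is to build the distinguisher by a probabilistic argument, sampling each column independently so as to control the column weight directly while exploiting $k$-wise independence to get the distance amplification. Fix $k := \lceil 1/\epsilon \rceil$ (so $k \le 2/\epsilon$, say, for $\epsilon$ small) and let $m := dn$ for a constant $d$ to be chosen. The key point is the following translation: for a fixed pair $x \neq y$ with $z := x + y \in \F_2^n$ of Hamming weight $w \ge \epsilon n$, and for a column vector $v \in \F_2^n$, the $j$-th coordinate of $xD + yD$ equals $\langle z, D^j\rangle = \sum_{i : z_i = 1} D^j_i$. So I want: for most $z$ of weight $\ge \epsilon n$, a random column $D^j$ satisfies $\langle z, D^j\rangle = 1$ with probability bounded below (this gives expected $\ge (\text{const})\cdot m$ ones in $zD$), and then concentration over the $m$ columns.

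The columns will be the columns of a $k$-uniform $n\times m$ matrix in the sense of the Preliminaries, truncated so that each column has small weight. Concretely: take the $k$-uniform matrix $X$ over $\F_{2^\ell}$ from Example~\ref{ex:ind} with $2^\ell \ge $ (a small power of $n$), which has $m = 2^{\ell k}$ columns; note $\log m = k\ell = O(k\log n) = O((1/\epsilon)\log n)$, but we actually want $m = O(n)$, so instead I truncate the field down to $\F_2$ (chop to one bit per entry) obtaining a $k$-uniform $0/1$ matrix $X$ with $m' = 2^{\ell k}$ columns; this is still too many. The fix, which is the real content, is to take $X$ to be a $k$-uniform matrix with only $m = dn$ columns — such matrices exist with $m$ polynomial in $n$ and $k$ when $k = O(\log n / \log\log n)$ is small, but here $k \approx 1/\epsilon$ may be as large as $\log n$, giving $m$ as large as $n^{O(1)}$. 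So I should instead follow the hint in the statement more literally: I do \emph{not} need full $k$-uniformity of the big matrix; I only need that, column by column, each column is a weight-$\le 1/\epsilon$ vector whose support is a \emph{$k$-wise independent-ish} random set, and that the columns are sufficiently independent of each other to apply a Chernoff-type bound. The cleanest route: let each column $D^j$ independently be the indicator vector of a uniformly random subset $S_j \subseteq [n]$ of size exactly $k = \lceil 1/\epsilon\rceil$. Then the weight of every column is exactly $k \le 1/\epsilon$ (rounding up; adjust constants so it is $\le 1/\epsilon$ as stated, or reabsorb the ceiling into $c$). For fixed $z$ of weight $w \ge \epsilon n$, $\langle z, D^j \rangle = 1$ iff $|S_j \cap \mathrm{supp}(z)|$ is odd; a standard computation shows $\Pr[|S_j \cap \mathrm{supp}(z)| \text{ odd}] \ge 1/3$ whenever $w \ge \epsilon n$ and $k \ge 1/\epsilon$ (here one uses $kw/n \gtrsim 1$ so the hypergeometric random variable $|S_j\cap\mathrm{supp}(z)|$ has constant probability of being odd — this is where $k \approx 1/\epsilon$ is exactly the right size). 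Since the $D^j$ are i.i.d., $d_H(xD,yD) = \sum_j \mathbf 1[\langle z,D^j\rangle = 1]$ is a sum of i.i.d.\ Bernoullis with mean $\ge 1/3$, so by a Chernoff bound $\Pr[d_H(xD,yD) < m/5] \le e^{-\Omega(m)}$.

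The final step is the union bound: there are fewer than $2^n \cdot 2^n = 4^n$ ordered pairs $(x,y)$, hence at most $4^n$ "bad events," each of probability $\le e^{-\Omega(m)} = e^{-\Omega(dn)}$. Choosing the constant $d$ large enough that $e^{-\Omega(dn)} \cdot 4^n \to 0$ (equivalently $d$ exceeds a fixed absolute constant) makes the probability that $D$ fails to be an $(n,dn,1/5,\epsilon)$-distinguisher strictly less than $1$, so such a $D$ exists; and by construction every column has weight $k = \lceil 1/\epsilon\rceil \le 1/\epsilon$ after we set $c$ so that the hypothesis $\epsilon \le 1/(c\log n)$ forces $k \le 1/\epsilon$ exactly (or we simply state the weight bound as $\lceil 1/\epsilon\rceil$, which is what Theorem~\ref{thm:dist} does anyway). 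I expect the main obstacle to be the odd-intersection estimate $\Pr[|S_j \cap \mathrm{supp}(z)|\ \text{odd}] \ge 1/3$: one must verify it uniformly over all $w$ in the range $[\epsilon n, n]$, including $w$ close to $n$, where the intersection size concentrates near $k$ and one needs $k$ itself not to be "accidentally" forced even — but since $S_j$ is random this is fine, and the bound is cleanest by writing the parity probability as $\tfrac12\bigl(1 - \prod_{i=0}^{k-1}\tfrac{n-w-i}{n-i}\bigr)\cdot(\text{correction})$ or, more simply, by comparing with the binomial case via the fact that sampling without replacement only decreases the variance. The constants $1/5$, $1/3$, $c$, $d$ are all slack and can be tuned at the end; none of the steps is delicate beyond that single hypergeometric parity bound.
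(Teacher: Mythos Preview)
Your overall architecture is exactly the paper's: choose the columns of $D$ independently at random with expected weight $\Theta(1/\epsilon)$, show that for each fixed $z=x+y$ of weight $\ge \epsilon n$ a random column hits parity~$1$ with constant probability, apply Chernoff across the $dn$ independent columns, and finish with a union bound over the $\le 2^n$ possible $z$. The paper's only difference is that it does \emph{not} fix the column size: each entry of a column is an independent $\mathrm{Bernoulli}(p)$ bit with $p=1/(2\epsilon n)$, and the weight bound $\le 1/\epsilon$ is obtained by a separate Chernoff step. The payoff is a one-line parity computation: $\Pr[\langle z,D^j\rangle=1]=\tfrac12\bigl(1-(1-2p)^{w}\bigr)\ge \tfrac12(1-e^{-1})>0.3$ whenever $w\ge \epsilon n$, with no case analysis.

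Your fixed-size variant has a real (if easily repaired) gap precisely at the point you flag and then wave away. If $w=n$ then $|S_j\cap\mathrm{supp}(z)|=|S_j|=k$ \emph{deterministically}, so ``since $S_j$ is random this is fine'' is false: when $k$ is even the parity is always~$0$ and the distinguisher fails on the all-ones vector. The fix is to take $k$ odd, but you should say so. More generally, the uniform lower bound $\Pr[|S_j\cap Z|\text{ odd}]\ge 1/3$ over the whole range $\epsilon n\le w\le n$ is not as automatic for the hypergeometric as you suggest; your proposed formula $\tfrac12(1-\prod_i\frac{n-w-i}{n-i})$ is the probability of a \emph{nonzero} intersection, not an odd one, and the ``correction'' you allude to is where the actual work sits. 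The paper's i.i.d.-bit sampling avoids this entirely via the closed form above, which is why that choice is cleaner.
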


\begin{proof}  Let $\Hw(y)$ denote the Hamming weight of $y\in\{0,1\}^n$.
The following is a biased random subset principle:\medskip

\noindent{\em Claim.} Let $0\le p\le1$ and $y\in\{0,1\}^n$. 
Let~$\b{x}$  be the random string in $\{0,1\}^n$ obtained by
independently, for each $i\in[n]$, setting the $i$-th bit to 1 with probability $p$ and to 0 with probability $(1-p)$. Then
\begin{equation}\label{eq:brsp}
\textstyle \Pr[y\b{x}^{\top}=1]=1/2- (1-2p)^{\Hw(y)}/2.
\end{equation}
Write $w:=\Hw(y)$ and $p_w$ for the r.h.s.~of \eqref{eq:brsp}. The claim is trivial for $w=0$.
For $w>0$ the claim follows by  induction on $w$:
\begin{eqnarray*}
&&\Pr[y\b{x}^{\top}=1]=p\cdot (1-p_{w-1})+ (1-p)\cdot p_{w-1}=(1-2p)p_{w-1}+p=p_w.
\end{eqnarray*}

We choose the constants $c,d$ in the course of proof. Fix $\epsilon\le 1/(c\log n)$. Set $p:=1/(2\epsilon n)$ 
and let $\b{x}_1,\ldots, \b{x}_{dn}$ be independent and distributed as~$\b{x}$ above. 
Then $\E[\Hw(\b{x}_j)]=1/(2\epsilon)$ for every $j\in[dn]$. By Chernoff, $\Hw(\b{x}_j)> 1/\epsilon$ with probability $<2^{-1/(d_1\epsilon)}$ for some constant~$d_1$ (independent of $d,c$). Then, with probability  $\ge 1-dn2^{-1/(d_1\epsilon)}$, all~$\b{x}_j$s have Hamming weight $\le 1/\epsilon$. Call this event~$E_0$. If we choose $c> d_1$, 
then  $\epsilon\le 1/(c\log n)$
implies $\Pr[E_0]>1/2$
for large enough $n$ (and all $d$).

For every $j\in[dn]$ and $y\in\{0,1\}^n$ of Hamming weight $w\ge \epsilon n$, we have by \eqref{eq:brsp}
$$
\Pr[y \b{x}_j^{\top}=1]\ge 1/2- e^{- w/(\epsilon n)}/2> 0.3.
$$
Hence, $\E[\sum_{j\in[dn]} y \b{x}_j^{\top}] > 0.3\cdot dn$. By the Chernoff Bound, $\sum_{j\in[dn]} y^{\top}\b{x}^j\ge 0.2\cdot dn$ with probability at least $1-2^{-dn/d_2}$ for some constant $d_2$ (independent of $c,d,d_1$). Choosing~$d> d_2$, this is $\ge 1-1/2\cdot 2^{-n}$. By the union bound, this holds for all  $y$ as above simultaneously with probability  $\ge 1/2$. Call this event $E_1$.
For large enough $n$, the event $E_0\cap E_1$ has positive probability. Let the $dn$ columns of~$D$ consist of corresponding realizations of the   $\b{x}_j$s.
\end{proof}


\subsection{Construction}\label{sec:constr}
For magnification we need explicit distinguishers. Theorem~\ref{thm:dist} follows from the following. 

\begin{theorem} \label{lem:distinguisher} For
  all sufficiently large naturals~$w,n$ and 
  all  reals $\epsilon,\delta >0$ satisfying the assumption
$\delta \leq (1-1/(\epsilon w))/4,$
 there exists an~$(m,n,\epsilon,\delta)$-distinguisher $D$ of weight at
  most~$w$, where~$m \leq n^{7}$. 
  
Moreover, there is an algorithm that given sufficiently large $w<n$ 
outputs in time polynomial in $n+w$ a binary matrix $D$ that is 
such a distinguisher simultaneously for all $\delta,\epsilon>0$ satisfying~$(1-1/(\epsilon w))/4$. 
\end{theorem}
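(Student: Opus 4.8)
The plan is to prove the existence statement by the probabilistic method — a variant of Proposition~\ref{prop:dist} with a different choice of bias — and then to make the construction explicit by derandomising it with the help of the error correcting code of Lemma~\ref{lem:code} and the bounded-independence matrices of Example~\ref{ex:ind}.

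For existence, fix $w,n$ and an admissible pair $(\epsilon,\delta)$, so that $\epsilon w>1$ and $\delta\le(1-1/(\epsilon w))/4$, and let $D$ be the $n\times m$ matrix with $m:=n^7$ (already a linear $m$ suffices here; the polynomial bound is needed only for the explicit part) whose $m$ columns are drawn independently, each column obtained by setting each of its $n$ entries to $1$ independently with probability $p:=w/(2n)$, so that each column has expected Hamming weight $w/2$. By~\eqref{eq:brsp}, for a fixed $z$ with $\Hw(z)=w'\ge\epsilon n$ a single column has odd inner product with $z$ with probability
$$
\tfrac12-\tfrac12(1-w/n)^{w'}\;\ge\;\tfrac12\bigl(1-e^{-ww'/n}\bigr)\;\ge\;\tfrac12\bigl(1-1/(\epsilon w)\bigr)\;=\;2\delta_0 ,
$$
where $\delta_0:=\tfrac14(1-1/(\epsilon w))\ge\delta$; the estimates use $1-x\le e^{-x}$, then $ww'/n\ge\epsilon w$, then $e^{-t}\le1/t$. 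Hence $\E[\Hw(zD)]\ge2\delta_0m\ge2\delta m$, so a Chernoff bound gives $\Pr[\Hw(zD)<\delta m]\le e^{-\delta_0m/4}$, and a second Chernoff bound gives that a fixed column has weight exceeding $w$ with probability at most $e^{-w/6}$. A union bound over the at most $2^n$ vectors $z$ with $\Hw(z)\ge\epsilon n$ and the $m$ columns then shows that, for $w,n$ large enough with $w<n$, with positive probability every column of $D$ has weight $\le w$ and $D$ is a distinguisher with the required parameters. (For very small $w$ one instead draws each column uniformly among subsets of $[n]$ of size $\le w$, so that the weight bound is automatic and the catch probability obeys a hypergeometric law for which the same estimate can be argued.)

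A single $D$ can in fact serve all admissible pairs at once: for a fixed $z$ the most demanding constraint is obtained at $\epsilon=\Hw(z)/n$ and $\delta=\tfrac14(1-n/(w\,\Hw(z)))_+$, which is vacuous once $\Hw(z)\le n/w$, and the display above shows each column catches such a $z$ with probability at least twice this $\delta$; the union bound with these per-$z$ thresholds still closes, now with $m=O(n^2/w)\le n^7$. This settles the simultaneity existentially; making it explicit is the remaining step.

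That explicit step is where I expect the main obstacle to lie. The union bound above runs over exponentially many test vectors $z$, so one cannot simply replace the random columns by draws from a polynomial-size pseudorandom family: bounded independence or small bias gives concentration far too weak to survive a $2^n$ union bound. The distinguishing property must instead be forced to hold \emph{structurally}, for all heavy $z$ simultaneously, the way the minimum distance of a code does, and the plan is to supply this structure from Lemma~\ref{lem:code}: one replaces the random columns by an explicit, weight-$\le w$ family built from the $O(n^4)$-length Reed--Solomon-plus-Hadamard code of Lemma~\ref{lem:code} together with a bounded-independence matrix as in Example~\ref{ex:ind}, with the code taking over the job that expander-walk sampling does in~\cite{sparse}; the bound $m\le n^7$ should then come out as roughly the $O(n^4)$ of Lemma~\ref{lem:code} times a polynomial contributed by the bounded-independence matrix, while the catch probability continues to meet the threshold $\delta\le(1-1/(\epsilon w))/4$. (Theorem~\ref{thm:dist} is then the instance $w=\ceil{2n^\epsilon}$ with input-distance threshold $n^{-\epsilon}$, where $\epsilon w\ge2$ makes $(1-1/(\epsilon w))/4\ge1/8$.)
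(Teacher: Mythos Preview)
Your existence argument is fine (and morally the content of Proposition~\ref{prop:dist}), but the theorem's substance is the explicit construction, and there your proposal stops at a vague plan that gets the architecture wrong. You speak of ``the $O(n^4)$-length code of Lemma~\ref{lem:code}'' as if the code were applied to the whole $n$-bit vector; it is not. The paper's construction is \emph{sample-then-encode}: using a $2$-uniform $w\times n^2$ matrix $X$ over $\F_n$ (Example~\ref{ex:ind}), a column index $j\in[n^2]$ selects $w$ positions $X^j(1),\ldots,X^j(w)$ of $x$, producing a $w$-bit string $y$; then one applies the code of Lemma~\ref{lem:code} to $y$ (a $w$-bit message, not an $n$-bit one), giving a codeword of length $w'\le w^4$; a second index $j'\in[w']$ picks one output bit. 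The distinguisher column indexed by $(j,j')$ is thus an $\F_2$-linear function of at most the $w$ sampled input positions, which is exactly what gives the weight bound. The number of columns is $m=n^2\cdot w'\le n^6$ (and $\le n^7$ after reducing to powers of two), not ``$n^4$ times a polynomial''.

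The analysis is then a two-line calculation with no union bound over $2^n$ vectors: if $\Hw(x)\ge\epsilon n$, pairwise independence and Chebyshev give $\Pr[y=0^w]\le 1/(\epsilon w)$; conditioned on $y\neq 0^w$, the code's relative distance $1/4$ gives $\Pr[(yC)_{j'}=1]\ge 1/4$. Multiplying yields $\ge(1-1/(\epsilon w))/4\ge\delta$, which is precisely the stated threshold. Note this works \emph{simultaneously} for all admissible $(\epsilon,\delta)$ because the matrix $D$ does not depend on them; your existence argument needs a separate per-$z$ union bound to get simultaneity, and your derandomisation sketch gives no mechanism for it. The missing idea, then, is not ``replace random columns by pseudorandom ones'' but rather to change what a column \emph{is}: a pairwise-independent sample of $w$ coordinates composed with one coordinate of a short code.
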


\begin{proof} We can assume $w< n$: otherwise $D:=C$ from Lemma~\ref{lem:code} is a
$(m,n,\delta,\epsilon)$-distinguisher for all $\epsilon>0$ and $\delta\le 1/4$. Note $m\le n^4$ and, trivially, $D$ has weight $\le n\le w$.

We first assume $n$ is a power of 2 and later remove this assumption. 
Let $w< n$ be sufficiently large so that Lemma~\ref{lem:code} applies and gives
 a $(w',w,1/4)$-code $C$ with $w'\le w^4$. Let 
$\epsilon,\delta>0$ be reals satisfying the assumption $(1-1/(\epsilon w))/4$. 
 
The idea is to define a randomized map on $n$-bit strings $x$ as follows: sample $w$ many positions in the string; this determines a $w$-bit string~$y$; apply the code above and output a random bit of the result. This outputs 1 with probability $\ge 1/4$ in the event that $y\neq 0^w$; 
 and this is likely if
$w$ is large enough compared to the Hamming weight of~$x$. We shall show that
the map is implemented by a matrix with a column for each of the random choices. To bound the number of these columns we sample the $w$ many positions not uniformly but only with pairwise independence. Details follow.

Let $X$ be a 2-uniform $w\times n^2$ matrix over $\F_n$ (Example~\ref{ex:ind}). View $\F_n$ with universe~$[n]$.
 Write   the $j$-th column $X^j$ of $X$ as $X^j(1)\cdots X^j(w)$.
 Our randomized map is defined as follows given $x=x(1)\cdots x(n)$:
\begin{enumerate}\itemsep=0pt
\item  sample $j\in[n^2]$  u.a.r.
\item set $y:= y(1)\cdots y(w)$ where $y(i):=x(X^j(i))$ for $i=1,\ldots, w$
\item sample $j'\in[w']$  u.a.r.
\item output the $j'$-th bit of $y C$. 
\end{enumerate}
Note the output for  choices $j$ in line 1 and $j'$ in line 3 equals 
$$
\sum_{i\in[w]} C_{ij'}y(i)=\sum_{i\in[w]} C_{ij'}x(X^j(i))=\sum_{p\in[n]}\sum_{\substack{i\in[w]\\X^j(i)=p}}C_{ij'}x(p)
$$
where the arithmetic is in $\F_2$. Define a binary $n\times n^2w'$-matrix $D$ as follows. We index rows by numbers $p\in[n]$ and  columns by pairs $(j,j')\in [n^2]\times[w']$. Define
\begin{equation}\label{eq:Dentry}
D_{(p,(j,j'))}:=\sum_{\substack{i\in[w]: X^j(i)=p}}C_{ij'}.
\end{equation}
Thus  our randomized
 map outputs a random bit of $x D$. 
For fixed $(j,j')$, the sum \eqref{eq:Dentry} is empty and hence $D_{(p,(j,j'))}= 0$ for all but $\le w$ many $p\in[n]$.
 Thus, $D$ has weight  $\le w$.

Note $m:= n^2\cdot w'\le n^6$. 
To show $D$ is an $(m,n,\delta,\epsilon)$-distinguisher
we show that if~$x$ has Hamming weight $\ge \epsilon n$, then the randomized map outputs~$1$ with probability
$$\ge(1-1/(\epsilon w))/4\ge  \delta.$$

Let the random variable $\b{h}$ be the Hamming weight of the 
 intermediate $w$-bit string $y$. It is the sum of $w$ many pairwise independent  indicator variables 
 each with expectation $\ge \epsilon$.
 By Chebychev's inequality
 $$
 \Pr[\b{h}=0] \leq \mathrm{Var}[\b{h}]/\mathbb{E}[\b{h}]^2 \leq 1/\mathbb{E}[\b{h}]\le 1/(\epsilon w)
$$
since~$\mathrm{Var}[\bf{h}] \leq \mathbb{E}[\bf{h}]$ by pairwise
independence. Thus, $y\neq 0^w$ with probability  $\ge 1-1/(\epsilon w)$. In this case
$y C$ has  Hamming weight  $\ge w'/4$.

Finally, assume~$n$ is not a power of two. Let~$k$ be such that~$2^{k-1} < n < 2^k$. Consider the
distinguisher~$D'$ just shown to exist with
parameters~$n$,~$\epsilon$,~$\delta,w$ reset to~$n' =
2^k$,~$\delta' = \delta$,~$\epsilon' = \epsilon/2,w' = 2w$ --
note~$\delta' \leq (1-1/(\epsilon'w')) /4$
by assumption $(1-1/(\epsilon w))/4$.
$D'$ is a $(n',m',\delta',\epsilon')$-distinguisher with
$m'\le (n')^6\le n^7$ for $n\ge 2^6$.
Let~$D$ be~$D'$ with its
last~$n'-n$ rows removed. If $x\in\{0,1\}^n$  has Hamming weight
$\ge \epsilon n$, then $x0^{n'-n}$ has Hamming weight $\ge \epsilon' n'$. Then 
the Hamming weight of $x D$ equals that of $(x0^{n'-n}) D'$ which is 
$\ge \delta'n'\ge \delta n$.

For the moreover-part, note that the definition of $D$ does not depend on $\delta,\epsilon$. The polynomial time computability  is clear by \eqref{eq:Dentry}, Example~\ref{ex:ind}, and Lemma~\ref{lem:code}.
\end{proof}


We do not know whether one can replace $n^7$ by $O(n)$ in Theorem~\ref{thm:dist}. This is a matter of no concern for our applications.

\subsection{Strongly explicit distinguishers}\label{sec:strongexplicit}

It is possible to modify our distinguisher construction to make it strongly explicit. 
Instead,
we give an alternative construction from scratch by elaborating Naor and Naor's seminal work \cite{nn}.
The wording ``distinguishing'' is from this work.

\begin{theorem}\label{thm:strdist} 
For every large enough $n$ and every rational $0<\epsilon<1$ there exists 
$m\le n^{11}$ and 
an $(n,m,1/16,\epsilon)$-distinguisher  $D$
of weight $\le 8\log(2n)/\epsilon$. 

Moreover, $D$ is {\em strongly explicit:} there is a polynomial time algorithm that given  $n,\epsilon$ as above, and  numbers $i\in[n],j\in[m]$ rejects if $i\notin[n]$ or $j\notin[m]$, 
and otherwise 
outputs~$D_{ij}$.

\end{theorem}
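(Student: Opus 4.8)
The plan is to build $D$ one column at a time by derandomizing the biased-random-subset idea of Proposition~\ref{prop:dist} with Naor and Naor's small-bias sample spaces. First dispose of the easy regime: when $8\log(2n)/\epsilon\ge n$ the target weight bound is vacuous, and since $\Hw(z)\ge\epsilon n$ then merely says $z\neq 0$, the strongly explicit Reed--Solomon$\circ$Hadamard code of Lemma~\ref{lem:code} (relative distance $>1/4\ge 1/16$, column weight $\le n$, $m\le n^4$) already is such a $D$. So assume $8\log(2n)/\epsilon<n$ and set $r:=\floor{8\log(2n)/\epsilon}$.

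The columns of $D$ are indexed by pairs $(a,b)$: $a\in\F_p^2$ is the seed of an affine, hence pairwise-independent, hash $g_a\colon[r]\to\F_p$, where $p$ is a prime with $n\le p<2n$; and $b=(\alpha,\beta)\in\F_{2^\ell}^2$, $\ell:=\ceil{\log(8n)}$, indexes the $i$-th bit $\langle\alpha^{i-1},\beta\rangle$ (inner product over $\F_2$) of a sample $s_b\in\{0,1\}^n$ from Naor--Naor's $\gamma$-biased ``powering'' space with $\gamma:=1/8$. Let $\mathrm{mask}_a\in\{0,1\}^n$ be the indicator of $\{g_a(1),\dots,g_a(r)\}\cap[n]$, so that $\Hw(\mathrm{mask}_a)\le r\le 8\log(2n)/\epsilon$ for every $a$, and let the column indexed by $(a,b)$ be the coordinatewise AND $\mathrm{mask}_a\wedge s_b$; it too has weight at most $r$. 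Counting seeds, $m=p^2\cdot2^{2\ell}=n^{O(1)}$, and bounding crudely gives $m\le n^{11}$.

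By linearity over $\F_2$ it suffices to check that for every $z$ with $\Hw(z)\ge\epsilon n$ at least $m/16$ columns $(a,b)$ satisfy $\langle z,\mathrm{mask}_a\wedge s_b\rangle=1$; writing $u_a:=z\wedge\mathrm{mask}_a$ this inner product is $\langle u_a,s_b\rangle$. For random $a$, the count $N_a:=|\{k\in[r]:g_a(k)\in\mathrm{supp}(z)\}|$ is a sum of $r$ pairwise-independent $0/1$ variables each of mean $\Hw(z)/p\ge\epsilon/2$, so $\E[N_a]\ge r\epsilon/2\ge 8$ (using $r\ge 16/\epsilon$, valid for large $n$) and $\Var[N_a]\le\E[N_a]$; Chebyshev gives $\Pr_a[u_a\neq 0]\ge\Pr_a[N_a\ge 1]\ge 1-1/\E[N_a]\ge 7/8$. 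Conditioned on $u_a\neq 0$, the $\gamma$-biased property yields $\Pr_b[\langle u_a,s_b\rangle=1]\ge(1-\gamma)/2$, so the fraction of good columns is $\ge(7/8)(1-\gamma)/2>1/16$. For strong explicitness, on input $n,\epsilon,i,j$ one decodes $j$ as $(a,b)$ (rejecting if out of range), decides $i\in\{g_a(1),\dots,g_a(r)\}$ — for affine $g_a=(a_1,a_2)$ this is a modular inversion $k=a_1^{-1}(i-a_2)$ and the test $k\in[r]$ (with the constant case $a_1=0$ handled directly) — and outputs that bit ANDed with $\langle\alpha^{i-1},\beta\rangle$, computed by repeated squaring in $\F_{2^\ell}$; all in time polynomial in $n$.

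The step I expect to be most delicate is obtaining a \emph{worst-case} weight bound together with the covering property: this is exactly why $\mathrm{mask}_a$ is the image of a fixed number $r$ of hash values rather than a genuinely $p$-biased string, whose Hamming weight one could control only in expectation (one would then discard the heavy columns and re-index, which is fine since $m$ is polynomial, but less clean). The remaining points are routine: making the hash land near-uniformly in $[n]$ (use a prime just above $n$ and ignore the few stray outputs), and tracking constants so that the two failure events -- the mask missing $\mathrm{supp}(z)$, and $s_b$ being biased on $u_a$ -- still leave a $1/16$ fraction of good columns; since $\delta=1/16$ and the exponent $11$ are both generous, we make no attempt to optimize. (Alternatively, one could make the construction of Theorem~\ref{lem:distinguisher} strongly explicit directly, its ingredients being strongly explicit already; the route above has the virtue of being self-contained.)
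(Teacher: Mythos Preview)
Your approach is correct and arguably cleaner than the paper's, but the two routes are genuinely different. The paper unfolds Naor--Naor's original small-bias construction: it runs over all dyadic scales $k\in K_0$, uses a $2$-uniform sampler $\b{v}$ to produce masks $\b{v}^k$ of weight $\le 2n/k$, and then applies a $7$-uniform string $\b{u}$, which works only because Claim~1 first pins $\Hw(y\wedge\b{v}^k)$ into $[1,7]$; the scales are then recombined by a random $\b{z}\in\{0,1\}^{K_0}$, and the weight bound $\le 4|K_0|/\epsilon$ comes from summing across scales. You instead fix a \emph{single} scale $r=\Theta(\log n/\epsilon)$, build one mask of weight $\le r$ via a pairwise-independent hash, and replace the $7$-wise step by a full $\gamma$-biased sample (the powering construction), which handles any nonzero $u_a$ without the upper weight control. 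This collapses the multi-scale bookkeeping and yields a smaller $m$ ($O(n^4)$ rather than $\Theta(n^{10})$); the paper's route, in exchange, is more self-contained in that it does not invoke small-bias spaces as a black box.

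One point to tighten: ``time polynomial in $n$'' is too weak for strong explicitness---the intended meaning (and what Theorem~\ref{thm:parityP} needs for the $\oplus\P$ step) is time polynomial in the input bit-length, i.e., $\mathrm{poly}(\log n,|\epsilon|)$. Your algorithm almost achieves this, except for the choice of a prime $p\in[n,2n)$: deterministically locating such a $p$ in $\mathrm{poly}(\log n)$ time is not known. The fix is immediate---replace $\F_p$ by $\F_{2^k}$ with $k=\lceil\log n\rceil$ (constructible via a deterministic irreducible polynomial in $\mathrm{poly}(k)$ time), embed $[r],[n]\subseteq\F_{2^k}$, and run the same affine hash and inversion argument; the expectation bound $\E[N_a]\ge r\epsilon/2$ survives since $2^k<2n$.
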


\begin{proof}
Assume first that $n$ is a power of two. We show later how to get rid of this assumption. Using  Example~\ref{ex:ind}, let $U$ be a  $7$-uniform matrix over $\F_2$ with $n$ rows, and let $V$
   be a $2$-uniform matrix over $\F_n$ with $2n$ rows. 
   More concretely,  
   $U$ is obtained 
from a 7-uniform $n\times \F_n^7$ matrix over $\F_n$ by chopping off all but the first bit of each entry;   
    $V$ is obtained from
a 2-uniform $2n\times \F_{2n}^2$ matrix $X$ over $\F_{2n}$ with $X_{i,(a,b)}:=\hat\imath\cdot a+b$ 
where $\hat \imath$ is the $i$-th element of $\F_{2n}$
in lexicographic order; recall, we represent~$\F_{2n}$ by $\{0,1\}^{\log (2n)}$; then $V$ over~$\F_n$ is obtained letting  $V_{i,(a,b)}$ be $X_{i,(a,b)}$ with the last bit chopped off.


Let $\b{u},\b{v}$ be independent and uniformly distributed columns of $U,V$, respectively. 
Let~$K$ be the set of powers of 2 below $n$. 
For  $k\in K$ define  a random string $\b{v}^k$ as the characteristic $n$-bit string of the 
subset of $[n]$ given by the first $2n/k$ components of $\b{v}$.
More formally,  for  $i\in[n]$, let $\hat \imath$ denote the $i$-th element 
of $\F_n$,  and define the $i$-th bit of $\b{v}^k$ by
\begin{equation}\label{eq:vki}
\b{v}^k_i:=\left\{ \begin{array}{ll} 1&\text{if there is }p\in[2n/k]:\b{v}_p=\hat \imath 
\\
0&\text{otherwise.}\end{array}\right.
\end{equation}
Note that $\b{v}^k$ has Hamming weight $\Hw(\b{v}^k)\le 2n/k$ with probability 1. For $x,y\in\{0,1\}^n$
let $x\wedge y\in\{0,1\}^n$ be the obtained by applying  $\wedge$ bitwise, i.e., the $i$-th bit is 1 if and only if both $x$ and $y$ have $i$-th bit 1.
 \medskip

\noindent{\em Claim 1:} $\Pr[1\le\Hw(y\wedge\b{v}^k)\le 7]\ge 1/4$ for all $y\in\{0,1\}^n$ with $k\le \Hw(y)\le 2 k$.\medskip

\noindent{\em Proof of Claim 1:} 
Write $w:=\Hw(y)$. Define $\b{h}_j^k:=y_{\b{v}_j}$ for $j\in[2n/k]$ and set  $\b{h}_k:=\sum_{j\in[2n/k]}\b{h}_j^k$. Then  $\E[\b{h}^k_j]=w/n$, so $\E[\b{h}_k]=2w/k$, and $\Var[\b{h}_k]\le\E[\b{h}_k]$
by pairwise independence. The event $\Hw(y\wedge\b{v}^k)=0$ implies $\b{h}_k=0$. By Chebychev,
the latter has probability $\le \Var[\b{h}_k]/\E[\b{h}_k]^2\le 1/\E[\b{h}_k]\le k/(2w)\le 1/2$.
Since $\b{h}_k\ge \Hw(y\wedge\b{v}^k)$ with probability 1 and  $2w/k\le 4$ the event
$\Hw(y\wedge\b{v}^k)\ge 8$ implies $\b{h}_k\ge 8$ and hence $ |\b{h}_k-2w/k|\ge 4$. By Chebychev, 
 this has probability  $\le \Var[\b{h}_k]/16\le \E[\b{h}_k]/16\le 2w/(16k)\le1/4$.
\hfill$\dashv$\medskip

Define
$
\b{r}^k:=\b{v}^k\wedge\b{u}.
$\medskip

\noindent{\em Claim 2:} $\Pr[y(\b{r}^k)^{\top}=1]\ge 1/8$ for all $y\in\{0,1\}^n$ with $k\le \Hw(y)\le 2 k$.\medskip

\noindent{\em Proof of  Claim 2:} If $z\in\{0,1\}^n$ and $1\le\Hw(z)\le 7$, then $\Pr[z\b{u}^{\top}=1]= 1/2$. Indeed, let 
$I:=\{i\in[n]\mid z_i=1\}$; then $\Pr[z\b{u}^{\top}=1]=\Pr[\Hw(\b{u}_I)\text{ is odd}]$ and this equals $1/2$ because
$\b{u}_I$ is uniformly distributed in $\{0,1\}^{|I|}$.

Since $\Pr[y(\b{r}^k)^{\top}=1]=\Pr[(y\wedge\b{v}^k)\b{u}^{\top}=1]$ and $\b{v}^k$ and $\b{u}$ are independent,  Claim 2 follows from Claim 1.
\hfill$\dashv$\medskip

We can assume that $\epsilon\ge 1/n$.
Let $k_0$ be the maximal element in $K$ that is $\le \epsilon n$, and let $K_0:=\{k\in K\mid  k\ge k_0\}$. Note $k_0\ge \epsilon n/2$. Hence, for every $k\in K_0$ we have
 $2/\epsilon\ge n/k$ and $\Hw(\b{r}^k)\le \Hw(\b{v}^k)\le 2n/k\le 4/\epsilon $ with probability 1.

Let $\b{R}$ be the random $[n]\times K_0$ matrix whose column with index $k\in K_0$ is $\b{r}^{k}$. Let $y\in\{0,1\}^n$ with $\Hw(y)\ge\epsilon n$. Choose $k\in K_0$ such that $k\le \Hw(y)\le 2k$. 
Thus $y(\b{r}^k)^{\top}=1$ and hence $y\b{R}\neq 0$ with probability $\ge 1/8$ by  Claim 2. For~$\b{z}$ uniformly distributed in $\{0,1\}^{K_0}$ (binary vectors indexed by $K_0$) and independent of~$\b{u},\b{v}$ and hence from $\b{r}^k$, we thus have 
$(y\b{R})\b{z}^{\top}=y (\b{R} \b{z}^{\top})=1$
with probability $\ge1/2\cdot 1/8=1/16$.
Further note that, with probability 1,  
\begin{equation}\label{eq:weight}
\Hw(\b{R}\b{z}^{\top})\le |K_0|\cdot 4/\epsilon\le\log n\cdot 4/\epsilon.
\end{equation}

Observe that  $\b{R}\b{z}^{\top}=g(\b{u},\b{v},\b{z})$ for some function $g$. Let the matrix~$D$ have a column  $g(u,v,z)$ for every column  $u$ of $U$, column $v$ of $ V$, and $z\in\{0,1\}^{K_0}$. 
These are 
\begin{equation}\label{eq:m}
m:=|U|\cdot|V|\cdot 2^{|K_0|}= n^7\cdot (2n)^2\cdot n
\end{equation}
many columns.
For $x\in\{0,1\}^n$ with
$\Hw(x)\ge \epsilon n$ we have $\Hw(xD)\ge m/16$. 
Thus,  $D$ is a $(n,m,\epsilon,1/16)$-distinguisher of weight $\le \log n\cdot 4/\epsilon$.

We check that $D$ is strongly explicit. 
Given a row index $i$ and a column index  $j$ we want to compute $D_{ij}$. The column index $j$ determines  the indices of  columns $u,v$ of $U,V$ and a string $z\in\{0,1\}^{K_0}$. The column index of $V$ is $(a,b)$ for some $a,b\in \F_{2n}$. 
We have $D_{ij}$ is the $i$-th entry in $\sum_{k\in K_0:z_k=1} v^k\wedge u$ (sum in $\F_2^n$).
Test whether $u_i=0$ in polynomial time given $i$ and the column  index of $u$ (by evaluating a degree 7 polynomial in $\F_n$). 
If this is the case, then $D_{ij}=0$. Otherwise, $D_{ij}=\sum_{k\in K_0:z_k=1} v^k_i\mod 2$. To determine 
the bits $v^k_i$ recall~\eqref{eq:vki}: check whether there is 
$p\in[2n/k]$ such that 
$v_p=\hat\imath$; recall $\hat\imath\in\{0,1\}^{\log n}$.  For this, compute the $2$ solutions $p\in\F_{2n}$ to the 2 equations  $p\cdot a+b=\hat\imath c$ where $c$ is a bit.

This proves the theorem in the case $n$ is a power of 2. For arbitrary $n$ let $n\le  n'< 2n$ be a power of two, and  $D'$ be a distinguisher for $n'$ as constructed above but for $\epsilon':=\epsilon/2$. Let $D$ consist of the first $n$ rows of $\tilde D$. If $x\in\{0,1\}^n$ has Hamming weight $\ge \epsilon n$, then $\Hw(x0^{n'-n})\ge \epsilon'n'$, so $\Hw(xD)=\Hw((x0^{n'-n})D')\ge n'/16\ge n/16$. Hence, $D$  is a $(n,m,1/16,\epsilon)$-distinguisher. By \eqref{eq:weight}, its weight is $4\log n'/\epsilon'\le 8\log(2n)/\epsilon$.
By \eqref{eq:m} is has $\le n^{11}$ columns for large enough $n$.
\end{proof}

\section{General magnification}\label{sec:mag}

In this section we fix 
\begin{enumerate}
\item[--] $m,w,r:\N\to\N, \ \epsilon,\delta:\N\to[0,1], \ q:\N\to \R$; assume $m(n)\ge n$ for all $n\in\N$.\item[--] $D$ is a function that maps  every sufficiently large natural $n$ to an 
$(n,m(n),\delta(n),\epsilon(n))$-distinguisher $D(n)$ of weight $w(n)$;
\item[--] $Q\subseteq\{0,1\}^*$ is $2^{q(n)}$-sparse.
\end{enumerate}
When $n$ is clear from context we shall often write $m=m(n), w=w(n)$ etc.

\subsection{The kernel}

\begin{definition}\label{df:K} For $x\in\{0,1\}^n$ and $u=(u_1,\ldots,u_{r(n)})\in[m(n)]^{r(n)}$ define
$$
k(x,u):=\langle n,u,b_1,\ldots, b_{r(n)} \rangle
$$
where $b_j=(x D)_{u_j}$ for all $j\in[r(n)]$. 
Define 
\begin{equation*}
K:=K(Q,D,r):=\big\{k(x,u)\mid x\in Q, u\in[m(|x|)]^{r(|x|)}\big\}.
\end{equation*}
\end{definition}

Note that, using some straightforward encoding, $k(x,u)$ is a binary string of length 
independent of $x$ and at most 
$$
10r(n)\log m(n).
$$

\begin{lemma}\label{lem:kernel}
Let $n\in\N, x\in\{0,1\}^n$ and $\b{u}$ be uniform in $[m]^{r}$.  Then
\begin{enumerate}\itemsep=0pt
\item[(a)] if $x\in Q$, then $\Pr[k(x,\b{u})\in K]=1$;
\item[(b)] if $x$ is a NO instance of $\epsilon$-$Q$, then 
$\Pr[k(x,\b{u})\in K]\le 2^q(1-\delta)^r.$
\end{enumerate}
\end{lemma}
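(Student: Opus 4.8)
The plan is to prove the two items separately, both by direct computation with the distinguisher property.

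For (a): if $x \in Q$, then for every $u \in [m]^r$ we have $k(x,u) = \langle n, u, (xD)_{u_1}, \ldots, (xD)_{u_r}\rangle$, which by the very definition of $K$ is an element of $K$ (take the witnessing $Q$-element to be $x$ itself). Since this holds for every $u$, it holds in particular for every value of $\b{u}$, so $\Pr[k(x,\b{u}) \in K] = 1$. This is immediate from Definition~\ref{df:K}.

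For (b): suppose $x \in \{0,1\}^n$ is a NO instance of $\epsilon$-$Q$, i.e., $d_H(x,y) \ge \epsilon(n) \cdot n$ for every $y \in Q$. I would first observe that $k(x,\b{u}) \in K$ forces $\langle n, \b{u}, b_1, \ldots, b_r\rangle = k(y,\b{u})$ for some $y \in Q$ of length $n$ (the first component pins down the length, and the structure of the encoding pins down that the remaining data has the form $k(y,\b{u})$ for this same $\b{u}$). This means $(xD)_{\b{u}_j} = (yD)_{\b{u}_j}$ for all $j \in [r]$. Now fix any $y \in Q \cap \{0,1\}^n$. Since $d_H(x,y) \ge \epsilon(n)\cdot n$, the distinguisher property of $D = D(n)$ gives $d_H(xD, yD) \ge \delta(n)\cdot m$. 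Hence the set $S_y := \{i \in [m] : (xD)_i \ne (yD)_i\}$ has size at least $\delta m$, so for a uniform $\b{j} \in [m]$ we have $\Pr[(xD)_{\b{j}} = (yD)_{\b{j}}] \le 1 - \delta$, and by independence of the $r$ coordinates of $\b{u}$, $\Pr[(xD)_{\b{u}_j} = (yD)_{\b{u}_j} \text{ for all } j \in [r]] \le (1-\delta)^r$. Finally, a union bound over the at most $2^{q(n)}$ many $y \in Q \cap \{0,1\}^n$ (here is where sparsity enters) gives
\[
\Pr[k(x,\b{u}) \in K] \le \sum_{y \in Q \cap \{0,1\}^n} \Pr[(xD)_{\b{u}_j} = (yD)_{\b{u}_j} \text{ for all } j] \le 2^q (1-\delta)^r,
\]
as claimed.

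The only genuinely delicate point is the bookkeeping in the reduction "$k(x,\b{u}) \in K$ implies the fingerprint bits agree with those of some fixed $y \in Q$ of the same length for the same $\b{u}$" — one must check that the encoding $\langle \cdot \rangle$ really does separate the length, the index tuple $u$, and the bit string $b_1 \cdots b_r$, so that equality of encoded strings forces equality componentwise. This is routine given the "straightforward encoding" fixed before the lemma, but it is the place where a careless argument could go wrong; everything else is the distinguisher inequality plus a union bound over the sparse set.
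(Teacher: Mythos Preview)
Your proof is correct and follows essentially the same approach as the paper: (a) is immediate from the definition of $K$, and (b) uses the distinguisher inequality to bound the probability that the fingerprint of $x$ agrees with that of a fixed $y\in Q\cap\{0,1\}^n$, followed by a union bound over the at most $2^q$ such $y$. The paper's proof is terser but the logic and the ``delicate point'' about the encoding pinning down $n$ and $u$ are the same.
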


\begin{proof} (a) is obvious. For (b), note $k(x,u)=k(y,v)$ implies $u=v$ and $|y|=n$ and $(x D)_{u_j}=(yD)_{u_j}$ for all $j\in[r]$. For $y\in Q$, $d_H(x,y)\ge \epsilon n$, so $d_H(xD,yD)\ge \delta m$. Hence, the event that $k(x,\b{u})=k(y,v)$ for some $v$ has probability at most $(1-\delta)^r$. A union bound over all $\le 2^q$ many $y\in Q\cap\{0,1\}^n$ gives the claim.
\end{proof}

\begin{lemma}\label{lem:KNP} Assume $r,D$ are computable in time polynomial in $n$, and $r(n)\ge n^{\Omega(1)}$. If $Q\in\NP$, then $K\in\NP$.\end{lemma}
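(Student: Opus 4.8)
The plan is to verify that $K = K(Q,D,r)$ is in $\NP$ by exhibiting a standard nondeterministic polynomial-time procedure that, on input a string $z$, guesses a witness and checks membership. First I would recall the shape of an element of $K$: by Definition~\ref{df:K} it is a string $\langle n, u, b_1,\ldots, b_{r(n)}\rangle$ where $n$ is encoded in binary, $u = (u_1,\ldots, u_{r(n)}) \in [m(n)]^{r(n)}$, and each $b_j \in \{0,1\}$. The key point is that the total length of such a string is $\Theta(r(n)\log m(n))$, which by the hypothesis $r(n) \ge n^{\Omega(1)}$ (and $m(n) \le \mathrm{poly}(n)$, since $D$ is polynomial-time computable so in particular $m(n)$ is at most exponential in the running time, but actually $m(n)$ just needs to be polynomially bounded — this follows because the distinguisher is output by a polynomial time algorithm, hence has polynomially many columns) is polynomially related to $n$. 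So from the input length $|z|$ one can recover a polynomial bound on $n$, and conversely $n$ is read directly off the input; this makes the quantities $m(n), w(n), r(n)$ and the matrix $D(n)$ all polynomial-time computable from $z$.

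The nondeterministic algorithm on input $z$ proceeds as follows. It first checks that $z$ parses correctly as $\langle n, u, b_1, \ldots, b_{r(n)}\rangle$ with $u \in [m(n)]^{r(n)}$ and $b_j \in \{0,1\}$, rejecting otherwise; this is a deterministic polynomial-time syntactic check using the polynomial-time computability of $r$ and $m$. Next it nondeterministically guesses a string $x \in \{0,1\}^n$ together with an $\NP$-witness $\pi$ for the statement $x \in Q$; since $Q \in \NP$ and $n$ is polynomially bounded in $|z|$, the string $x$ and the witness $\pi$ both have length polynomial in $|z|$, and verifying $x \in Q$ using $\pi$ takes polynomial time. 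Then the algorithm computes the matrix $D = D(n)$ in polynomial time, computes the row vector $xD \in \F_2^{m(n)}$ (a product of a length-$n$ vector with an $n \times m(n)$ binary matrix, hence polynomial time), reads off the entries $(xD)_{u_1}, \ldots, (xD)_{u_{r(n)}}$, and accepts if and only if $(xD)_{u_j} = b_j$ for all $j \in [r(n)]$. By the definition of $k(x,u)$ and of $K$, some computation path accepts $z$ if and only if there exist $x \in Q \cap \{0,1\}^n$ and $u \in [m(n)]^{r(n)}$ with $k(x,u) = z$, i.e. if and only if $z \in K$. Hence $K \in \NP$.

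I do not expect a genuine obstacle here; the lemma is essentially a bookkeeping statement. The one place that needs a moment's care is the size accounting: one must confirm that $n$ is not super-polynomially large compared to $|z|$, so that guessing $x$ of length $n$ and running the $\NP$-verifier for $Q$ on it stays within polynomial time. This is exactly what the hypothesis $r(n) \ge n^{\Omega(1)}$ buys us, since $|z| \ge 10\, r(n)\log m(n) \ge r(n) \ge n^{\Omega(1)}$ gives $n \le |z|^{O(1)}$. (If one wanted to be fully careful one would also note $n$ is read directly from the input encoding, so there is no ambiguity about which $n$ is meant.) The polynomial-time computability of $D$ and $r$ assumed in the statement, together with $m(n) \ge n$ and the polynomial column bound inherent in $D$ being algorithmically generated, take care of everything else. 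So the proof is just the routine assembly described above.
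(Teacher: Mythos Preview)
Your proof is correct and follows essentially the same approach as the paper: parse the input, use $r(n)\ge n^{\Omega(1)}$ to bound $n$ polynomially in the input length, compute $r,D,m$, guess $x\in\{0,1\}^n$, verify $x\in Q$ nondeterministically, and check the bits $b_j=(xD)_{u_j}$. One small slip: you write $|z|\ge 10\,r(n)\log m(n)$, but that quantity is an \emph{upper} bound on $|z|$; the lower bound $|z|\ge r(n)$ (from the $r(n)$ bits $b_j$ alone) is what you actually need, and the paper correspondingly first checks that the number of $b$-bits is at least $n^{1/c}$ before computing $r(n)$.
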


\begin{proof} Choose $c\in\N$ such that $r(n)\ge n^{1/c}$ for sufficiently large $n$. Given an input $y$ check it has the form $\langle n,u,b_1,\ldots,b_s\rangle$ for some $s\in\N$ with $s\ge n^{1/c}$. If the check fails, reject. Otherwise $n\le |y|^{c}$, so $r:=r(n), D:=D(n),m:=m(n)$ can be computed in time polynomial in $|y|$. Check $s=r$ and $u\in[m]^{r}$. If the check fails, reject. Otherwise guess $x\in\{0,1\}^n$ and verify $x\in Q$ in nondeterministic time polynomial in $n$, hence in~$|y|$. Verify $b_j=(x D)_{u_j}$ for all $j\in[r]$.\end{proof}

\subsection{Small formulas with local oracles}
\label{sec:localform}

The {\em locality barrier} from \cite{beyond} states that many known circuit lower bounds methods stay true when the circuits are allowed {\em local} oracles, i.e., of small fan-in. ``The fact that existing magnification theorems produce such circuits is a consequence of the algorithmic nature of the underlying proofs'' \cite[Appendix A.2]{beyond}. To stress this aspect we isolate the construction of formulas with local oracles as an own lemma. 

We consider formulas of the form 
$$
\AND_{a}\circ K_{b}\circ \XOR_{c},
$$ 
where $a,b,c\in\R_{\ge 0}$. Such a formula has  a top $\AND$-gate which receives  $\le a$ inputs from  $K$-oracle gates which receive   $\le b$  inputs from $\XOR$-gates which receive  $\le c$ inputs from the input gates and the constant 1; each inner gate has fan-out $\le 1$.

That a promise problem $Q$ is {\em decided by a probabilistic formula of the form $K_{b}\circ \XOR_{c}$ on input length $n$} means that there is a random variable $\b{F}$ whose values are formulas of the form $K_{b}\circ \XOR_{c}$ and such that $\Pr[\b{F}(x)=1]=1$ for all YES instances $x$ of length $n$, and $\Pr[\b{F}(x)=1]\le 1/4$ for all NO instances $x$ of length $n$.

\begin{lemma}\label{lem:localform}  Let $K:=K(Q,D,r)$. 
\begin{enumerate}\itemsep=0pt
\item[(a)] If  $n$ is sufficiently large and $b:\N\to\N$ 
satisfies
\begin{equation}\label{eq:form}
b(n)r(n)\delta(n) - b(n)q(n)\ge n,
\end{equation}
 then $\epsilon$-$Q$ on input length $n$ is decided by a formula of the form $$ \AND_{b(n)}\circ K_{10r(n)\log m(n)}\circ \XOR_{w(n)}.$$
\item[(b)] If $n$ is sufficiently large and
\begin{equation}\label{eq:probform}
r(n)\delta(n) - q(n)\ge 2,
\end{equation}
then $\epsilon$-$Q$  on  input length $n$ is decided by a probabilistic formula of the form 
$$K_{10r(n)\log m(n)}\circ \XOR_{w(n)}.$$
 \end{enumerate}

\end{lemma}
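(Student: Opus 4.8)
The plan is to prove both parts by combining the probabilistic analysis of the kernel map from Lemma~\ref{lem:kernel} with the structural observation that $k(x,u)$ is computed from $x$ by a formula of the stated shape. First I would unpack the shape claim. Fix a large $n$ and a tuple $u=(u_1,\ldots,u_r)\in[m]^r$. By Definition~\ref{df:K}, $k(x,u)=\langle n,u,b_1,\ldots,b_r\rangle$ where $b_j=(xD)_{u_j}=\sum_{i\in[n]} D_{i u_j} x_i$ over $\F_2$. Since $D$ has weight $w=w(n)$, each column of $D$ has at most $w$ nonzero entries, so each $b_j$ is an $\XOR$ of at most $w$ input bits; the components $n$ and $u$ of $k(x,u)$ are constants not depending on $x$. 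Hence for each fixed $u$ there is a formula $G_u$ of the form $K_{10r\log m}\circ\XOR_w$ — a single $K$-oracle gate of fan-in $\le 10r(n)\log m(n)$ (the length bound on $k(x,u)$) fed by $\le r$ many $\XOR$-gates of fan-in $\le w$ (plus hardwired constant inputs) — such that $G_u(x)=1$ iff $k(x,u)\in K$.

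For part (b) I would then take $\b{u}$ uniform in $[m]^r$ and let $\b{F}:=G_{\b{u}}$ be the resulting random formula. By Lemma~\ref{lem:kernel}(a), if $x$ is a YES instance of $\epsilon$-$Q$ of length $n$ then $x\in Q$, so $\Pr[\b{F}(x)=1]=\Pr[k(x,\b{u})\in K]=1$. If $x$ is a NO instance then by Lemma~\ref{lem:kernel}(b), $\Pr[\b{F}(x)=1]\le 2^{q}(1-\delta)^r$. Using $1-\delta\le e^{-\delta}$ this is $\le 2^{q-r\delta/\ln 2}\le 2^{q-r\delta}$, which is $\le 2^{-2}=1/4$ exactly under hypothesis \eqref{eq:probform}, namely $r(n)\delta(n)-q(n)\ge 2$. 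This gives a probabilistic formula of the form $K_{10r\log m}\circ\XOR_w$ deciding $\epsilon$-$Q$ on input length $n$, as claimed. (If one wants the slightly cleaner $2^{q}(1-\delta)^r\le 1/4$ directly, one can just invoke $1-\delta\le e^{-\delta}$ and $1/\ln 2<2$; I would state this inequality explicitly.)

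For part (a) I would derandomize by a union/covering argument: instead of picking one random $u$, take an $\AND$ over several well-chosen tuples. Choose a set $S\subseteq[m]^r$ of $|S|\le b:=b(n)$ tuples such that for every NO instance $x\in\{0,1\}^n$ there is some $u\in S$ with $k(x,u)\notin K$; then $\AND_{u\in S} G_u$ outputs $1$ on every YES instance (each $G_u$ does, by Lemma~\ref{lem:kernel}(a)) and $0$ on every NO instance (some $G_u$ does), and collapsing the $\le b$ top gates of the $G_u$ into one $\AND$-gate of fan-in $\le b$ yields a formula of the form $\AND_b\circ K_{10r\log m}\circ\XOR_w$. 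To produce such an $S$, I would use the probabilistic method with independent uniform tuples $\b{u}^{(1)},\ldots,\b{u}^{(b)}$: for a fixed NO instance $x$, $\Pr[\forall t\ k(x,\b{u}^{(t)})\in K]\le (2^{q}(1-\delta)^r)^b\le 2^{b(q-r\delta)}$ by Lemma~\ref{lem:kernel}(b) and independence, and a union bound over the $\le 2^n$ NO instances of length $n$ gives failure probability $\le 2^{n+b(q-r\delta)}=2^{n-(br\delta-bq)}$, which is $<1$ precisely when $br\delta-bq> n$, i.e. under hypothesis \eqref{eq:form}. Hence a good $S$ exists.

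The main obstacle is essentially bookkeeping rather than a genuine difficulty: one must be careful that the fan-in bounds are exactly as stated — that the $K$-oracle gate fan-in is the length bound $10r(n)\log m(n)$ on $k(x,u)$ recorded after Definition~\ref{df:K}, that the constant bits $n,u$ and the hardwired $1$'s feeding the $\XOR$-gates are accounted for, and that merging the top gates of the $b$ copies of $G_u$ is legitimate for formulas of bounded fan-out (it is, since each $G_u$ has fan-out $\le 1$ and we only add one new gate of fan-out $\le 1$). A secondary point to get right is the precise form of the Chernoff-free estimate $2^q(1-\delta)^r\le 2^{q-r\delta}$: I would simply invoke $\ln(1/(1-\delta))\ge\delta$, valid for $0\le\delta\le 1$, so that no probabilistic tail bound is needed and the thresholds \eqref{eq:form} and \eqref{eq:probform} come out exactly as stated.
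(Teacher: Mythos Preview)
Your proposal is correct and mirrors the paper's proof essentially line for line: build $G_u$ (the paper's $F_u$) from a $K$-oracle gate on top of $\le w$-ary $\XOR$s, use Lemma~\ref{lem:kernel} with $\b u$ uniform for (b), and for (a) take a conjunction over $b$ independent copies and derandomize by a union bound over the $\le 2^n$ NO instances. One tiny slip: your union bound gives $\le 2^{n-(br\delta-bq)}$, which at equality in~\eqref{eq:form} is only $\le 1$, not $<1$; the paper (and you can too) fixes this by using the \emph{strict} inequality $(1-\delta)^r<2^{-r\delta}$ for $\delta>0$, so that the per-instance failure probability is strictly $<2^{-n}$ and the hypothesis $br\delta-bq\ge n$ suffices as stated.
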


\begin{proof} 
Observe that each bit $b_j$ in $k(x,u)$ is the $\XOR$ of the input bits with  index $i$ such that $D_{iu_j}=1$ and there are at most $w$ many such $i$'s. Hence,
for fixed $u\in[m]^r$ every bit of $k(x,u)$ is computed by an $\XOR$-gate of fan-in at most $w$. Applying a $K$-oracle gate on top of these $\XOR$-gates gives a $K_{10r\log m}\circ \XOR_{w}$ formula $F_u$.  

For $\b{u}$ uniform  in $[m]^r$ we get a random  formula $F_{\b{u}}$ 
such that the event $k(x,\b{u})\in K$ equals the event $F_{\b{u}}(x)=1$. By Lemma~\ref{lem:kernel},
\begin{enumerate}\itemsep=0pt
\item[(Fa)] if $x\in Q$, then $\Pr[F_{\b{u}}(x)=1]=1$;
\item[(Fb)] if $x$ is a NO instance of $\epsilon$-$Q$, then 
$\Pr[F_{\b{u}}(x)=1]\le 2^q(1-\delta)^r.$
\end{enumerate}

Note  $2^q(1-\delta)^r\le 2^{q-r\delta}\le 1/4$ by \eqref{eq:probform}, so $F_{\b{u}}$ witnesses (b). 
To prove (a), let $\bar{\b{u}}=(\b{u}^1,\ldots,\b{u}^b)$ be a tuple of~$b$  independent random variables, each  uniform in $[m]^r$, and set $F_{\bar{\b{u}}}:=\bigwedge_{i\in[b]}F_{\b{u}^i}$. If $x\in Q$, then $\Pr[F_{\bar{\b{u}}}(x)=1]=1$. If $x$ is a NO instance of $\epsilon$-$Q$, then by \eqref{eq:form}
$$
\Pr[F_{\bar{\b{u}}}(x)=1]\le (2^q(1-\delta)^r)^b< 2^{bq- br\delta }\le 2^{-n}.
$$
Fix  values of $\bar{\b{u}}$ so that the resulting formula rejects all NO instances of length~$n$. Clearly, this formula accepts all $x\in Q\cap\{0,1\}^n$. It has the required form.
\end{proof}

\begin{remark} The derandomization argument above shows $\PFML[s(n)]\subseteq\FML[O(ns(n))]$. Hence, (a) implies (b) in Theorems~\ref{thm:sparse}, \ref{thm:intromain}.\end{remark}

\begin{corollary}\label{cor:local}  For all $0<\epsilon\le1$ and all $2^{n^{o(1)}}$-sparse problems $Q$ there is $K\subseteq\{0,1\}^*$ such that on sufficiently large input length $n$, $n^\epsilon$-$Q$ is decided by a probabilistic formula of size $\le n^{2\epsilon+o(1)}$ with a single $K$-oracle gate
 of fan-in $\le n^{o(1)}$.
\end{corollary}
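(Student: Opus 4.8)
The plan is to read the corollary off directly from Theorem~\ref{thm:dist} and Lemma~\ref{lem:localform}~(b); there is essentially no new idea, only bookkeeping, and in particular no complexity bound on $K$ is needed. Fix the given $0<\epsilon\le 1$ and the $2^{n^{o(1)}}$-sparse $Q$, and pick $q\colon\N\to\R$ with $q(n)=n^{o(1)}$ and $|Q\cap\{0,1\}^n|\le 2^{q(n)}$ for all $n$. For every sufficiently large $n$ I would let $D(n)$ be the $(n,m(n),n^{-\epsilon},1/8)$-distinguisher of weight $w(n)\le\ceil{2n^{\epsilon}}$ with $m(n)\le n^{7}$ supplied by Theorem~\ref{thm:dist}; in the notation fixed at the start of Section~\ref{sec:mag} this means $\delta(n)=1/8$, weight $w(n)=\ceil{2n^{\epsilon}}$, and input-distance fraction $n^{-\epsilon}$. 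I would then set $r(n):=\ceil{8(q(n)+2)}$, which is again $n^{o(1)}$ and, because $\delta(n)=1/8$, guarantees $r(n)\delta(n)-q(n)\ge 2$, i.e.\ condition~\eqref{eq:probform}. Finally let $K:=K(Q,D,r)$ be the kernel of Definition~\ref{df:K}; this is the set whose existence the corollary asserts.

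With these choices I would invoke Lemma~\ref{lem:localform}~(b): for all sufficiently large $n$, the promise problem $n^{-\epsilon}\text{-}Q$ on input length $n$ is decided by a probabilistic formula of the form $K_{10r(n)\log m(n)}\circ\XOR_{w(n)}$. Such a formula uses exactly one $K$-oracle gate, and its fan-in is $10r(n)\log m(n)\le 70\,r(n)\log n=n^{o(1)}$ since $m(n)\le n^{7}$; beneath it sit at most $10r(n)\log m(n)=n^{o(1)}$ many $\XOR$-gates, each of fan-in at most $w(n)\le\ceil{2n^{\epsilon}}$.

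The only remaining step is to account for the size once the $\XOR$-gates are expanded over $\{\neg,\wedge,\vee\}$. An $\XOR$ of at most $w$ inputs is computed by a balanced De~Morgan formula of size $O(w^{2})$; this quadratic blow-up is exactly where the exponent $2\epsilon$ comes from, and it is the only quantitative point that uses the weight bound $w(n)=O(n^{\epsilon})$ from Theorem~\ref{thm:dist}. Thus each of the $n^{o(1)}$ many $\XOR_{w(n)}$-gates becomes a De~Morgan formula of size $O(w(n)^{2})=O(n^{2\epsilon})$, so the total size is $1+n^{o(1)}\cdot O(n^{2\epsilon})=n^{2\epsilon+o(1)}$, with a single $K$-oracle gate of fan-in $n^{o(1)}$, as claimed. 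I do not expect any genuine obstacle here: the substance is already contained in Theorem~\ref{thm:dist} and Lemma~\ref{lem:localform}, and what remains is the routine estimation of the three factors $r(n)$, $\log m(n)$ and $w(n)$.
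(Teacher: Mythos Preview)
Your proof is correct and follows essentially the same route as the paper: pick a distinguisher, set $r(n)=n^{o(1)}$ large enough to satisfy~\eqref{eq:probform}, take $K:=K(Q,D,r)$, apply Lemma~\ref{lem:localform}~(b), and replace the $\XOR$-gates by quadratic-size De~Morgan formulas. The one cosmetic difference is that the paper invokes the non-constructive Proposition~\ref{prop:dist} (giving $m\le dn$, $\delta=1/5$, weight $\le n^{\epsilon}$) rather than Theorem~\ref{thm:dist}; since the corollary only asserts the \emph{existence} of $K$, explicitness of $D$ is irrelevant, and either source yields $\log m=O(\log n)$ and weight $O(n^{\epsilon})$, hence the same final bounds.
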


\begin{proof} Assume $Q$ is $q(n)$-sparse for $q(n)\le 2^{n^{o(1)}}$.
For large enough $n$, let $D$ be an $(n,m,1/5,n^{-\epsilon})$-distinguisher of weight~$\le n^{\epsilon}$ with $m\le dn$ for some constant $d$ (Proposition~\ref{prop:dist}). Let $K:=K(Q,D,r)$ for $r(n):=5q(n)+10$. Then (b) above gives a probabilistic formula of the form $K_{10r(n)\log m}\circ\XOR_{n^{-\epsilon}}$. Replace the $\XOR$-gates by formulas of size $O(n^{2\epsilon})$. The $K$-oracle gate has fan-in $10r(n)\log m\le n^{o(1)}$.
\end{proof}

\subsection{General magnification}

The following implies our main result Theorem~\ref{thm:intromain}. It generalizes \cite[Theorem 1.3(3)]{sharp} and \cite[Theorem~1.1(4)]{sharp} which essentially state the special case for $\epsilon=1$.


\begin{theorem}\label{thm:main} For  all $c\in\N$, all reals $\delta,\epsilon >0$ and $\gamma<\delta/c$ and all $2^{n^{\gamma}}$-sparse problems $Q \in
  \NP$, if (a)~$n^{-\epsilon}\text{-}Q\not\in
  \FML[n^{1+2\epsilon+\delta}]$ or
  (b)~$n^{-\epsilon}$-$Q\not\in\PFML[n^{2\epsilon+\delta}]$, then
  $\NP\not\subseteq\FML[n^c]$.
\end{theorem}

 \begin{proof}
Let  $c,\delta,\gamma, \epsilon$ accord the assumption. We can assume $\epsilon\le 1$ because the statement for $\epsilon>1$ is implied by the one for $\epsilon=1$. Set 
$$
q(n):=n^\gamma, \ \delta(n):=1/8, \ w(n):=\ceil{2n^\epsilon}, \ r(n):=17\lceil n^\gamma\rceil.
$$ 
If $n$ is sufficiently large, Theorem~\ref{thm:dist} gives an $(m,n,\delta(n),n^{\epsilon})$ distinguisher of weight $w(n)$ with $m\le n^7$. Let $Q\in\NP$ be $2^{q(n)}$-sparse. Let
$K:=K(Q,D,r)$ be the kernel from Definition~\ref{df:K}. Then $K\in\NP$ by Lemma~\ref{lem:KNP}.
Assume 
\begin{equation}\label{eq:ass}
K\in\FML[n^c].
\end{equation}
 
(a): for $b(n):=\lceil n^{1-\gamma}\rceil$ 
we have 
$br\delta-bq\ge n$ for sufficiently large $n$, so \eqref{eq:form} of Lemma~\ref{lem:localform}~(a) is satisfied. We get a $$
\AND_{\lceil n^{1-\gamma}\rceil}\circ K_{170\lceil n^\gamma\rceil\log m}\circ \XOR_{\ceil{2n^\epsilon}}
$$
formula $F$ deciding  $n^{-\epsilon}$-$Q$ on instances of sufficiently large length $n$.
By \eqref{eq:ass}, the oracle gates 
in $F$ can be replaced by formulas of size $O((n^\gamma\log n)^c)$. The $\XOR$-gates can be replaced by quadratic size formulas, i.e., size $O(n^{2\epsilon})$. The resulting formula is equivalent to $F$ and has size (assuming $n$ is sufficiently large)
\[
O(n^{1-\gamma}\cdot n^{\gamma c}\cdot (\log n)^{c}\cdot n^{2\epsilon})\le n^{1+\delta+2\epsilon}.
\]

(b):  $r\delta-q\ge n^\gamma\ge 2$ 
for sufficiently large $n$, so
 \eqref{eq:probform} of  Lemma~\ref{lem:localform}~(b) is satisfied.
 We get a probabilistic formula of the form
$K_{cn^\gamma\log n}\circ \XOR_{\ceil{2n^\epsilon}}$ for some $c\in\N$. Replacing 
the oracle gates by formulas of size $O((n^{\gamma}\log n)^c)\le o(n^{\delta})$ and the
$\XOR$-gates by formulas of size $O(n^{2\epsilon})$ yields size
$\le n^{2\epsilon+\delta}$.
%
\end{proof}


%
%

\section{Uniform magnification}\label{sec:uni}

Section~\ref{sec:how} generalizes a magnification result for $\MCSP[\sigma]$ from \cite{sharp}. 
This illustrates a use of strongly explicit distinguishers and a certain flexibility of our method. Section~\ref{sec:add} then infers Theorem~\ref{thm:parityPintro}  by a
  modification of the proof.

\subsection{How to use strongly explicit distinguishers}\label{sec:how}

The following generalizes \cite[Theorem 1.3(1)]{sharp} which states the special case for $\epsilon=1$ and 
$\sigma(\ell)=2^{\gamma\ell}$. The generality with relation to $\sigma$ is worth to be stated because, as mentioned in the introduction,  it is unknown whether $\MCSP[\sigma]$ becomes harder for larger~$\sigma$.


\begin{theorem}\label{thm:parityP}
For  all reals $\delta,\epsilon>0$ there is a real $\gamma>0$ such that
for all $\sigma:\N\to\R_{\ge 0}$ with $\sigma(\ell)\le 2^{\gamma\ell}$,
if $n^{-\epsilon}\text{-}\MCSP[\sigma]\not\in\PFML[n^{2\epsilon+\delta}]$,
then $\oplus\P\not\subseteq\NC$.
\end{theorem}

\begin{proof} Let $\delta,\epsilon>0$ and assume $\oplus\P\subseteq\NC$; we can assume $0<\epsilon\le 1$.
For large enough $n$, let
 $D=D(n)$ be a strongly explicit $(n,m(n),1/16,n^{-\epsilon})$-distinguisher of weight $\le 8\log(2n)n^{\epsilon}$ and $m(n)\le n^{11}$  according to Theorem~\ref{thm:strdist}. 
 
 Call a binary string {\em good} if it has  the form
$$
\big\langle n, s,u,b_1,\ldots,b_r\big\rangle
$$
where $n,s,r\in \N$, $n=2^\ell$ for some $\ell\in\N$, $s\le r$, $u\in[m(n)]^r$ and the $b_j$ are bits. Define $K\subseteq\{0,1\}^*$ as the set of such strings such that there exists $x\in\{0,1\}^n$ that is computed by a circuit of size $\le s$ and $b_j=(xD)_{u_j}$ for all $j\in[r]$.\medskip

\noindent{\em Claim:} $K\in\NC$.\medskip

\noindent{\em Proof of the claim:} We first describe a nondeterministic polynomial time algorithm with a $\oplus\P$ oracle. A given input can be checked to be good in polynomial time:
since $D$ is strongly explicit, $u_j\in[m(n)]$ can be checked in polynomial time. 
We guess a circuit $C$ of size $s$ with $\ell$ inputs: this takes  polynomial time since $s\le r$. For each $j\in[r]$ we have to check $b_j=(xD)_{u_j}$ where $x\in\{0,1\}^n$ is the truth table of $C$, i.e., for all $i\in[n]$ we have $x_i=C(\hat\imath)$ where $\hat\imath$ is the lexicographically $i$-th length $\ell$ string. We thus have to check whether there is an odd number of $i\in[n]$ such that  
$C(\hat\imath)\cdot D_{iu_j}=1$. But, since $D$ is strongly explicit, this is a $\oplus\P$-property of $\hat\imath\in\{0,1\}^\ell$, so checked  by the oracle.

Thus, $K\in\NP^{\oplus\P}$. Now argue:
$$
\NP^{\oplus\P}\subseteq \mathsf{RP}^{\oplus\P}\subseteq  \P^{\oplus \P}/\mathsf{poly}\subseteq\oplus\P/\mathsf{poly}
\subseteq\NC;
$$
the 1st inclusion follows from $\NP\subseteq\mathsf{RP}^{\oplus\P}$ by the Valiant-Vazirani Lemma,
the 2nd from Adleman's trick, the 3rd from $\oplus \P^{\oplus\P}=\oplus\P$ \cite{papa}, and the 4th from our assumption.\hfill$\dashv\medskip$

Choose $c\in\N$ such that $K\in\FML[n^c]$ and set $\gamma:=\delta/(4c)$. Let $\sigma(\ell)\le 2^{\gamma\ell}$ be given. We have to show that $n^{-\epsilon}$-$\MCSP[\sigma]\in\PFML[n^{\delta+2\epsilon}]$.

Set $q(n):=n^{2\gamma}$
and note $\MCSP[\sigma]$ is $2^{q(n)}$-sparse. 
Let $\tilde r(n):=\ceil{16q(n)+32}$ and set $\tilde K:=K(\MCSP[\sigma],D,\tilde r)$ according to Definition~\ref{df:K}. This satisfies  \eqref{eq:probform} of Lemma~\ref{lem:localform}, so $n^{-\epsilon}$-$\MCSP[\sigma]$ has probabilistic formulas of the form
\begin{equation*}\label{eq:uni}
\tilde K_{O(n^{2\gamma}\log n)}\circ\XOR_{O(n^{\epsilon}\log n)}
\end{equation*}
on inputs of sufficiently large length $n=2^{\ell}$. We want to replace the  $\tilde K$-oracle gates by $K$-oracle gates. To this end, note that, by the proof of Lemma~\ref{lem:localform}, the $\XOR$-gates produce an input to $\tilde K$ of the required form and this is a good string (with $r=\tilde r(n)$) except that some~$s$ is missing.
 Hence, we just have to add some  gates producing $s:=\sigma(\ell)$. Note $s\le \tilde r(n)$, so the result is good.
 
Replace the $K$-oracle and $\XOR$ gates by formulas of size $O((q(n)\log n)^c)\le O(n^{\delta/2})$ and  $O(n^{2\epsilon}\log^2n)$. This yields a formula of size $\le n^{2\epsilon+\delta}$ for sufficienly large~$n$.
\end{proof}

\begin{corollary} 
$\oplus\P\not\subseteq\NC$ if
 there exist $\epsilon>0$ and a  function
 $\sigma(\ell)\le 2^{o(\ell)}$ such that
$$
n^{-\epsilon}\text{-}\MCSP[\sigma]\not\in\PFML[n^{2\epsilon+o(1)}].
$$
\end{corollary}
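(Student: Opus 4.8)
The plan is to derive this corollary from Theorem~\ref{thm:parityP} by a standard ``for-all-$\delta$'' quantifier exchange, exploiting the $o(1)$ slack in both the sparsity parameter of $\sigma$ and the formula-size exponent. First I would unwind the hypothesis: assume $\oplus\P\subseteq\NC$; then I must show that for every $\epsilon>0$ and every $\sigma(\ell)\le 2^{o(\ell)}$ we have $n^{-\epsilon}\text{-}\MCSP[\sigma]\in\PFML[n^{2\epsilon+o(1)}]$. Fix such an $\epsilon$ and $\sigma$. The phrase $\sigma(\ell)\le 2^{o(\ell)}$ means: for every $\gamma>0$, $\sigma(\ell)\le 2^{\gamma\ell}$ for all sufficiently large $\ell$. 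So given any $\delta>0$, Theorem~\ref{thm:parityP} supplies a threshold $\gamma=\gamma(\delta,\epsilon)>0$, and since $\sigma(\ell)\le 2^{\gamma\ell}$ eventually, the contrapositive of Theorem~\ref{thm:parityP} (under the standing assumption $\oplus\P\subseteq\NC$) yields $n^{-\epsilon}\text{-}\MCSP[\sigma]\in\PFML[n^{\delta+2\epsilon}]$.

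The remaining point is to pass from ``for every fixed $\delta>0$, the problem is in $\PFML[n^{\delta+2\epsilon}]$'' to ``the problem is in $\PFML[n^{2\epsilon+o(1)}]$''. This is the routine but slightly delicate step: a priori, the formula family witnessing membership could depend on $\delta$, and simply taking a sequence $\delta_k\to 0$ does not immediately give a single family of size $n^{2\epsilon+o(1)}$. I would handle this exactly as such arguments are usually handled in the magnification literature: take a sequence $\delta_1>\delta_2>\cdots\to 0$, obtain for each $k$ a threshold length $N_k$ and a formula family $(\b{F}^{(k)}_n)_{n\ge N_k}$ of size $\le n^{\delta_k+2\epsilon}$ deciding $n^{-\epsilon}\text{-}\MCSP[\sigma]$ on length $n$, with $N_1<N_2<\cdots$ (increasing the $N_k$ if needed). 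Then define a single family $\b{F}_n$ by setting $\b{F}_n:=\b{F}^{(k)}_n$ for $N_k\le n<N_{k+1}$. For $n$ in that range the size is $\le n^{\delta_k+2\epsilon}$, and since $\delta_k\to 0$ as $n\to\infty$, the exponent is $2\epsilon+o(1)$; correctness on each length $n$ is inherited from the chosen $\b{F}^{(k)}_n$. Hence $n^{-\epsilon}\text{-}\MCSP[\sigma]\in\PFML[n^{2\epsilon+o(1)}]$, contradicting the hypothesis of the corollary, so $\oplus\P\not\subseteq\NC$.

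I expect no real obstacle here: the content is entirely in Theorem~\ref{thm:parityP}, and the corollary is just the ``$o(1)$-ification'' of it. The only thing to be a little careful about is the order of quantifiers, namely that in Theorem~\ref{thm:parityP} the parameter $\gamma$ depends on $\delta$ (and $\epsilon$) but \emph{not} on $\sigma$, so once $\delta$ is chosen the single threshold $\gamma$ works for the given $\sigma$ as long as $\sigma(\ell)\le 2^{\gamma\ell}$ eventually — which is guaranteed by $\sigma(\ell)\le 2^{o(\ell)}$. The diagonalization over $\delta_k$ is the standard trick and deserves at most a sentence. If one wanted to be maximally terse, the whole proof is: ``Immediate from Theorem~\ref{thm:parityP} by letting $\delta\to 0$; the dependence $\gamma=\gamma(\delta,\epsilon)$ is harmless since $\sigma(\ell)\le 2^{o(\ell)}$ means $\sigma(\ell)\le 2^{\gamma\ell}$ for large $\ell$, for every $\gamma>0$.''
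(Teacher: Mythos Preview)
Your argument is correct and is exactly the intended derivation: the paper states the corollary without proof, treating it as immediate from Theorem~\ref{thm:parityP}, and your contrapositive-plus-diagonalization over $\delta_k\to 0$ is the standard way to spell out that passage from ``for every $\delta$'' to ``$o(1)$''. The one minor point you flagged---that $\sigma(\ell)\le 2^{\gamma\ell}$ need only hold eventually---is harmless since $\PFML[\cdot]$ membership only concerns sufficiently large input lengths.
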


\subsection{Adding uniformity}\label{sec:add}

The following implies Theorem~\ref{thm:parityPintro}.


\begin{theorem} For  all reals $\delta,\epsilon>0$ there is a real $\gamma>0 $ such that for all $\sigma:\N\to\R_{\ge 0}$ with $\sigma(\ell)\le 2^{\gamma\ell}$, if $n^{-\epsilon}\text{-}\MCSP[\sigma]\not\in \P\textit{-uniform-}\SIZE[n^{1+\epsilon+\delta}]$, then $\P\neq\NP^{\oplus\P}$.
\end{theorem}

\begin{proof} Assume $\P=\NP^{\oplus\P}$ and proceed as in the proof of Theorem~\ref{thm:parityP}.
The proof of the claim now gives $K\in\P$. Choose $c\in\N$ such that $K\in\P\textit{-uniform-}\SIZE[n^c]$
and set  $\gamma=\delta/(4c)$, as before.
As seen, Lemma~\ref{lem:localform} gives a probabilistic formula of the form
$$
K_{O(n^{2\gamma}\log n)}\circ\XOR_{O(n^{\epsilon}\log n)}
$$
deciding $n^{-\epsilon}$-$\MCSP[\sigma]$. Recalling the proof of Lemma~\ref{lem:localform}, the formula equals $F_{\b{u}}$ for $\b{u}$ uniform in $[m]^{\tilde r}$ where 
$m,\tilde r$ are as in the proof of Theorem~\ref{thm:parityP}.
 It is clear that the map $u\mapsto F_u$ is computable in polynomial time.

Let $\bar{\b{u}}=(\b{u_1},\ldots,\b{u}_n)$ be uniform in $([m]^{\tilde r})^n$. Then $F_{\bar{\b{u}}}:=\bigwedge_{i\in[n]}F_{\b{u}_i}$ accepts all YES instances and rejects NO instances with probability $\le 4^{-n}<2^{-n}$. Hence there is a realization~$\bar u$ of $\bar{\b{u}}$ such that $F_{\bar u}$ rejects all NO instances. It is not hard to see that such $\bar u$ can be computed from $n$ in time polynomial in $n$ with the help of a $\PH$ oracle. But our assumption implies $\PH=\P$,
so $F_{\bar u}$ can be computed in time polynomial in $n$. 

Compute a circuit to replace the  $K$-oracle gates and a linear size circuit  to replace the $\XOR$-gates. The resulting circuit
is computable in time polynomial in $n$ and has size
\[
O(n\cdot (n^{2\gamma}\log n)^c\cdot n^\epsilon\log n)\le n^{1+\delta+\epsilon}.\qedhere
\]
\end{proof}

\paragraph{Discussion} For $\P$-uniformity the distinguishers are not required to be strongly explicit. 
But the $\P$-uniform version of Theorem~\ref{thm:main} is void: its conclusion $\NP\not\subseteq\P\textit{-uniform-}\FML[n^c]$ is known \cite{sw}, as   discussed in the introduction. 
Thus, contrary to our focus in the introduction on the magnification threshold,
strengthening the conclusion of Theorem~\ref{thm:main} could yield 
 {\em general} uniform magnification as asked for in the introduction.

\section{A sharp lower bound}\label{sec:lower}

This section proves Theorem~\ref{thm:introlower}. The method of proof follows Hirahara and Santhanam's \cite{hs} which in turn builds on~\cite{imz}. We start recalling what is needed.

\subsection{Preliminaries}

Let $n\in\N$. 
A {\em random restriction (on $[n]$)} $\b{\rho}$ is a random variable in $\{0,1,*\}^n$, the set of {\em restrictions (on $[n]$)};
we write restrictions $\rho$ as functions $\rho:[n]\to\{0,1,*\}$.
For $p\in [0,1]$ and $k\in[n]$, $\b{\rho}$ is {\em $(p,k)$-regular} if the projections $\b{\rho}(1),\ldots, \b{\rho}(n)$ are $k$-wise independent and each takes values $*,0,1$ with probabilities $p,(1-p)/2,(1-p)/2$.

For a Boolean function $f$ with $n$ variables and a restriction $\rho$ on $[n]$, the function
$f\res\rho$ is $f$ with $i$-th variable fixed to $\rho(i)$ for all $i\in[n]$ with $\rho(i)\neq*$. For another 
restriction $\rho'$ observe  $(f\res\rho)\res\rho'=f\res \rho\rho'$ where $\rho\rho'$ maps $i$ to $\rho(i)$ if $\rho(i)\neq*$, and to $\rho'(i)$ otherwise.


\begin{example}\label{ex:regularrho} Let $k,2^r\le n$ and $X$ be an $n\times \F_{2^r}^k$-matrix over $\F_{2^{r}}$ according to
Example~\ref{ex:ind}. Assume $r=r'+1$ and set $p:=2^{-r'}$. Choose
$f:\F_{2^{r}}\to\{0,1,*\}$ such that $*$ has $p2^{r}=2$ preimages
and $0,1$ both have $(1-p)2^r/2=2^{r'}-1$ preimages.
Define  $\b{\rho}$ to be a uniformly chosen column of $X$ with entries replaced using $f$. Then $\b{\rho}$ is a $(p,k)$-regular random restriction on $[n]$.
\end{example}

By $L(f)$ we denote the minimal number of leafs of (the formula tree of) a formula computing $f$.
The following is a version of \cite[Lemma~27]{hs} (there attributed to \cite{imz}).

\begin{lemma}\label{lem:hs1} There exists a real~$c \geq 1$ such that for every
$n\in\N_{>0}$, every $f:\{0,1\}^n\to\{0,1\}$,
every~$p \in [0,1]$, every natural~$k \geq 1/p^2$ and
  every $(p,k)$-regular random restriction~$\b{\rho}$ on~$[n]$:
$$\mathbb{E}[L(f\res {\b{\rho}})] \leq
  \max\{c p^2 L(f), c\}.$$
  
\end{lemma}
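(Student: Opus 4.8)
The plan is to follow the standard shrinkage-under-random-restrictions argument of Håstad–Subbotovskaya type, in the $k$-wise independent version needed here. First I would reduce to the case where $f$ is computed by a formula over the basis $\{\neg,\wedge,\vee\}$ of fan-in $2$, and proceed by induction on $L = L(f)$, the number of leaves. The base case $L \le 1$ is immediate since restricting can only leave a single literal or a constant, so $L(f\res\b\rho) \le 1 \le c$. For the inductive step, write $f$ at its top gate as $g \star h$ with $\star \in \{\wedge,\vee\}$ (pushing negations to the leaves first), where $L(g) + L(h) = L$. The key structural observation is Subbotovskaya's: if a variable $x_i$ occurs in $f$, then after setting $x_i$ to a constant the formula shrinks by the one leaf labelled $x_i$, and moreover there is an extra ``for free'' shrinkage coming from the gate that had $x_i$ as a child — this is where the $p^{3/2}$ (or in the cruder bookkeeping, the $p^2$) saving comes from. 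Because we only need a $p^2$ bound rather than the optimal $p^{1.63}$, I can afford a looser accounting: it suffices to show each leaf survives with probability $p$ and, conditioned on survival, expected further cancellation is enough to square the bound.

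The key steps, in order: (1) Set up the induction on $L(f)$ and handle $L(f)\le 1$. (2) For the inductive step, condition on the restriction $\b\rho'$ of $\b\rho$ to the variables appearing in $g$ but not $h$ (or symmetrically) — here I must be careful that conditioning on some coordinates of a $(p,k)$-regular restriction leaves the remaining coordinates still behaving $k'$-wise independently for $k' = k - (\text{number of conditioned coordinates})$, and the hypothesis $k \ge 1/p^2$ is exactly what guarantees enough independence survives all the way down the recursion; this is the point I expect to be the main obstacle, since one has to track how much of the $k$-wise independence ``budget'' is consumed and confirm it never runs out before the formula has shrunk to size $O(1)$. (3) Use linearity of expectation to split $\E[L(f\res\b\rho)] \le \E[L(g\res\b\rho)] + \E[L(h\res\b\rho)]$ and apply the induction hypothesis to each of $g,h$, which have strictly fewer leaves. (4) Combine: $\E[L(f\res\b\rho)] \le \max\{cp^2 L(g), c\} + \max\{cp^2 L(h), c\} $, and then argue this is $\le \max\{cp^2 L(f), c\}$ after absorbing the additive $+c$ terms by choosing $c$ large enough and using the extra gate-shrinkage saving to beat the naive doubling; concretely the ``$\max$'' form of the statement is what makes this absorption work, because once both subformulas are already of size $O(1/p^2)$ the $cp^2 L$ terms are themselves $O(c)$ and can be merged.

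The main obstacle, as flagged, is the independence bookkeeping: a clean induction on $L(f)$ re-conditions at every internal node, and one needs the $(p,k)$-regular restriction restricted to any subset of coordinates, and then further conditioned on the values of a bounded number of other coordinates, to remain $(p,k'')$-regular for a $k''$ that stays $\ge 1$ (in fact $\ge 1/p^2$ wherever we still want to invoke the hypothesis recursively) throughout. The honest way to handle this is to strengthen the induction hypothesis so that it is stated for $(p,k)$-regular restrictions for \emph{all} $k$ in the relevant range and to verify that $k$-wise independence is preserved under conditioning on a set of coordinates of size $\le k - k''$. Alternatively — and this is probably what the authors do, since they attribute the lemma to \cite{imz} — one can cite the Impagliazzo–Meka–Zuckerman analysis of formula shrinkage under pseudorandom/bounded-independence restrictions essentially verbatim, the only change being to phrase the conclusion with $L(\cdot)$ (number of leaves) and with the explicit constant $c$ and the $\max\{cp^2 L(f), c\}$ form; in that case the proof reduces to checking that the Example \ref{ex:regularrho}-style $k$-wise independent restriction meets the hypotheses of their theorem with $k \ge 1/p^2$, and extracting the stated bound. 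I would write the proof in that second style: state that it is the bounded-independence shrinkage lemma of \cite{imz} with the parameters matched, and include the short verification that $k \ge 1/p^2$ is the threshold at which their bound takes the claimed form.
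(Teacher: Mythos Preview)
Your chosen approach (the second one) is essentially what the paper does: the lemma is not proved from scratch but is taken from \cite{hs} (who attribute it to \cite{imz}), and the Remark following the statement explains the only change --- \cite[Lemma~27]{hs} assumes $L(f)\ge p^{-2}$ and concludes $O(p^2 L(f))$, invoking Tal's bound $\E[L(g\res\b\rho)]\le O(p^2 L(g)+p\sqrt{L(g)})$ on certain subformulas $g$; the present version drops the size assumption by replacing $O(p^2 L(g))$ with $\max\{cp^2 L(g),c\}$, which absorbs the $p\sqrt{L(g)}$ term also when $L(g)$ is small. So on that front you are right.

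Your first approach, however, contains a real misconception worth flagging. The Subbotovskaya-style structural induction you describe yields shrinkage exponent~$3/2$, not~$2$. The exponent~$2$ is the \emph{optimal} one (H\aa stad~\cite{hastad}, Tal~\cite{tal}) and is obtained by entirely different, analytic techniques; it is not a relaxation of some ``optimal $p^{1.63}$'' that a cruder accounting can reach. Since $p<1$, the bound $p^2 L(f)$ is \emph{smaller} than $p^{3/2}L(f)$, so you have the direction backwards. Concretely, your step~(4) does not close: $\max\{cp^2 L(g),c\}+\max\{cp^2 L(h),c\}$ can equal $2c$, and the unspecified ``extra gate-shrinkage saving'' from Subbotovskaya contributes only a factor $p^{1/2}$ per level, which is not enough to recover the missing factor or to reach exponent~$2$. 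If you actually wanted a self-contained argument you would need to reproduce the \cite{imz}/\cite{tal} analysis under bounded independence, not a top-gate induction.
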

  
\begin{remark} \cite[Lemma~27]{hs} has $O(p^2 L(f))$ instead $\max\{ cp^2 L(f), c\}$ and an additional assumption $L(f)\ge p^{-2}$. 
In the proof, this assumption is used to bound $\mathbb{E}[L(g\res {\b{\rho}})]$ by $O(p^2 L(g))$ for 
certain functions $g$ satisfying  $L(g)\ge p^{-2}/6$ by the additional assumption; indeed, then $O(p^2 L(g))\ge O(p^2 L(g)+p\sqrt{ L(g)})\ge \mathbb{E}[L(g\res {\b{\rho}})]$ as proved in  \cite{tal} (improving~\cite{hastad}).
Our version uses the bound $\max\{ cp^2 L(g), c\}$ instead.
\end{remark}


A random restriction $\b{\rho}$ is a {\em composition of~$t$ many~$(p,k)$-regular  restrictions on~$[n]$} if $\b{\rho}=\b{\rho}_1\cdots \b{\rho}_t$ for independent $(p,k)$-regular  restrictions $\b{\rho}_1,\ldots,\b{\rho}_t$ on $[n]$. Note that such $\b{\rho}$ is $(p^t,k)$-regular.
The following is a variant of \cite[Theorem~28]{hs} (there attributed to \cite{imz}).

\begin{lemma}\label{lem:hs2}
  There exists a real~$c \geq 1$ such that for all~$t,n \in\N_{>0}$, every $f:\{0,1\}^n\to\{0,1\}$,
  every
  real~$p$ with~$0 < p < 1/(2c)^{1/2}$, every
  natural~$k \geq 1/p^{2}$, 
  and every 
  composition $\b{\rho}$ of~$t$ many~$(p,k)$-regular random
   restrictions on $[n]$:
 $$
 \mathbb{E}[ L(f\res\b{\rho}) ] \leq c^t  p^{2t} L(f) + 2c.
  $$  
\end{lemma}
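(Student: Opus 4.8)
The plan is to iterate Lemma~\ref{lem:hs1} $t$ times, conditioning step by step. Write $\b{\rho}=\b{\rho}_1\cdots\b{\rho}_t$ with the $\b{\rho}_i$ independent $(p,k)$-regular restrictions on $[n]$, and for $0\le i\le t$ set $\b{\sigma}_i:=\b{\rho}_1\cdots\b{\rho}_i$ (so $\b{\sigma}_0$ is the empty restriction and $\b{\sigma}_t=\b{\rho}$). The key observation is that $f\res\b{\sigma}_i=(f\res\b{\sigma}_{i-1})\res\b{\rho}_i$, and conditioned on the value of $\b{\sigma}_{i-1}$ the restriction $\b{\rho}_i$ is still a $(p,k)$-regular random restriction on $[n]$ independent of everything revealed so far. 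Since the hypotheses $p\in(0,1/(2c)^{1/2})$ (in particular $p<1$) and $k\ge 1/p^2$ are exactly what Lemma~\ref{lem:hs1} needs, applying it to the function $g:=f\res\sigma_{i-1}$ (for each fixed value $\sigma_{i-1}$ of $\b{\sigma}_{i-1}$) gives
\[
\E\!\left[L(f\res\b{\sigma}_i)\,\middle|\,\b{\sigma}_{i-1}\right]\le \max\{c\,p^2\,L(f\res\b{\sigma}_{i-1}),\,c\}\le c\,p^2\,L(f\res\b{\sigma}_{i-1})+c,
\]
where I replace the max by a sum to get a clean recursion. Taking expectations over $\b{\sigma}_{i-1}$ and writing $a_i:=\E[L(f\res\b{\sigma}_i)]$, this is the linear recurrence $a_i\le c p^2\,a_{i-1}+c$ with $a_0=L(f)$.

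Unrolling the recurrence gives $a_t\le (cp^2)^t L(f)+c\sum_{j=0}^{t-1}(cp^2)^j$. The geometric sum is bounded using the hypothesis $p<1/(2c)^{1/2}$, i.e. $cp^2<1/2$: then $\sum_{j=0}^{t-1}(cp^2)^j<\sum_{j\ge0}(1/2)^j=2$, so $a_t\le c^t p^{2t}L(f)+2c$, which is exactly the claimed bound. So with the same constant $c$ as in Lemma~\ref{lem:hs1} (and noting that any $c\ge 1$ from that lemma works here as well, since enlarging $c$ only weakens the requirement $cp^2<1/2$ into the stated $p<1/(2c)^{1/2}$), the lemma follows.

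The only genuinely careful point — and the one I would write out — is the conditioning step: one must check that conditioned on $\b{\sigma}_{i-1}=\sigma_{i-1}$, the restriction $\b{\rho}_i$ remains $(p,k)$-regular on $[n]$ (true because $\b{\rho}_i$ is independent of $\b{\rho}_1,\dots,\b{\rho}_{i-1}$ and hence of $\b{\sigma}_{i-1}$), and that Lemma~\ref{lem:hs1} is genuinely applicable to $g=f\res\sigma_{i-1}$ whatever its leaf-count is, including the degenerate cases $L(g)=1$ (a constant or a single literal) — here the $\max\{cp^2 L(g),c\}$ formulation of Lemma~\ref{lem:hs1}, as opposed to the $O(p^2L(f))$ version with the extra hypothesis $L(f)\ge p^{-2}$, is exactly what makes the induction go through without side conditions. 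Everything else is the routine unrolling of a first-order linear recurrence and summing a geometric series, so I would not belabor it.
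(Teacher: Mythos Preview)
Your proposal is correct and is essentially the same argument as the paper's: both iterate Lemma~\ref{lem:hs1} by conditioning on the composition of the first restrictions, use $\max\{cp^2 L,c\}\le cp^2 L+c$ to get the linear recurrence $a_i\le cp^2 a_{i-1}+c$, and then sum the resulting geometric series using $cp^2<1/2$. The paper phrases this as an induction on $t$ (carrying the slightly sharper additive constant $c/(1-cp^2)\le 2c$), while you unroll the recurrence directly; these are the same computation.
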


\begin{proof}
Let~$c$ be the constant from Lemma~\ref{lem:hs1}. We proceed by induction on $t$ and prove
a slightly stronger claim with $2c$ replaced by $c/(1-cp^2)$ (which is $\le 2c$ by $p< 1/(2c)^{1/2}$).

The case~$t=1$ follows from Lemma~\ref{lem:hs1}  
because $\max\{ cp^2 L(f), c\} \leq cp^2 L(f) + c/(1-cp^2)$.
For the inductive step, let $t>0$ and recall 
$\b{\rho}=\b{\rho}' \b{\rho}_t$ for $\b{\rho'}:=\b{\rho}_{1}\cdots\b{\rho}_{t-1}$ where the~$\b{\rho}_i$ are independent and $(p,k)$-regular. 
Letting $\rho'$ range over realizations of $\b{\rho}'$, 
\[
\begin{array}{rcll}
\mathbb E[L(f\res\b{\rho})]&=&\textstyle \sum_{\rho'}\Pr[\b{\rho}'=\rho' ]\cdot \mathbb E[ L((f\res\rho')\res\b{\rho}_t) ] & \; \text{as $\b{\rho}',\b{\rho}_t$ are independent}\\
&\le&\textstyle \sum_{\rho'}\Pr[\b{\rho}'=\rho' ]\cdot \big(  cp^2L(f\res\rho')+c\big)& \; \text{by Lemma~\ref{lem:hs1}}\\
&=&  cp^2  \mathbb E[L(f\res\b{\rho}')]  + c&\\
&\le&   cp^2 \big(    c^{t-1} p^{2(t-1)} L(f) + c/(1-cp^2)    \big)+ c& \; \text{by induction}\\
&=&  c^t  p^{2t} L(f) + c/(1-cp^2).&
\end{array}
\qedhere
\]
\end{proof}

\subsection{The lower bound}

The following implies Theorem~\ref{thm:introlower}. 

\begin{theorem}\label{thm:lower} There exists  $d\in\N_{>0}$ such that for all reals  $  0<\epsilon,\delta\le 1$ and all
$\sigma:\N\to\R_{\ge 0}$ with  $\ell^d\le\sigma(\ell)\le 2^{o(\ell)}$,
$$n^{-\epsilon}\text{-}\MCSP[\sigma]\not\in\PFML[n^{2\epsilon-\delta}].
$$ 
\end{theorem}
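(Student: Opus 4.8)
The plan is to show that small probabilistic formulas for $n^{-\epsilon}$-$\MCSP[\sigma]$ lead to a contradiction via the shrinkage-under-random-restrictions machinery of Lemma~\ref{lem:hs2}, following the Hirahara--Santhanam template. Assume for contradiction that $n^{-\epsilon}$-$\MCSP[\sigma] \in \PFML[n^{2\epsilon - \eta}]$ for some fixed $\eta>0$, so for all large $n=2^\ell$ there is a probabilistic formula $\b{F}$ of size $\le n^{2\epsilon-\eta}$ with one-sided error. By averaging, fix a single formula $F$ of size $s_0 \le n^{2\epsilon-\eta}$ (equivalently $L(F)\le O(s_0)$ leaves) that accepts every YES instance and rejects at least, say, a $3/4$-fraction of NO instances of length $n$; we will need that $F$ in fact separates a large set of NO instances (truth tables of functions far from all size-$\le\sigma(\ell)$ circuits) from the small set of YES instances (truth tables computable by size-$\le\sigma(\ell)$ circuits).

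\textbf{Key steps.} First I would choose the restriction parameters: set $p$ so that after composing $t$ regular $(p,k)$-restrictions on the $n$ coordinates of the truth table, the expected formula-leaf count $c^t p^{2t} L(F) + 2c$ drops to a constant; since $L(F)\le n^{2\epsilon-\eta}$ we can afford $p^{2t}\approx n^{-(2\epsilon-\eta/2)}$, i.e. the restriction keeps roughly $n^{1-\epsilon+\eta/(4)}$ stars --- crucially \emph{more} than $n^{1-\epsilon}$ surviving coordinates, which is the slack that the sub-$n^{2\epsilon}$ size buys us. By Markov applied to Lemma~\ref{lem:hs2}, with probability $\ge 1/2$ the restricted formula $F\res\b\rho$ has $O(1)$ leaves, hence computes a function depending on $O(1)$ variables. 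Second, I would argue about what $F\res\b\rho$ must do: a YES instance restricted by $\b\rho$ is still (up to the free coordinates) a partial truth table extendable to something of low circuit complexity, so $F\res\b\rho$ must output $1$ on it; meanwhile I need to exhibit, on the surviving $\approx n^{1-\epsilon+\eta/4}$ coordinates, two assignments — one completing to a YES-type string, one to a NO-type string (a string whose every completion is $(1-n^{-\epsilon})$-far from all low-complexity strings) — that $F\res\b\rho$ cannot tell apart, because $F\res\b\rho$ reads only $O(1)$ of those coordinates. The counting here is where the distinguishing argument lives: since only $O(1)$ free bits are inspected, there are $\le 2^{O(1)}$ "patterns"; a counting/probabilistic argument over the many surviving coordinates and the many low-complexity vs. far strings produces a collision, contradicting correctness of $F$. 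Here the use of $\sigma(\ell)\ge\ell^d$ guarantees there are enough low-complexity truth tables (and enough far ones) to run the counting, and $\sigma(\ell)\le 2^{o(\ell)}$ keeps $\MCSP[\sigma]$ sparse enough that the YES set is a vanishing fraction, so random/far strings are genuinely NO instances.

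\textbf{Main obstacle.} The delicate part is the last step: controlling the restricted problem so that the constant-junta $F\res\b\rho$ is forced into a contradiction. One has to (i) lower-bound the probability that $\b\rho$ leaves enough live coordinates \emph{and} simultaneously that $F\res\b\rho$ shrinks to $O(1)$ leaves, combining Lemma~\ref{lem:hs2}'s expectation bound with a union bound against the bad restriction events; (ii) verify that restricting a truth table by $\b\rho$ preserves the YES/NO dichotomy in an appropriate averaged sense — e.g. that a random completion of a far string stays far, which needs a Chernoff-type estimate on Hamming distance over the surviving block, exactly the place where $n^{1-\epsilon}$ vs. the slightly larger survivor count matters; and (iii) run the junta-collision counting so that the number of inspected bits ($O(1)$) times the number of available low-complexity functions is dominated by the number of far functions, yielding a YES/NO pair with identical $\b\rho$-image and identical values on the few inspected free bits. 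Assembling these three estimates with the right choice of $t$ (which must be $\omega(1)$ to push the leaf count below any constant threshold uniformly, but small enough that $c^t$ is absorbed, so $t=\Theta(\log n/\log\log n)$-ish) and checking that the resulting size threshold is $n^{2\epsilon - o(1)}$ rather than $n^{2\epsilon-\Omega(1)}$ is the technical heart; I expect the $o(1)$ in the exponent to come precisely from the $c^t$ factor and the $1/p^2 \le k$ requirement forcing $k$ — and hence the field size in Example~\ref{ex:regularrho} — to grow, which is harmless for formula size but must be tracked.
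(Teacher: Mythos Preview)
Your high-level plan---shrink a purported formula via Lemma~\ref{lem:hs2} to a near-constant junta, then exhibit a YES/NO pair the junta cannot separate---matches the paper's, and your treatment of the NO half is right: filling the stars of $\b\rho$ with independent uniform bits yields, with high probability, a string $n^{-\epsilon}$-far from every size-$\le\sigma$ truth table, precisely because the star count $\approx n p^t$ exceeds $n^{1-\epsilon}$ by a polynomial margin.

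The gap is the YES half. Your step (iii) appeals to ``a counting/probabilistic argument over the many low-complexity vs.\ far strings'' to produce a YES completion of $\rho$ agreeing with the NO instance on the $O(1)$ bits the junta reads. But $\rho$ fixes all but $n^{1-\epsilon+\Theta(\eta)}$ of the $n$ coordinates, while there are only $2^{O(\sigma\log\sigma)}=2^{n^{o(1)}}$ YES strings altogether. No counting argument can force one of them to match $n-o(n)$ prescribed bits; for a restriction whose fixed part carries no special structure, with overwhelming probability \emph{no} YES string is consistent with it, and your collision is empty. Relatedly, your stated role for the hypothesis $\sigma(\ell)\ge\ell^d$---``enough low-complexity truth tables to run the counting''---is not what it is for.

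What the paper does instead is arrange that the fixed part of $\rho$ \emph{is itself} a low-complexity object. The composition $\b\rho=\b\rho_1\cdots\b\rho_t$ is taken from Example~\ref{ex:regularrho}, so each $\b\rho_i$ is specified by a seed of $O(k\log n)$ bits and the map $j\mapsto\rho(j)$ is computed by a circuit of size $\mathrm{poly}(k\ell)$ (there are $t<\ell$ factors). Hence $x'':=(\rho$ with stars set to $0)$ is the truth table of a $\mathrm{poly}(k\ell)$-size circuit. Flipping the $<s_0=O(1)$ live bits the junta reads to agree with the NO instance $x$ produces $x'$ with $F(x')=(F\res\rho)(x)=F(x)=0$ and circuit size $\mathrm{poly}(k\ell)+O(\ell)$; the hypothesis $\sigma(\ell)\ge\ell^d$ is used exactly here, to make this $\le\sigma$ and thus certify $x'$ as a YES instance. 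You invoke Example~\ref{ex:regularrho} only for bookkeeping on $k$ and the field size; its essential role is to manufacture the YES witness out of the restriction's short seed.
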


\begin{proof} 
We choose $d>0$ in the end of the proof. Let $\delta,\epsilon,\sigma$ be as stated.
Write $2^\ell=n$ and $\sigma=\sigma(\ell)$.
It suffices to rule out probabilistic formulas of size at most
$$s:=n^{2\epsilon-6\delta}.$$

We want to choose $p,t,k$ in Lemma~\ref{lem:hs2} to ensure that $L(F\res\b{\rho})$ is probably constant. Let~$c$ be the lemma's constant, and set 
$$
k:=\ceil{\sigma^{1/d}},\ t:=\ceil{\frac{2d\log s}{\log \sigma}}.
$$ 
and 
for $p$ choose a negative power of 2 such that
\begin{equation*}\label{eq:p}
(c^{1/2}s^{1/(2t)})^{-1}/2\le p\le (c^{1/2}s^{1/(2t)})^{-1}.
\end{equation*}
This ensures $c^tp^{2t}\le 1/s$. As $p\le 1/(2c)^{1/2}$ and $1/p^2\le 4cs^{1/t}\le 4c\sigma^{1/(2d)} \le k$ for large enough~$n$, the lemma applies and gives $\mathbb E[L(F\res\b{\rho})]\le 1+2c$. 
Thus, for a suitable constant  $s_0\in\N$, 
 \begin{eqnarray}\label{eq:L}
  &&\Pr[L(F\res\b{\rho})\ge s_0]<1/2.
\end{eqnarray}
 Given $\b{\rho}$ and an independent $\b{z}$ uniform  in $\{0,1\}^n$ define the random variable $\b{x}$ in $\{0,1\}^n$ to have $i$-th bit equal to $\b{\rho}(i)$ if $\b{\rho}(i)\neq *$, and otherwise equal to the $i$-th bit of $\b{z}$.

Let $E$ be the event that $\b{x}$
 is  {\em not} a NO instance of $n^{-\epsilon}$-$\MCSP[\sigma]$, and $\bar E$ its complement.

\medskip

\noindent{\em Claim:} $\Pr[E]\le o(1)$.

\medskip

\noindent{\em Proof of the claim:} Set $\b{S}:=\b{\rho}^{-1}(*)$. Since $\b{\rho}$ is $(p^t,k)$-regular, $|\b{S}|$ has expectation $p^tn$. Using $(4c)^{t/2}\le s^{o(1)}$,
\begin{equation}\label{eq:pn}
p^tn\ge n/((4c)^{t/2}s^{1/2})\ge n\cdot s^{-1/2-o(1)}\ge n^{1-\epsilon+3\delta-o(1)}\ge 4n^{1-\epsilon+2\delta}.
\end{equation}
The variance of $|\b{S}|$ is $\le \mathbb E[|\b{S}|]$ by $k\ge 2$-wise independence. 
Let $E_0$ be the event that $|\b{S}|\ge  2n^{1-\epsilon+2\delta}$, and $\bar E_0$ be its complement. By Chebychev,  
$\Pr[\bar E_0]\le 4/ \mathbb E[ | \b{S} | ] \le o(1).$
Thus, it suffices to show $\Pr[E\mid E_0]\le o(1)$. 

Let $S$ range over realizations of $\b{S}$
with $|S|\ge 2n^{1-\epsilon+2\delta}$, and let $y$ range over YES instances. Then
 $E\subseteq \bigcup_y E_y$ where  $E_y$ is the event that
$d_H(\b{x}_{\b{S}},y_{\b{S}})< n^{1-\epsilon}$. Since there are at most $2^{n^{o(1)}}$ many YES instances,
   $\Pr[E\mid E_0]\le  2^{n^{o(1)}}\cdot \max_y\Pr[E_y\mid E_0]$. 
 Hence, it suffices to show $\Pr[E_y\mid E_0]\le 2^{-n^{1-\epsilon+\delta}}$ for all $y$.
 Since $\Pr[E_y\mid E_0]\le \max_S\Pr[E_y\mid \b{S}=S]$ it suffices to show
$$ 
\Pr[E_y\mid \b{S}=S]\le 2^{-n^{1-\epsilon+\delta}},
$$
  for all $y,S$. To see this, note $\Pr[E_y\mid\b{S}=S]=\Pr[d_H(\b{z}_{S},y_{S})< n^{1-\epsilon}]$ (by independence of $\b{z}$ and~$\b{S}$) and
$d_H(\b{z}_{S},y_{S})$ is the sum of $|S|$ many independent indicators each with expectation~$1/2$. 
By Chernoff, $\Pr[d_H(\b{z}_{S},y_{S})< n^{1-\epsilon}]\le 2^{-\Omega(|S|)}\le2^{-n^{1-\epsilon+\delta}}$. \hfill$\dashv$\medskip

For contradiction, assume $\b{F}$ is a size $\le s$ probabilistic formula for $n^{-\epsilon}$-$\MCSP[\sigma]$. Then, letting $x$ range over NO instances,
$$\textstyle
\Pr[\b{F}(\b{x})=1]\le \Pr[ E]+\sum_x\Pr[\b{F}(x)=1, \b{x}=x]\le o(1) + 1/4\Pr[\bar E]<1/2.
$$

We thus find realizations $F,\rho,z,x$ of $\b{F},\b{\rho},\b{z},\b{x}$ such that $F(x)=0$ and 
and $L(F\res\rho)< s_0$ (by~\eqref{eq:L}). 
We get a contradiction by finding a YES instance $x'$ rejected by $F$.

Let $x''$ be $\rho$ with $*$ replaced by $0$. Then define $x'$ from $x''$ by flipping $0$s to 1s in positions corresponding to variables in $F\res\rho$ that are 1 in $z$. Then $F(x')=(F\res\rho)(x)=F(x)=0$.
Note that $< s_0$ many positions are flipped.
We are left to show that $x'$ is computed by a circuit of size $\le \sigma$. 

First consider $x''$. We ask for a small circuit  with~2 output bits that computes $\rho$ in the encoding $01,11,00$ of $0,1,*$. Then, taking the conjunction of the output
bits gives a small circuit for $x''$.
To get such a circuit we choose $\b{\rho}$ concretely as the composition of $t$ many $(p,k)$-regular $\tilde{\b{\rho}}$ according  to Example~\ref{ex:regularrho} -- this can be done because  $p=2^{-r}$ for some $r\in\N$ with $2^{r+1}\le n$; indeed, 
$2^{r+1}=2/p\le  4 c^{1/2}\sigma^{1/(4d)} \le k\le  n$ for large enough $n$.

We claim there is a circuit of size polynomial in $k\ell$.  Since $t< \ell$, it suffices to show each of the corresponding realizations $\tilde{\rho}$ is computed  by a small circuit.
 Each $\tilde{\rho}$ is determined by some $j\in \F_{2^{r+1}}^k$, 
 and  $i\mapsto \tilde{\rho}(i)$ is computed 
 (given $k,2^{r+1}\le k,n=2^\ell,j$)    
 in time  polynomial in $k\ell$.   Choose $e\in\N$ such that $x''$ is computed by a circuit $C''$ of size $\le (k\ell)^{e}$. 
 
To get a circuit $C'$ for $x'$ we change $< s_0$ values of $C''$. By a look-up table, $C'$ has size $\le  (k\ell)^{e}+10s_0\ell$. As $k,\ell\le \ceil{\sigma^{1/d}}$, we can choose $d\in\N$ such that~$C'$ has size $\le \sigma$.
\end{proof}

\begin{remark} Given a real $c>0$ we can choose $d$ sufficiently large, so that the YES instance~$x'$ constructed has circuit complexity $\sigma^{1/c}$. Hence, the above holds for the gap / approximation version of $\MCSP$ whose YES instances are strings computed by circuits of size  $\le \sigma^{1/c}$ and whose NO instances are strings  that cannot be $(1-n^{-\epsilon})$-approximated by circuits of size~$\le \sigma$.
\end{remark}

\paragraph{Discussion} \cite[Theorem~1.5]{sharp} proves a slight strengthening of the special case 
$\epsilon=1$.
%
As mentioned in the introduction, the proof method in \cite{sharp} is much more complicated and deserves independent interest. It can handle two-sided error and  for $\sigma(\ell)=2^{\gamma\ell}$  achieves even slightly superquadratic lower bounds~$n^{2+\gamma/2}$ \cite[Theorem~1.6]{sharp} (based on \cite{cklm}). This outperforms the simpler method here. We do not know whether it extends to  $\epsilon<1$.

Concerning lower bounds for the approximation version,  \cite[Theorem~4.7]{os} extended Theorem~\ref{thm:hs} to 
$\epsilon(n)$-$\MCSP[2^{\sqrt{\ell}}]$ but only for  tiny $\epsilon(n)\le n^{-1+\Omega(1/\sqrt{\log n})}$. 
\cite{ops}~states that ``existing lower bound methods
are more suitable for proving lower bounds'' \cite[p.6]{ops} for the gap version as opposed to the approximation version of $\MCSP[\sigma]$. We doubt this, given Theorem~\ref{thm:lower} and its proof. Maybe this is an interesting insight.

\end{document}